\newcommand{\cost}{\ensuremath\textrm{cost}}
\begin{document}
\title{Upper Dominating Set: Tight Algorithms for Pathwidth and Sub-Exponential Approximation}

\titlerunning{Upper Dominating Set: Tight Algorithms}
%
%
\author{Louis Dublois\inst{1} 
\and
Michael Lampis\inst{1}
\and 
Vangelis Th. Paschos\inst{1}}

%
\authorrunning{L. Dublois, M. Lampis, V. Th. Paschos}
%
\institute{Université Paris-Dauphine, PSL University, CNRS, LAMSADE, Paris, France
\email{\{louis.dublois,michail.lampis,paschos\}@lamsade.dauphine.fr}}
%
\maketitle              
\begin{abstract}
An upper dominating set is a minimal dominating set in a graph. In the \textsc{Upper Dominating Set} problem, the goal is to find an upper dominating set of maximum size. We study the complexity of parameterized algorithms for \textsc{Upper Dominating Set}, as well as its sub-exponential approximation. First, we prove that, under ETH, \textsc{$k$-Upper Dominating Set} cannot be solved in time $O(n^{o(k)})$ (improving on $O(n^{o(\sqrt{k})})$), and in the same time we show under the same complexity assumption that for any constant ratio $r$ and any $\varepsilon > 0$, there is no $r$-approximation algorithm running in time $O(n^{k^{1-\varepsilon}})$. Then, we settle the problem's complexity parameterized by pathwidth by giving an algorithm running in time $O^*(6^{pw})$ (improving the current best $O^*(7^{pw})$), and a lower bound showing that our algorithm is the best we can get under the SETH. Furthermore, we obtain a simple sub-exponential approximation algorithm for this problem: an algorithm that produces an $r$-approximation in time $n^{O(n/r)}$, for any desired approximation ratio $r < n$. We finally show that this time-approximation trade-off is tight, up to an arbitrarily small constant in the second exponent: under the randomized ETH, and for any ratio $r > 1$ and $\varepsilon > 0$, no algorithm can output an $r$-approximation in time $n^{(n/r)^{1-\varepsilon}}$. Hence, we completely characterize the approximability of the problem in sub-exponential time. 
\keywords{FPT Algorithms \and Sub-Exponential Approximation \and Upper Domination}
\end{abstract}

%
%
%
\section{Introduction}\label{section:introduction}

In a graph $G = (V,E)$, a set $D \subseteq V$ is called a \textit{dominating set} if all vertices of $V$ are dominated by $D$, that is for every $u \in V$ either $u$ belongs to $D$ or $u$ is a neighbor of some vertex in $D$. The well-known \textsc{Dominating Set} problem is studied with a minimization objective: given a graph, we are interested in finding the smallest dominating set. In this paper, we consider \textit{upper dominating sets}, that is dominating sets that are minimal, where a dominating set $D$ is minimal if no proper subset of it is a dominating set, that is if it does not contain any redundant vertex. We study the problem of finding an upper dominating set of maximum size. 

This problem is called \textsc{Upper Dominating Set}, and is the Max-Min version of the \textsc{Dominating Set} problem. We call \textsc{Upper Dominating Set} the considered optimization problem and \textsc{$k$-Upper Dominating Set} the associated decision problem. 

Studying Max-Min and Min-Max versions of some famous optimization problems is not a new idea, and it has recently attracted some interest in the literature: \textsc{Minimum Maximal Independent Set} \cite{BourgeoisCEP13, Halldorsson93a, HurinkN08} (also known as \textsc{Minimum Independent Dominating Set}), \textsc{Maximum Minimal Vertex Cover}  \cite{BoriaCP13, Zehavi15}, \textsc{Maximum Minimal Separator} \cite{HanakaBZO19}, \textsc{Maximum Minimal Cut} \cite{EtoHK019}, \textsc{Minimum Maximal Knapsack} \cite{ArkinBMS03, FuriniLS17, GourvesMP13} (also known as \textsc{Lazy Bureaucrat Problem}), \textsc{Maximum Minimal Feedback Vertex Set} \cite{DubloisHKLM20}. In fact, the original motivation for studying these problems was to analyze the performance of naive heuristics compared to the natural Max and Min versions, but these Max-Min and Min-Max problems have gradually revealed some surprising combinatorial structures, which makes them as interesting as their natural Max and Min versions. The \textsc{Upper Dominating Set} problem can be seen as a member of this framework, and studying it within this framework is one of our motivation. 

This problem is also one of the six problems of the well-known \textit{domination chain} (see \cite{Haynes0093827, bazgan2019domination}) and is somewhat one which has fewer results, compared to the famous \textsc{Dominating Set} and \textsc{Independent Set} problems. Increasing our understanding of the \textsc{Upper Dominating Set} problem compared to these two famous problems is another motivation. 

\textsc{Upper Dominating Set} was first considered in an algorithmic point of view by Cheston et al. \cite{ChestonFHJ90}, where they showed that the problem is NP-hard. In the more extensive paper considering this problem, Bazgan et al. \cite{BazganBCFJKLLMP18} studied approximability, and classical and parameterized complexity of the \textsc{Upper Dominating Set} problem. In the polynomial approximation paradigm, they proved that the problem does not admit an $n^{1-\varepsilon}$-approximation for any $\varepsilon > 0$, unless P=NP, making the problem as hard as \textsc{Independent Set}, whereas there exists a greedy $\ln n$-approximation algorithm for the Min version \textsc{Dominating Set}.

Considering the parameterized complexity, they proved that the problem is as hard as the \textsc{$k$-Independent Set} problem: \textsc{$k$-Upper Dominating Set} is W[1]-hard parameterized by the standard parameter $k$. Nonetheless, in their reduction, there is an inherent quadratic blow-up in the size of the solution $k$, so they essentially proved that there is no algorithm solving \textsc{$k$-Upper Dominating Set} in time $O(n
^{o(\sqrt{k})})$. They also gave FPT algorithms parameterized by the pathwidth $pw$ and the treewidth $tw$ of the graph, in time $O^*(7^{pw})$\footnote{$O^*$ notation suppresses polynomial factors in the input size.} and $O^*(10^{tw})$, respectively. 

\subsubsection*{Our results:} The state of the art summarized above motivates two basic questions: first, can we close the gap between the lower and upper bounds of the complexity of the problem parameterized by pathwidth ; second, since the polynomial approximation is essentially settled, can we design sub-exponential approximation algorithms which can reach any approximation ratio $r < n$ ? We answer these questions and along the way we give stronger FPT hardness results. In fact, we prove the following:

(i) In Section \ref{section:fpthardness}, we show the following: under ETH, there is no algorithm solving \textsc{$k$-Upper Dominating Set} in time $O(n^{o(k)})$ ; and under the same complexity assumption, for any ratio $r$ and any $\varepsilon > 0$, there is no algorithm for this problem that outputs an $r$-approximation in time $O(n^{k^{1-\varepsilon}})$. 

(ii) In Section \ref{section:pathwdith}, we give a dynamic programming algorithm parameterized by pathwidth that solves \textsc{Upper Dominating Set} in time $O^*(6^{pw})$. Surprisingly, this result is obtained by slightly modifying the algorithm of Bazgan et al. \cite{BazganBCFJKLLMP18}. We then prove the following: under SETH, and for any $\varepsilon > 0$, \textsc{Upper Dominating Set} cannot be solved in time $O^*((6-\varepsilon)^{pw})$. This is our main result, and it shows that our algorithm for pathwidth is optimal. 

(iii) In Section \ref{section:subexponential}, we give a simple time-approximation trade-off: for any ratio $r < n$, there exists an algorithm for \textsc{Upper Dominating Set} that ouputs an $r$-approximation in time $n^{O(n/r)}$. We also give a matching lower bound: under the randomized ETH, for any ratio $r > 1$ and any $\varepsilon > 0$, there is no algorithm that outputs an $r$-approximation running in time $n^{(n/r)^{1-\varepsilon}}$.

\section{Preliminaries}\label{section:preliminaries}

We use standard graph-theoretic notation and we assume familiarity with the basics of parameterized complexity (e.g. pathwidth, the SETH and FPT algorithms), as given in \cite{CyganFKLMPPS15}. Let $G = (V,E)$ be a graph with $|V| = n$ vertices and $|E| = m$ edges. For a vertex $u \in V$, the set $N(u)$ denotes the set of neighbors of $u$, $d(u) = |N(u)|$, and $N[u]$ the closed neighborhood of $u$, i.e. $N[u] = N(u) \cup \{ u \}$. For a subset $U \subseteq V$ and a vertex $u \in V$, we note $N_U(u) = N(u) \cap U$. Furthermore, for $U \subseteq V$, we note $N(U) = \bigcup_{u \in U} N(u)$. For an edge set $E' \subseteq E$, we use $V(E')$ to denote the set of its endpoints. For $V' \subseteq V$, we note $G[V']$ the subgraph of $G$ induced by $V'$. 

An \textit{upper dominating set} $D \subseteq V$ of a graph $G(V,E)$ is a set of vertices that dominates all vertices of $G$, and which is minimal. Note that $D$ is minimal if we have the following: for every vertex $u \in D$, either $u$ has a private neighbor, that is a neighbor that is dominated only by $u$, or $u$ is its own private vertex, that is $u$ is only dominated by itself. We note an upper dominating set $D = S \cup I$, where $S$ is the set of vertices of $D$ which have at least one private neighbor, and $I$ is the set of vertices of $D$ which forms an independent set, that is the set of vertices which are their own private vertices. 

Note that a maximal independent set $I$ (also known as an independent dominating set) is an upper dominating set since it is a set of vertices which dominates the whole graph and such that every vertex $u \in I$ is its own private vertex.

\section{FPT and FPT-approximation Hardness}\label{section:fpthardness}

In this section, we present two hardness results for the \textsc{$k$-Upper Dominating Set} problem in the parameterized paradigm: we prove first that the considered problem cannot be solved in time $O(n^{o(k)})$ under the ETH ; and we prove then under the same complexity assumption that for any constant approximation ratio $0 < r < 1$ and any $\varepsilon > 0$, there is no FPT algorithm giving an $r$-approximation for the \textsc{$k$-Upper Dominating Set} problem running in time $O(n^{k^{1-\varepsilon}})$. 


Note that \textsc{$k$-Upper Dominating Set} being W[1]-hard was already proved by Bazgan et al. \cite{BazganBCFJKLLMP18}. To get this result, they made a reduction from the \textsc{$k$-Multicolored Clique} problem. Nonetheless, in this reduction, the size of the solution of the \textsc{$k$-Upper Dominating Set} problem was quadratic compared to the size of the solution of the \textsc{$k$-Multicolored Clique} problem. Thus, they proved essentially the next result: \textsc{$k$-Upper Dominating Set} problem cannot be solved in time $O(n^{o(\sqrt{k})})$.

To obtain our desired negative results, we will make a reduction from the \textsc{$k$-Independent Set} problem to our problem. So recall that we have the following hardness results for the \textsc{$k$-Independent Set} problem:


\begin{lemma}[Theorem 5.5 from \cite{ChenHKX06}]\label{lem:w1hardis}
Under ETH, \textsc{$k$-Independent Set} cannot be solved in time $O(n^{o(k)})$. 
\end{lemma}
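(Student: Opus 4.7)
The plan is to prove the lemma via a reduction from \textsc{3-SAT}, combined with the sparsification lemma, following the strategy of Chen, Huang, Kanj and Xia. Under ETH, \textsc{3-SAT} on $n$ variables has no $2^{o(n)}$-time algorithm, and by the sparsification lemma we may further assume that the number of clauses is $m = O(n)$.

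First I would describe a \emph{grouping} reduction from such a 3-SAT instance $\varphi$ to \textsc{$k$-Clique} for a target parameter $k$. Partition the $n$ variables into $k$ groups of size $n/k$, and for each group enumerate all $2^{n/k}$ partial assignments. Each partial assignment that does not already falsify a clause confined to its group becomes a vertex. Two vertices from different groups are joined by an edge iff the union of their partial assignments is consistent and does not falsify any clause whose variables lie in the union of the two groups. A short verification shows that $\varphi$ is satisfiable iff the constructed graph contains a clique of size $k$; taking the complement gives the equivalent statement for \textsc{$k$-Independent Set}.

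The resulting graph has $N \le k \cdot 2^{n/k}$ vertices and can be built in time polynomial in $N$. Suppose, for contradiction, that \textsc{$k$-Independent Set} could be solved in time $O(N^{f(k)})$ for some $f(k) = o(k)$. Substituting yields a running time of
\[
O\bigl((k \cdot 2^{n/k})^{f(k)}\bigr) \;=\; 2^{O\bigl(n f(k)/k + f(k) \log k\bigr)}.
\]
Choosing $k$ as a slowly growing unbounded function of $n$ makes both $f(k)/k$ and $f(k)(\log k)/n$ tend to $0$, yielding a $2^{o(n)}$-time algorithm for \textsc{3-SAT} and contradicting ETH.

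The main obstacle is the parameter-choice argument: $k$ must be selected as a function of $n$ so that the bound $f(k)/k \to 0$ is exploited while the overhead $f(k)\log k$ remains $o(n)$. This is a standard but delicate trick in ETH-based lower bounds and needs to be spelled out carefully (essentially, one uses the fact that for any $f$ with $f(k)=o(k)$ there exists an unbounded $k=k(n)$ for which the composition behaves as required). The remaining steps---correctness of the grouping reduction and the complement passage from cliques to independent sets---are routine.
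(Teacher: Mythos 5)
The paper itself gives no proof of this lemma; it is cited verbatim from Chen, Huang, Kanj, and Xia. Your high-level plan (sparsification, then a grouping reduction to $k$-\textsc{Clique}, then a parameter-choice argument, then complementation) is the right one and matches the standard exposition, but the reduction as you describe it has a concrete gap. In your construction a vertex is a group-local partial assignment that satisfies the clauses internal to its own group, and an edge between two vertices certifies that no clause confined to the union of their two groups is falsified. The trouble is that a $3$-SAT clause whose three variables land in three \emph{distinct} groups is never tested by any vertex or edge condition. Consequently a $k$-clique in the product graph need not correspond to a satisfying assignment, and the implication ``$k$-clique exists $\Rightarrow \varphi$ satisfiable'' fails.

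The canonical fix, used by Chen et al.\ and in textbook treatments, is to reduce from a problem whose constraints are purely \emph{binary} --- most commonly $3$-\textsc{Coloring}, to which the Sparsification Lemma equally applies. Since every constraint is an edge and therefore involves exactly two vertices, every constraint lies within the union of at most two groups, and the edge condition of the product graph checks all of them. With that substitution the rest of your argument (the bound $N \le k\cdot 2^{O(n/k)}$, the passage to independent sets by complementation, and the exponent calculation) goes through. Incidentally, the parameter-choice step is less delicate than you suggest if, rather than letting $k$ grow with $n$, you fix for each $\varepsilon>0$ a constant $k_\varepsilon$ with $f(k_\varepsilon)/k_\varepsilon < \varepsilon$; then $k_\varepsilon^{f(k_\varepsilon)}$ is an absolute constant and the resulting $2^{O(\varepsilon n)}$ time bound already contradicts ETH.
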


\begin{lemma}[Corollary 2 from \cite{BonnetE0P13}]\label{lem:fptapproxhardis}
Under ETH, for any constant $r > 0$ and any $\varepsilon > 0$, there is no $r$-approximation algorithm for \textsc{$k$-Independent Set} running in time $O(n^{k^{1-\varepsilon}})$.
\end{lemma}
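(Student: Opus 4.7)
The plan is to argue by contradiction. Assume there exist constants $r > 0$, $\varepsilon > 0$ and an algorithm $A$ producing an $r$-approximate solution of \textsc{$k$-Independent Set} in time $O(n^{k^{1-\varepsilon}})$. Using $A$, I will aim to solve \textsc{3-SAT} in time $2^{o(v)}$ on $v$ variables, contradicting ETH (equivalently, contradicting the $n^{o(k)}$ lower bound of Lemma~\ref{lem:w1hardis}).

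The main step is to produce, from a \textsc{3-SAT} instance on $v$ variables, a gap Independent Set instance $(G',K)$ with $|V(G')|$ polynomial in $v$, target $K = \widetilde{O}(v)$, and gap exactly $r$: YES-instances of \textsc{3-SAT} yield $\alpha(G') \geq K$, NO-instances yield $\alpha(G') < K/r$. By the Sparsification Lemma, I can assume the \textsc{3-SAT} input has $v$ variables and $O(v)$ clauses. I then apply a PCP of quasi-linear length (e.g.\ Dinur, Ben-Sasson--Sudan, or Moshkovitz--Raz) to obtain a near-linear-size gap-\textsc{3-SAT} instance with constant gap; the FGLSS reduction converts it into a gap Independent Set instance with $\mathrm{poly}(v)$ vertices and target $\widetilde{O}(v)$; finally, if needed, I boost the constant PCP gap up to the required $r$ by a bounded number of disjunctive graph products, which multiply gaps while inducing only a polynomial blow-up in both the number of vertices and the target.

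Given such $(G',K)$, running $A$ on it distinguishes YES from NO in time $|V(G')|^{K^{1-\varepsilon}} = \mathrm{poly}(v)^{\widetilde{O}(v^{1-\varepsilon})} = 2^{\widetilde{O}(v^{1-\varepsilon})} = 2^{o(v)}$, which is the sought contradiction with ETH. The hard part is the quasi-linear PCP: a standard polynomial-size PCP would only deliver $K = \mathrm{poly}(v)$, making $K^{1-\varepsilon}$ too large to yield $2^{o(v)}$. The same obstacle rules out the naive self-improvement that plugs disjunctive powers $G^{*t}$ of $G$ directly into $A$, since there the parameter jumps from $k$ to $k^t$ and the tight connection with the $n^{o(k)}$ lower bound is destroyed. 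Controlling the parameter blow-up to be essentially linear in $k$ is precisely what forces the use of near-linear PCPs, and this is in essence the argument of~\cite{BonnetE0P13}.
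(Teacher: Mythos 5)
The paper does not prove this lemma — it imports it verbatim as Corollary~2 of Bonnet et al.~\cite{BonnetE0P13} and only summarizes their approach in prose (a gap-producing reduction from \textsc{3-SAT} to \textsc{$k$-Independent Set} in which YES-instances give $\alpha(G)=k$ and NO-instances give $\alpha(G)\le rk$). Your high-level framework matches that summary: sparsify, apply a quasi-linear PCP, run FGLSS, and amplify the gap. You also correctly identify the crux: the target $K$ of the gap-IS instance must stay $\widetilde{O}(v)$ so that $|V(G')|^{K^{1-\varepsilon}}$ is $2^{o(v)}$.

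The genuine gap is in the amplification step. You propose boosting the constant PCP gap to the desired ratio $r$ by taking a bounded number $t$ of disjunctive graph products. Under the disjunctive (OR) product, $\alpha(G^{*t})=\alpha(G)^t$, so the target jumps from $K=\widetilde{O}(v)$ to $K^t=\widetilde{O}(v^t)$. The running time of $A$ then becomes $|V(G')|^{(K^t)^{1-\varepsilon}} = 2^{\widetilde{O}(v^{t(1-\varepsilon)})}$, which is $2^{o(v)}$ only when $t(1-\varepsilon)<1$. Since $t$ is fixed by $r$ (roughly $\log r$ over the base PCP gap) and the lemma must hold for \emph{every} $\varepsilon>0$, the argument breaks whenever $\varepsilon < 1-1/t$. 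In other words, graph-level OR-products destroy exactly the near-linearity of the parameter that you flagged as essential, for the same reason that you yourself reject naive self-improvement. The fix is to amplify \emph{inside the PCP} with a derandomized repetition (e.g.\ an expander-walk), which multiplies the soundness while adding only $O(\log r)$ random bits; this keeps both the FGLSS vertex count and the target $2^R$ within a constant factor, so $K$ remains $\widetilde{O}(v)$ for any constant $r$. Alternatively, a derandomized graph product in the style of Alon--Feige--Wigderson--Zuckerman can be used, but plain sequential repetition of the PCP has the same defect as the disjunctive product, since it also multiplies $R$ by $t$. As written, your proof does not cover the full range of constants $r,\varepsilon$ in the statement.
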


We will obtain similar results for the \textsc{$k$-Upper Dominating Set} by doing a reduction from \textsc{$k$-Independent Set}. This reduction will linearly increase the size of the solutions between the two problems, so these two hardness results for the latter problem will hold for the former problem. 

Before we proceed further in the description of our reduction, note that we will use a variant of the \textsc{$k$-Independent Set} problem. In this variant, the graph $G$ contains $k$ cliques which are connected to each other, and if a solution of size $k$ exists, then this solution takes exactly one vertex per clique. Note that the Lemmas \ref{lem:w1hardis} and \ref{lem:fptapproxhardis} hold on this particular instance, since this is a case where the problem remains hard to solve in FPT time and to approximate in FPT time. So we will use this variant. 

Let us now present our reduction. We are given a \textsc{$k$-Independent Set} instance $G$ with $n$ vertices and $m$ edges, where the $n$ vertices are partitioned in $k$ distinct cliques $V_1, \ldots, V_k$ connected to each other. We define the following number: $A = 5$. We set our budget to be $k' = Ak$. 

We construct our instance $G'$ of \textsc{$k'$-Upper Dominating Set} as follows:

\begin{enumerate}
    \item For any vertex $u \in V(G)$, create an independent set $Z_u$ of size $A$. 
    \item For any edge $(u,v) \in E(G)$, add all edges between the vertices of $Z_u$ and the vertices of $Z_v$. 
    \item For any $i \in \{ 1, \ldots, n \}$, let $W_i$ be the \textit{group} associated to the clique $V_i$, which contains all vertices of all independent sets $Z_u$ such that the vertex $u$ belongs to the clique $V_i$. For any $i \in \{ 1, \ldots, k \}$, create a vertex $z_i$ connected to all vertices of the group $W_i$. 
\end{enumerate}

Now that we have presented our reduction, we argue that it is correct. Recall that the target size of an optimal solution in $G'$ is $k'$ as defined above. We can prove that, given an independent set $I$ of size at least $k$ in $G$, we can construct an upper dominating set of size at least $Ak$ in $G'$ by taking the $A$ vertices of the independent set $Z_u$ for any vertex $u \in I$.

\begin{lemma}\label{lem:forwardredisuds}
If $G$ has an independent set of size at least $k$, then $G'$ has an upper dominating set of size at least $k'$. 
\end{lemma}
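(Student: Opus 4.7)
The plan is to take the natural candidate $D := \bigcup_{u \in I} Z_u$ and argue that it is an upper dominating set of the required size. Because $V(G)$ is partitioned into the $k$ cliques $V_1,\ldots,V_k$ and $|I|\geq k$, the independent set $I$ must contain exactly one vertex $u_i$ per clique $V_i$. Each $Z_u$ has size $A$, so $|D| = A|I| = Ak = k'$, which settles the size requirement immediately.

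The cleanest way to finish is to show that $D$ is actually a \emph{maximal independent set} of $G'$: by the observation in Section~\ref{section:preliminaries} every maximal independent set is an upper dominating set, so this is enough. For independence I would use two facts from the reduction: each $Z_u$ is independent by construction, and by step~2 of the construction an edge between $Z_{u_i}$ and $Z_{u_j}$ (for $i\neq j$) can exist only if $(u_i,u_j)\in E(G)$, which never happens since $I$ is independent in $G$.

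For maximality (equivalently, the fact that $D$ dominates $V(G')\setminus D$) I would check the two types of vertices outside $D$. Any auxiliary vertex $z_i$ from step~3 is adjacent to all of $W_i$ and in particular to $Z_{u_i}\subseteq D$. Any vertex lying in some $Z_v$ with $v\in V_i\setminus\{u_i\}$ is dominated because $V_i$ is a clique, so $(u_i,v)\in E(G)$, and then the reduction adds a complete bipartite connection between $Z_{u_i}$ and $Z_v$; hence every vertex of $Z_v$ has a neighbor in $D$. Combined with independence, this shows $D$ is a maximal independent set, hence an upper dominating set of size $k'$.

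I do not expect any real obstacle on this ``forward'' direction, which is essentially bookkeeping. The one point worth being explicit about is the role of the variant of \textsc{$k$-Independent Set} in which $V(G)$ is partitioned into $k$ cliques: this is precisely what guarantees that the single representative $u_i\in I\cap V_i$ dominates the whole group $W_i$ through the bipartite blocks of step~2, so that no extra vertex has to be added to $D$ to achieve domination.
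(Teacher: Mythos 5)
Your proposal is correct and takes essentially the same route as the paper: construct $D=\bigcup_{u\in I}Z_u$, observe $|I|=k$ with exactly one representative per clique, verify $|D|=Ak=k'$, show $D$ is an independent set and dominates $G'$ (the paper phrases this as "$D$ is independent, hence every vertex is its own private vertex," while you invoke the preliminaries' observation that a maximal independent set is an upper dominating set, but these are the same argument). No gap.
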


\begin{proof}
Assume $G$ admits an independent set $I$ of size at least $k$. We construct an upper dominating set $D$ of size at least $k'$ in $G'$ as follows: for any vertex $u \in V(G) \cap I$, put the $A$ vertices of the corresponding independent set $Z_u$ in $D$.

Clearly, the size of $D$ is at least $k' = Ak$ since $I$ is of size at least $k$ and every independent set $Z_u$ is of size $A$. 

Consider any $i \in \{ 1, \ldots, k \}$, and observe that $I \cap V_i \leq 1$ since $V_i$ is a clique. But there is $k$ such cliques $V_i$ and $|I| \geq k$, so necessarily $|I| = k$ and $|I \cap V_i| = 1$ for all $i \in \{ 1, \ldots, k \}$. 

Consider again any $i \in \{ 1, \ldots, k \}$ and the corresponding vertex $u$ which belongs to $I \cap V_i$. The vertices of the independent set $Z_u$ have been put in $D$. By the construction, the vertices of $Z_u$ are connected to all remaining vertices of $W_i$ and to the vertex $z_i$. So the vertices of $W_i \cup \{ z_i \}$ are dominated. This is true for all $i \in \{ 1, \ldots, k \}$, so the graph $G'$ is dominated by $D$. 

Moreover, since $I$ is an independent set, and by the construction, it follows that, for any two vertices $u, u' \in I$, there is no edge between the vertices of $Z_u$ and the vertices of $Z_{u'}$. And since the sets $Z_u$ are independent sets, it follows that $D$ is an independent set of $G'$, which means that all vertices of $D$ are their own private vertices.
\qed
\end{proof}

The idea of the following proof is the following: if an upper dominating set in $G'$ of size at least $k'$ has not the form described in Lemma \ref{lem:forwardredisuds}, then it cannot have size at least $k'$, enabling us to construct an independent set of size at least $k$ in $G$ from an upper dominating set which has the desired form. 

\begin{lemma}\label{lem:backwardredisuds}
If $G'$ has an upper dominating set of size at least $k'$, then $G$ has an independent set of size at least $k$. 
\end{lemma}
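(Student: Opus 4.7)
The plan is to set $I(D) := \{u \in V(G) : Z_u \subseteq D\}$ and show that $I(D)$ is an independent set in $G$ of size exactly $k$. Independence is the easier direction: if $u, v \in I(D)$ were adjacent in $G$, the biclique between $Z_u$ and $Z_v$ in $G'$ would prevent every $x \in Z_u$ both from having a private neighbor and from being its own private vertex (any candidate inside the relevant group, through some $z_i$, or across groups is dominated by at least two elements of $Z_u \cup Z_v$), contradicting minimality of $D$. Since $V(G)$ is partitioned into $k$ cliques, $|I(D) \cap V_i| \leq 1$ for every $i$, so the real task is to establish $|I(D)| = k$.

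I would then bound $|D_i| := |D \cap (W_i \cup \{z_i\})|$ via a case analysis on whether $z_i \in D$ and on the size of $T_i := \{u \in V_i : Z_u \cap D \neq \emptyset\}$. When $z_i \in D$, minimality forces either $D \cap W_i = \emptyset$ or $D \cap W_i \subseteq Z_u$ for a single $u$ with $|D \cap Z_u| \leq 1$, since $z_i$ blocks private-neighbor candidates within $W_i$ and the bicliques block candidates across groups as soon as $|D \cap Z_u| \geq 2$; in both subcases $|D_i| \leq 2$. When $z_i \notin D$ with $|T_i| \leq 2$, analogous biclique arguments give $|D_i| \leq 5$. The only case where $|D_i|$ can potentially exceed $5$ is $z_i \notin D$ with $|T_i| \geq 3$, where minimality forces $|D \cap Z_u| = 1$ for every $u \in T_i$, so $|D_i| = |T_i|$.

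For this remaining case, I would show that the private neighbor of each $x_u \in Z_u \cap D$ (for $u \in T_i$) must lie in some other group $W_{j(u)} \neq W_i$: any candidate inside $W_i$, or $z_i$ itself, is dominated by at least two elements of $\bigcup_{u' \in T_i} Z_{u'}$. A detailed dominator analysis then shows that each consumed $W_j$ satisfies $D \cap W_j \subseteq Z_v$ and $z_j \notin D$ (hence $|D_j| \leq 4$), and that the consumed groups are pairwise distinct both within a single source $i$ and across different bad sources. Both distinctness claims reduce to the same contradiction: placing two such private neighbors in one $W_j$ either forces $D \cap W_j = \emptyset$ (leaving $z_j$ undominated) or creates a direct unique-dominator violation through the biclique between the two distinct sets $Z_v$. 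A counting argument then yields $|D| \leq T + 4T + 5(k - t - T) = 5k - 5t$, where $t$ is the number of bad groups and $T = \sum_{\textrm{bad } i} |T_i|$; hence $|D| \geq 5k$ forces $t = 0$ and $|D_i| \leq 5$ everywhere.

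Finally, $|D| = 5k$ combined with $|D_i| \leq 5$ forces $|D_i| = 5$ for every $i$. Among the surviving cases this is possible only when $z_i \notin D$ and $Z_u \subseteq D$ for some $u \in V_i$: the alternative $z_i \in D$ with $|D \cap Z_u| = 4$ is infeasible because every vertex in $D \cap Z_u$ would lack a private neighbor (all candidates are dominated either by $z_i$ or by another element of $D \cap Z_u$ via the bicliques). Hence $|I(D) \cap V_i| = 1$ for every $i$, producing the required independent set of size $k$ in $G$. The main obstacle throughout is the careful bookkeeping of dominator sets across three groups at once, which is needed to establish the disjointness of consumed groups in the bad case.
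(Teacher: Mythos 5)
Your proof is correct and follows essentially the same strategy as the paper's: classify each group $W_i$ by how many of its $Z_u$-blocks meet $D$, show that vertices of $D$ inside a ``bad'' group (your $z_i\notin D$, $|T_i|\ge 3$; the paper's ``very bad'') must find private neighbors in pairwise-distinct other groups whose local cost then collapses, and run a counting argument against the budget $k'=5k$ to force every group to contribute exactly one full block $Z_u\subseteq D$, from which independence of $\{u:Z_u\subseteq D\}$ in $G$ is immediate. The one genuine difference is organizational and slightly to your advantage: you condition explicitly on whether $z_i\in D$, which lets you bound the case $|T_i|=2$ (the paper's ``bad'' group) by the generic $|D_i|\le 5$ and dispose of it only at the very end via $|D_i|=5\Rightarrow|T_i|=1$, whereas the paper carries both a ``bad'' and a ``very bad'' count ($b$ and $B$) through its linear inequality $|D|\le Ak+b(2-A)+B(2-A)$. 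Your looser bound of $|D_j|\le 4$ on consumed groups (the paper shows $\le 1$) still makes the counting close, since $T+4T+5(k-t-T)=5k-5t$. One small imprecision worth tightening: in the distinctness argument for consumed groups, the two sub-cases are $v_1=v_2$ (same block, giving a second dominator of a private neighbor through the biclique to $Z_u$) and $v_1\neq v_2$ (two blocks in $W_j$, forcing $D\cap W_j=\emptyset$ and leaving $z_j$ undominated); your phrasing ``through the biclique between the two distinct sets $Z_v$'' swaps the mechanism onto the wrong sub-case, though both contradictions are available and the conclusion stands.
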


\begin{proof}
Assume $G'$ admits an upper dominating set $D$ of size at least $k'$. For any $i \in \{ 1, \ldots, k \}$, we give the following notations:

\begin{itemize}
    \item If there exists at least three vertices $u_1, u_2, u_3$ in $V_i$ such that $Z_{u_j} \cap D \neq \emptyset$ for all $j \in \{ 1, 2, 3 \}$, then we call the group $W_i$ \textit{very bad}. 
    \item If there exists exactly two vertices $u$ and $u'$ in $V_i$ such that $Z_u \cap D \neq \emptyset$ and $Z_{u'} \cap D \neq \emptyset$, then we call the group $W_i$ \textit{bad}. 
    \item Otherwise, that is if there exists at most one vertex $u \in V_i$ such that $Z_u \cap D \neq \emptyset$, then $W_i$ is called \textit{good}. 
\end{itemize}

Our proof will therefore be to consider that there is some bad and very bad groups in $G'$ and we will arrive at a contradiction on the size of $D$, which will prove that the solution $D$ as the form described in Lemma \ref{lem:forwardredisuds}. 

So suppose there exists a bad or very bad group $W_i$. Suppose first that there exists $u \in V_i$ such that $|Z_u \cap D| \geq 2$. Observe that the vertices of $Z_u$ which belong to $D$ have the same neighborhood (since they are in the same independent set $Z_u$), and observe that there exists at least one vertex $u' \in V_i \setminus \{ u \}$ with $Z_{u'} \cap D \neq \emptyset$ (since $W_i$ is bad or very bad). So the vertices of $Z_u$ which belong to $D$ are dominated and share the same neighborhood, which contradicts the fact that $D$ is an upper dominating set. So, for a bad or very bad group $W_i$ and any $u \in V_i$ such that $Z_u \cap D \neq \emptyset$, we have $|Z_u \cap D| = 1$. 

Consider now a bad group $W_i$ and the two vertices $u, u' \in V_i$ such that $Z_u \cap D \neq \emptyset$ and $Z_{u'} \cap D \neq \emptyset$. Let $v = Z_u \cap D$ and $v' = Z_{u'} \cap D$. The two vertices $v$ and $v$' dominate each other since they belong to two distinct independent set $Z_u$ and $Z_{u'}$ in the group $W_i$. Observe now that $z_i$ cannot be in $D$ because otherwise it is dominated and have no private neighbor (since all vertices of $W_i$ are dominated by $v$ or $v'$). Moreover, since $W_i$ is a bad group and $|Z_u \cap D| = 1$ and $|Z_{u'} \cap D| = 1$, we have $|(W_i \cup \{ z_i \}) \cap D| = 2$. 

Consider now a very bad group $W_i$ and three vertices $u_1, u_2, u_3$ in $V_i$ such that $Z_{u_j} \cap D \neq \emptyset$ for all $j \in \{ 1, 2, 3 \}$. Let $v_j = Z_{u_j} \cap D$ (since $|Z_{u_j} \cap D| = 1$). Consider now any $j \in \{ 1, 2, 3 \}$. Observe that $W_i \cup \{ z_i \}$ is dominated by the two vertices $v_{j'}$ and $v_{j''}$, for $j', j'' \in \{ 1, 2, 3 \}$ and $j', j'' \neq j$. Indeed, $(W_i \cup \{ z_i \}) \setminus Z_{u_{j'}}$ is dominated by $v_{j'}$ and $Z_{u_{j'}}$ is dominated by $v_{j''}$. Since $v_j$ is dominated by both $v_{j'}$ and $v_{j''}$, it follows that $v_j$ necessarily have a private neighbor outside $W_i \cup \{ z_i \}$. It is true for any vertex $u \in V_i$ such that $|Z_u \cap D| \neq \emptyset$.

Now consider such a vertex $v \in \{ v_1, v_2, v_3 \}$ and his private neighbor $w$, which let say is in $W_{i'}$, for $i' \in \{ 1, \ldots, k \} \setminus \{ i \}$, and in $Z_{u_w}$, for $u_w \in V_{i'}$. Since $w$ is a private neighbor of $v$, it follows that, for any $u' \in V_{i'} \setminus \{ u_w \}$, no vertex of $Z_{u'}$ is in $D$, because otherwise $w$ would not be a private neighbor of $v$. Moreover, the vertex $z_{i'}$ cannot be in $D$ by the same argument. So necessarily, the group $W_{i'}$ is a good group. 

Now suppose that there exists at least two vertices $w_1, w_2 \in (Z_{u_w} \setminus \{ w \}) \cap D$. Observe that both $w_1$ and $w_2$ are dominated by $v$, since there is all edges between the vertices of $Z_u$ (where $v$ belongs) and the vertices of $Z_{u_w}$. But $w_1$ and $w_2$ have the same neighborhood, which contradicts the fact that $D$ is an upper dominating set. So $|Z_{u_w} \cap D| \leq 1$. 

Consider any $i \in \{ 1, \ldots, k \}$ and observe that the vertex $z_i$ has to be dominated, and its neighborhood is the group $W_i$, so $|(W_i \cup \{ z_i \}) \cap D| \geq 1$. 

Now reconsider the vertex $v$. Since $W_{i'}$ is a good group, since $z_i$ does not belong to $D$, and since $|Z_{u_w} \cap D| \leq 1$, we obtain $|Z_{u_w} \cap D| = 1$. 

Now consider two vertices $v$ and $v'$ belonging to $D$ and which are in the same very bad group or in two distinct very bad groups. And consider their corresponding private neighbors $w$ and $w'$. Clearly, $w$ and $w'$ do not belong to the same independent set $Z_{u_w}$, since $v$ dominates all vertices of this independent set. Moreover, since $|Z_{u_w} \cap D| = 1$ for the independent set $Z_{u_w}$ which contains the vertex $w$, the two vertices $w$ and $w'$ cannot belong to the same good group since the vertex in $Z_{u_w} \cap D$ dominates all remaining vertices of $W_{i'} \cup \{ z_{i'} \}$. So, for any two such vertices $v$ and $v'$ belonging to $D$ (whether they are in the same very bad group or not), their corresponding private neighbors are in distinct good groups. 

But, since in these good groups (which contain a private neighbor $w$) we have $|W_{i'} \cap D| = 1$, we have the following: for any vertex $v \in D$ which belongs to a very bad group, there exists at least one distinct good group $W_{i'}$ where a single vertex is in $D$. 

Now, let $b$ be the number of bad groups and $B$ be the number of vertices in the very bad groups. 

Observe now that, in a good group $W_i$ which does not contain a private neighbor $w$ of a vertex $v$ belonging to a very bad group, we have that at most $A$ vertices of $W_i \cup \{ z_i \}$ are in $D$: since $W_i$ is a good group, there exists at most one vertex $u \in V_i$ such that $Z_u \cap D \neq \emptyset$, and if the $A$ vertices of $Z_u$ are in $D$, then the vertex $z_i$ cannot be in $D$ since it is dominated and all its neighbors are either in $D$ or are dominated. 

So the total number of vertices in $D$ is upper-bounded by $2b + 2B + A(k - b - B) = Ak + b(2-A) + B(2-A)$. Indeed, we have the following: in a bad group, exactly two vertices are in $D$ and they can have their private neighbors in the group ; for every vertex in $D$ in a very bad group, it has one private neighbor outside the group in a good group and a single vertex is taken in $D$ in the corresponding good group ; it remains $k - b - B$ good groups in which at most $A$ vertices are in $D$, since there is $b$ bad groups, and since the private neighbor of each vertex in a very bad group is in a distinct good group where exactly one vertex is in $D$. 

But, since $A = 5 > 2$, the set $D$ has at least $k'= Ak$ vertices if and only if $b = B = 0$. So there exists no bad or very bad group in $G'$ associated to $D$. 

Since there is at most $A$ vertices in $D$ for a single good group, since there is $k$ such groups, and since $|D| \geq Ak$, it follows that $|W_i \cap D| = A$ for all $i \in \{ 1, \ldots, k \}$. 

We now construct a solution $I$ of the instance $G$ in a natural way: for any $i \in \{ 1, \ldots, k \}$, there exists a unique $u \in V_i$ such that $W_i \cap D = Z_u$, so take $u$ in the solution $I$. 

For any $u \in I$, since $Z_u \subseteq D$ and since all vertices of $Z_u$ have the same neighborhood, it follows that the vertices of $Z_u$ are their own private vertices. So, for any two $u, u' \in I$, there is no edges between the two independent sets $Z_u$ and $Z_{u'}$. So, by the construction, the set $I$ is an independent set in $G$, of size at least $k$. \qed 
\end{proof}

Now that we have proved the correctness of our reduction and since the blow-up of the reduction is linear in both the size of the instance and the size of the solution, we can now present one of the main results of this section:

\begin{theorem}\label{th:w1upperdom}
Under ETH, \textsc{$k$-Upper Dominating Set} cannot be solved in time $O(n^{o(k)})$.
\end{theorem}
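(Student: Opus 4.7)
The plan is to derive the theorem as an immediate consequence of the reduction just described, combined with Lemma~\ref{lem:w1hardis}. The correctness of the reduction is already established by Lemma~\ref{lem:forwardredisuds} (forward direction) and Lemma~\ref{lem:backwardredisuds} (backward direction): $G$ admits an independent set of size at least $k$ if and only if $G'$ admits an upper dominating set of size at least $k' = Ak = 5k$. So the only remaining work is to check the size-blowup of the reduction and to package the argument as a standard contradiction against ETH.

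First I would record the two quantitative facts about the construction. The new graph $G'$ contains the $nA = 5n$ vertices of the independent sets $Z_u$ together with $k$ extra vertices $z_1,\dots,z_k$, so $|V(G')| = 5n + k = O(n)$, and the construction is clearly polynomial in $n$. At the same time, the target parameter grows linearly, $k' = 5k$, so both the size of the instance and the size of the sought solution scale by a constant factor only. This linear blow-up in the parameter is the whole point of choosing the constant $A$ (any fixed $A \geq 3$ suffices in principle), and it is what distinguishes our reduction from the quadratic one used in \cite{BazganBCFJKLLMP18}.

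Then I would conclude by contradiction. Suppose there is an algorithm solving \textsc{$k$-Upper Dominating Set} in time $O(n^{o(k)})$. Given an instance $(G,k)$ of \textsc{$k$-Independent Set} on $n$ vertices in the clique-partitioned variant covered by Lemma~\ref{lem:w1hardis}, I build $(G',k')$ in polynomial time with $|V(G')| = O(n)$ and $k' = 5k$, and then run the hypothetical algorithm. By Lemmas~\ref{lem:forwardredisuds} and~\ref{lem:backwardredisuds}, the output decides \textsc{$k$-Independent Set} on $G$. The overall running time is
\[
\mathrm{poly}(n) + O\bigl(|V(G')|^{o(k')}\bigr) = O\bigl((5n+k)^{o(5k)}\bigr) = O(n^{o(k)}),
\]
contradicting Lemma~\ref{lem:w1hardis}. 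Since the only step I really have to justify carefully is the clean linearity of the blow-up in both $n$ and $k$, and that is already built into the construction, I do not anticipate any serious obstacle; the main sanity check is simply to make sure the $o(\cdot)$ in the exponent is preserved when we pass from $k'$ back to $k$ (and it is, because $k' = \Theta(k)$ and $|V(G')| = \Theta(n)$).
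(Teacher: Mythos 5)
Your proposal is correct and follows essentially the same approach as the paper: reduce from \textsc{$k$-Independent Set} using the construction already established (with correctness from Lemmas~\ref{lem:forwardredisuds} and~\ref{lem:backwardredisuds}), observe that both the instance size ($|V(G')| = 5n + k$) and the parameter ($k' = 5k$) grow only linearly, and conclude by contradiction against Lemma~\ref{lem:w1hardis}. Your version is slightly more explicit about the size accounting, and your aside that any constant $A \geq 3$ would suffice for this particular theorem (as opposed to the gap-amplification argument, which needs $A$ large) is a correct observation that the paper's proof does not dwell on.
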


\begin{proof}
Consider an instance $(G,k)$ of \textsc{$k$-Independent Set}. Apply our reduction to obtain an instance $(G',k')$ of \textsc{$k'$-Upper Dominating Set}. Thanks to Lemmas \ref{lem:forwardredisuds} and \ref{lem:backwardredisuds}, we know that $G$ has an independent set of size at least $k$ if and only if $G'$ has an upper dominating set of size at least $k' = Ak$.

Now suppose that there exists an algorithm that solves \textsc{$k'$-Upper Dominating Set} in time $O(n^{o(k')})$. With this algorithm and our reduction, we can solve \textsc{$k$-Independent Set} in time $O(n^{o(k')})$, where $k' = Ak = O(k)$, so the total running time of this procedure is $O(n^{o(k)})$, contradicting Lemma \ref{lem:w1hardis} and the ETH. \qed
\end{proof}

From now one and to obtain the FPT-approximation hardness result, we now consider our reduction above with $A$ being sufficiently large. Note that all the properties we have found before still hold since $A$ remains a constant.

Let $0 < r < 1$. To obtain the FPT-approximation hardness result for the \textsc{$k$-Independent Set} problem (see Lemma \ref{lem:fptapproxhardis}), Bonnet et al. \cite{BonnetE0P13} made a gap-amplification reduction from an instance $\phi$ of \textsc{$3$-SAT} to an instance $(G,k)$ of \textsc{$k$-Independent Set} problem. Essentially, this reduction gives the following gap:

\begin{itemize}
    \item YES-instance: If $\phi$ is satifiable, then $\alpha(G) = k$. 
    \item NO-instance: If $\phi$ is not satisfiable, then $\alpha(G) \leq rk$.
\end{itemize}

In this gap, $\alpha(G)$ is the size of a maximum independent set in $G$, and $k$ corresponds in fact to a value which depends on the reduction, but designating it by $k$ ease our purpose. 

To obtain a similar result for the \textsc{$k$-Upper Dominating Set} problem, and by using our reduction above, we have to prove that our reduction keep a gap of value $r$. Thus, we need to prove the following:

\begin{itemize}
    \item YES-instance: If $\phi$ is satisfiable, then $\alpha(G) = k$ and $\Gamma(G') = Ak$. 
    \item NO-instance: If $\phi$ is satisfiable, then $\alpha(G) \leq rk$ and $\Gamma(G') \leq rAk$. 
\end{itemize}

where $\Gamma(G')$ is the size of a maximum upper dominating set in $G'$. 

Note that we have proved the first condition in Lemma \ref{lem:forwardredisuds}, since an independent set of size at least $k$ in $G$ necessarily has size exactly $k$.

Thus, we just need to prove the second condition. To prove it, we will in fact prove the contraposition, to ease our proof. This is given in the following Lemma. The proof of this Lemma uses some arguments made in the proof of Lemma \ref{lem:backwardredisuds}, and by choosing carefully which vertices we can put in the independent set we want to construct.

\begin{lemma}\label{lem:gapisuds}
If there exists an upper dominating set in $G'$ of size $> rAk$, then there exists an independent set in $G$ of size $> rk$. 
\end{lemma}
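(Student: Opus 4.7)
The plan is to adapt the group-classification of Lemma~\ref{lem:backwardredisuds} to the gap regime and extract the independent set from the ``heavy'' good groups. Given $D$ with $|D|>rAk$, I would first partition the $k$ groups into good, bad, and very bad, with $b$ bad groups, $c$ very bad groups, and $B := \sum_{i \text{ very bad}} |W_i \cap D| \geq 3c$; all structural facts proved in Lemma~\ref{lem:backwardredisuds} remain available, in particular that each bad group contributes exactly two vertices of $D$, each $D$-vertex in a very bad group is privately anchored in a distinct good group that contributes exactly one vertex of $D$ (so the number of such good groups is at least $B$), and in heavy groups the vertex $z_i$ is forbidden from $D$ by the twin-vertex minimality argument.

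The new step is to refine the good groups according to $|Z_u \cap D|$ for the unique relevant $u \in V_i$: call $W_i$ \emph{Type~I} if $|Z_u \cap D| \geq 2$, \emph{Type~II} if $|Z_u \cap D| = 1$, and \emph{Type~III} if $Z_u \cap D = \emptyset$ (forcing $z_i \in D$). Let $\alpha,\beta,\gamma$ denote their counts, so $\alpha + \beta + \gamma + b + c = k$ and $\beta \geq B$. Charging each group by its maximum possible contribution (at most $A$ for Type~I; at most $2$ for Type~II; exactly $1$ for Type~III; exactly $2$ for each bad group; $t_i$ per very bad group), substituting $\alpha = k - \beta - \gamma - b - c$, and using $B \leq \beta$ yields
\[
|D| \;\leq\; A\alpha + 2\beta + \gamma + 2b + B \;\leq\; Ak - (A-3)\beta - (A-1)\gamma - (A-2)b - Ac.
\]
Since the smallest coefficient on the right is $A-3$, the hypothesis $|D| > rAk$ forces $b + c + \beta + \gamma < \tfrac{(1-r)A}{A-3}\,k$, and hence $\alpha > \tfrac{rA-3}{A-3}\,k$.

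I would then build $I \subseteq V(G)$ by picking, from each Type~I group $W_i$, the unique $u_i \in V_i$ witnessing $|Z_{u_i} \cap D| \geq 2$. Independence of $I$ in $G$ reduces to the twin-vertex argument of Lemma~\ref{lem:backwardredisuds}: if $u_i u_j \in E(G)$ then in $G'$ the (at least two) twins in $Z_{u_i} \cap D$ share all external neighbors and are dominated by $Z_{u_j} \cap D \neq \emptyset$, so no twin can own a private neighbor (any candidate is also adjacent to the other twin, hence dominated twice) nor be self-private, contradicting the minimality of $D$. This yields $|I| = \alpha$.

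The main obstacle is purely arithmetic: for $r < 1$ the ratio $\tfrac{rA-3}{A-3}$ is strictly smaller than $r$, so the bound $\alpha > \tfrac{rA-3}{A-3}\,k$ falls short of $\alpha > rk$ by an additive slack of $\tfrac{3(1-r)}{A-3}\,k$. This is precisely why $A$ was taken ``sufficiently large'': choosing $A$ large as a function of $r$ drives this slack arbitrarily close to zero, and since Lemma~\ref{lem:fptapproxhardis} yields FPT-inapproximability for \emph{every} constant ratio, the vanishing slack is harmlessly absorbed into the ratio at which the IS-hardness is invoked, giving the announced FPT-approximation lower bound for \textsc{$k$-Upper Dominating Set}.
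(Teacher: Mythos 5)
Your structural setup is right and the arithmetic you perform is internally consistent, but the construction of $I$ is strictly too weak, and you correctly identify why: collecting vertices only from Type~I good groups yields $|I| > \tfrac{rA-3}{A-3}k$, which is strictly below $rk$ for every $r<1$ and \emph{every} constant $A$. No choice of $A$ closes this deficit, so the lemma as stated is not proved. The fallback move --- ``absorb the slack into the ratio at which IS-hardness is invoked'' --- is not a proof of the lemma; it is a proposal to change the statement and rework the downstream argument (choose $A$ as a function of the UDS target ratio, invoke IS-hardness at a shifted ratio $r'$, re-derive the gap for Theorem~\ref{th:fptapproxupperdom}). That is a genuinely different and unfinished route.

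What the paper does differently, and what it buys, is precisely the missing ingredient: it extracts independent-set vertices from \emph{bad and very bad} groups too, not just from heavy good groups. For a bad group of type~$j$ it recovers vertices of $G$ via the private neighbors of the $D$-vertices that escape the group (one pick for types~0 and~1, two for type~2), and for each $D$-vertex in a very bad group it recovers the vertex of $G$ carrying that vertex's private neighbor in a distinct Type~II good group. This gives the stronger count
\[
|I| \;\geq\; (b_0 + b_1 + 2b_2) + B + F,
\]
which, combined with the cost bound on $|D|$ and the constraint $|D|>rAk$, cancels the troublesome terms exactly and lands on $|I|>rk$ once $A$ is large. The role of ``$A$ sufficiently large'' in the paper is therefore not to shrink a residual slack in the \emph{ratio}, but to force the sign on $B'$ in a single inequality; the richer $I$-construction is what removes the $\tfrac{3(1-r)}{A-3}k$ deficit that your argument cannot eliminate. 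To fix your proof, you would need to add picks from the bad and very bad groups (using the private-neighbor accounting already available from Lemma~\ref{lem:backwardredisuds}) and then redo the arithmetic with those extra contributions; as written, the bound falls short and the gap is real.
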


\begin{proof}
Assume $G'$ admits an upper dominating set $D$ of size $> rAk$. We will use some properties we have proven in Lemma \ref{lem:backwardredisuds}. Recall that, for any $i \in \{ 1, \ldots, k \}$, we make the difference whether the group $W_i$ is bad, very bad, or good. Now consider a bad group $W_i$ and recall that $|W_i \cap D| = 2$ and that there exists two vertices $u, u' \in V_i$ such that $Z_u \cap D = \{ v \}$ and $Z_{u'} \cap D = \{ v' \}$. For a bad group, we will distinguish between three different types of group:

\begin{itemize}
    \item For any $j \in \{ 0, 1, 2 \}$, we say that the group $W_i$ is \textit{bad of type $j$} if there exists exactly $j$ vertices between $v$ and $v'$ which have their private neighbor outside $W_i \cup \{ z_i \}$. 
\end{itemize}

Now, given our upper dominating set $D$, we construct an independent set $I$ of $G$ as follows:

\begin{itemize}
    \item For a bad group $W_i$ of type 0, recall that there exists exactly two vertices $u, u' \in V_i$ such that $|Z_u \cap D| = 1$ and $|Z_{u'} \cap D| = 1$. Put in the solution either $u$ or $u'$. 
    \item For a bad group $W_i$ of type 1, and without loss of generality, let $u \in V_i$ such that $v = Z_u \cap D$ is the unique vertex of $W_i$ which have its private neighbor outside $W_i \cup \{ z_i \}$. Let $w$ be the private neighbor of $v$, and let $u_w$ be the vertex of $V(G)$ for which $w \in Z_{u_w}$. Put the vertex $u_w$ in the solution. 
    \item For a bad group $W_i$ of type 2, let $u, u' \in V_i$ be the two vertices such that $Z_u \cap D \neq \emptyset$ and $Z_{u'} \cap D \neq \emptyset$. Let $v = Z_u \cap D$ and $v' = Z_{u'} \cap D$, let $w$ and $w'$ be the private neighbors of $v$ and $v'$, respectively, and let $u_w$ and $u_{w'}$ be the vertices of $V(G)$ for which $w \in Z_{u_w}$ and $w' \in Z_{u_{w'}}$, respectively. Put the vertices $u_w$ and $u_{w'}$ in the solution. 
    \item For a very bad group $W_i$, let $v$ be any vertex in $W_i \cap D$. Let $w$ be the private neighbor of $v$ and let $u$ be the vertex of $V(G)$ for which $w \in Z_u$. Put $u$ in the solution. Do this for all the vertices $v \in W_i \cap D$. 
    \item For a good group $W_i$, let $u \in V_i$ such that $Z_u \cap D \neq \emptyset$. If $|Z_u \cap D| \geq 2$, then put $u$ in the solution. 
\end{itemize}

We will prove first that the solution $I$ we have constructed is an independent set of $V(G)$. 

Consider a bad group $W_i$ of type 0, and let $v = Z_u \cap D$ and $v' = Z_{u'} \cap D$. Since $W_i$ is of type 0, it means that the private neighbor of $v$ is in $Z_{u'}$ and the private neighbor of $v'$ is in $Z_u$, since $(W_i \cup \{ z_i \}) \setminus (Z_u \cup Z_{u'})$ are dominated by both $v$ and $v'$. It means that the vertices of $Z_u$ are only dominated by $v'$ and the vertices of $Z_{u'}$ are only dominated by $v$. So, since we put either $u$ or $u'$ in the solution, the vertex selected in $I$ has no neighbor in $I$. 

Consider now a bad group $W_i$ of type 1. Since $w$ is the private neighbor of $v$, it means that the vertices of $Z_{u_w}$ are only dominated by $v$. So the selected vertex $u_w$ has no neighbor in $I$. 

Consider now a bad group $W_i$ of type 2. By a similar argument as for a bad group of type 1, the vertices of $Z_{u_w}$ are only dominated by $v$, and the vertices of $Z_{u_{w'}}$ are only dominated by $v$'. So, the two selected vertices $u_w$ and $u_{w'}$ have no neighbor in $I$.

Consider now a very bad group $W_i$. We have proven in Lemma \ref{lem:backwardredisuds} that, for any vertex $v \in W_i \cap D$, it has a private neighbor outside $W_i \cup \{ z_i \}$ in an independent set $Z_u$. So the vertices of $Z_u$ are only dominated by the vertex $v$. So the selected vertex $u$ has no neighbor in $I$. This is true for all vertices $v \in W_i \cap D$. 

Consider now a good group $W_i$. Since $W_i$ is a good group, there exists at most one $u \in V_i$ such that $Z_u \cap D \neq \emptyset$. If $|Z_u \cap D| \geq 2$, then the vertices of $Z_u$ in $D$ are their own private vertices. So the selected vertex $u$ has no neighbor in $I$. 

So the solution $I$ we have constructed is an independent set in $G$. 

We will now show that $|I| > rk$.

First, consider a vertex $u \in V_i$ such that $W_i$ is bad of type 1 or 2, $Z_u \cap D = \{ v \}$, and the private neighbor of $v$ is outside $W_i \cup \{ z_i \} $. By the same arguments as for the very bad groups in Lemma \ref{lem:backwardredisuds}, we have that the private neighbor $w$ of $v$ is in a good group $W_{i'}$ and in a set $Z_{u'}$ such that $|(W_{i'} \cup \{ z_{i'} \}) \cap D| = 1$ and this unique vertex must be in $Z_{u'}$. 

Now, we give the following notations: let $b_j$ be the number of bad groups of type $j$, for $j \in \{ 0, 1, 2 \}$ ; let $B$ be the number of vertices in all the very bad groups ; let $B'$ be the number of very bad groups ; let $F$ be the number of good groups which have at least two vertices in $D$ ; and let $f$ be the number of good groups which have at most one vertex in $D$ and which do not contain private neighbors of vertices in the bad and very bad groups. 

We have $|I| \geq (b_0 + b_1 + 2b_2) + B + F$. This is easy to see from how we constructed $I$. 

Recall that there exists exactly $k$ groups $W_i$. We have the following: $F \geq k - ((b_0 + 2b_1 + 3b_2) + (B + B') + f)$. Indeed, from $k$ groups $W_i$, we substract the following: the $b_0$ bad groups of type 0 ; the $b_1$ bad groups of type 1 and the corresponding good groups (there is one such good group for each bad group of type 1) ; the $b_2$ bad groups of type 2 and the corresponding good groups (there is two such good groups for each bad group of type 2) ; the $B$ good groups which contain the private neighbors of the vertices of the very bad groups ; the $B'$ very bad groups ; and the $f$ remaining good groups. 

From this two inequalities, we obtain the following: $|I| \geq k - ((b_1 + b_2) + B' + f)$. 

Now, we will upper-bound the size of $D$ to upper-bound $B'$. Recall that for a good group counted in $F$, there is at most $A$ vertices which can be in $D$. We have: $|D| \leq (2b_0 + 3b_1 + 4b_2) + 2B + f + AF$

To see this, make the following observations: two vertices are taken for each bad group of type 0 ; two vertices and a single vertex from a good group are taken for each bad group of type 1 ; two vertices and a single vertex from two good groups are taken for each bag group of type 2 ; $B$ vertices and a single vertex from $B$ good groups are taken for the very bad groups ; at most $A$ vertices are taken for the good groups with $|W_i \cap D| \geq 2$ ; and at most one vertex is taken for each remaining good group.

But $D$ is of size $> rAk$, so we obtain:
\[ Ak + b_0(2-A) + b_1(3-2A) + b_2(4-3A) + B(2-A) - AB' + f(1-A) > rAk \]
\[ \iff B' < b_0(2/A-1) + b_1(3/A-2) + b_2(4/A-3) + B(2/A-1) + f(1/A-1) - (r-1)k \]

For $A$ sufficiently large, we obtain: $B' < - b_1 - b_2 - f - (r-1)k$

With this inequality and the one on the size of $I$, we obtain:
\[ |I| > k - ((b_1 + b_2) + f) + (b_1 + b_2 + f + (r-1)k) = k + (r-1)k = rk \]

So we have construct an independent set $I$ of $G$ of size $> rk$. \qed
\end{proof}

Now that we have proved the correctness of the gap-amplification of our reduction, we can present the second main result of this section:

\begin{theorem}\label{th:fptapproxupperdom}
Under ETH, for any constant $r > 0$ and any $\varepsilon > 0$, there is no $r$-approximation algorithm for \textsc{$k$-Upper Dominating Set} running in time $O(n^{k^{1-\varepsilon}})$.
\end{theorem}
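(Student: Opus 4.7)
The plan is to assemble the reduction and the two gap lemmas into a standard gap-preserving contradiction argument. Starting from an instance $(G,k)$ of \textsc{$k$-Independent Set} with the gap guaranteed by Bonnet et al.\ (so $\alpha(G)=k$ in the YES case and $\alpha(G)\leq rk$ in the NO case), I would apply the reduction described above, with $A$ chosen as a sufficiently large constant so that the conclusion of Lemma~\ref{lem:gapisuds} is valid. This produces an instance $(G',k')$ of \textsc{$k'$-Upper Dominating Set} with $k'=Ak$ and $|V(G')|=An+k=O(n)$, since $A$ is a constant.

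Next I would verify the two sides of the gap on $G'$. For the YES case, Lemma~\ref{lem:forwardredisuds} directly gives $\Gamma(G')\geq Ak=k'$, so in fact $\Gamma(G')=Ak$ (the upper bound being immediate from the $k$ cliques and the structure of the construction, but only the lower bound is needed). For the NO case, I would invoke the contrapositive of Lemma~\ref{lem:gapisuds}: since $\alpha(G)\leq rk$, we must have $\Gamma(G')\leq rAk = rk'$. Hence the reduction transports the gap of ratio $r$ from \textsc{$k$-Independent Set} to \textsc{$k'$-Upper Dominating Set}.

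Now suppose for contradiction that for some constant $r>0$ and some $\varepsilon>0$ there is an $r$-approximation algorithm $\mathcal{A}$ for \textsc{$k$-Upper Dominating Set} running in time $O(n^{k^{1-\varepsilon}})$. Running $\mathcal{A}$ on $G'$ outputs a value within a factor $r$ of $\Gamma(G')$, which by the previous paragraph distinguishes the YES and NO cases of the original gap instance; equivalently it yields an $r$-approximation for \textsc{$k$-Independent Set}. The running time is
\[
O\!\bigl(|V(G')|^{(k')^{1-\varepsilon}}\bigr) = O\!\bigl((An)^{(Ak)^{1-\varepsilon}}\bigr) = O\!\bigl(n^{k^{1-\varepsilon}}\bigr),
\]
since $A$ is a constant absorbed in the $O(\cdot)$. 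This contradicts Lemma~\ref{lem:fptapproxhardis}, completing the proof.

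The only genuinely delicate point is making sure the constant $A$ chosen to satisfy Lemma~\ref{lem:gapisuds} does not depend on $r$ in a way that would spoil the reduction; but inspecting that proof shows that for every fixed $r$ the required threshold on $A$ is a fixed constant, and once $A$ is fixed the linear blow-up of $n$ and $k$ is harmless in the exponent $k^{1-\varepsilon}$. Everything else is bookkeeping over the lemmas already established.
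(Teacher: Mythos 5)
Your proof is correct and follows essentially the same route as the paper: apply the gap-amplification reduction of Bonnet et al., then the paper's reduction to \textsc{Upper Dominating Set} with $A$ a sufficiently large constant, use Lemmas~\ref{lem:forwardredisuds} and~\ref{lem:gapisuds} to transfer the gap of ratio $r$, and derive the contradiction with Lemma~\ref{lem:fptapproxhardis} via the linear blow-up in both $n$ and $k$. The only cosmetic difference is that you make the size bound $|V(G')| = An + k = O(n)$ explicit, which the paper leaves implicit.
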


\begin{proof}
Fix $0 < r < 1$ and $\varepsilon > 0$. Consider an instance $\phi$ of \textsc{$3$-SAT}. Apply the reduction of Bonnet et al. \cite{BonnetE0P13} to obtain an instance $(G,k)$ of \textsc{$k$-Independent Set}, and then apply our reduction to obtain an instance $(G',k')$ of \textsc{$k'$-Upper Dominating Set}. Thanks to Lemmas \ref{lem:forwardredisuds} and \ref{lem:gapisuds}, and to the gap-amplification reduction of Bonnet et al. \cite{BonnetE0P13} (see Lemma \ref{lem:fptapproxhardis}), we know the following:

\begin{itemize}
    \item YES-instance: If $\phi$ is satisfiable, then $\alpha(G) = k$ and then $\Gamma(G') = Ak$. 
    \item NO-instance: If $\phi$ is not satisfiable, then $\alpha(G) \leq rk$ and then $\Gamma(G') \leq rAk$.
\end{itemize}

Now suppose that there exists an algorithm that outputs an $r$-approximation for \textsc{$k'$-Upper Dominating Set} in time $O(n^{k^{1-\varepsilon}})$. With this algorithm and our reduction, we can obtain an $r$-approximation for \textsc{$k$-Independent Set} and thus determine if $\phi$ is satisfiable or not in time $O(n^{k^{1-\varepsilon}})$. But by Lemma \ref{lem:fptapproxhardis}, this would contradict ETH. \qed
\end{proof}

\section{Pathwidth}\label{section:pathwdith}

\subsection{FPT Algorithm Parameterized by Pathwidth}

In this section, we present an algorithm for the \textsc{Upper Dominating Set} problem parameterized by the pathwidth $pw$ of the given graph. We prove that, given a graph $G=(V,E)$ and a path decomposition $(T, \{ X_t \}_{t \in V(T)})$ of width $pw$, there exists a dynamic programming algorithm that solves \textsc{Upper Dominating Set} in time $O(6^{pw} \cdot pw)$. 

Note that Bazgan et al. have designed an FPT algorithm for \textsc{Upper Dominating Set} running in time $O^*(7^{pw})$ \cite{BazganBCFJKLLMP18}. Our algorithm essentially works as their algorithm: we have the same set of colors to give to the vertices ; and our Initialization and Forget nodes are similar to theirs. 

Nonetheless, we have modified the Introduce nodes in order to lower the complexity to $O(6^{pw} \cdot pw)$. For an Introduce node $X_t = X_{t'} \cup \{ v \}$ (and a vertex $v \notin X_{t'}$), Bazgan et al. did the following: they go through all possible colorings of the bag $X_t$ and consider every subset of the neighborhood of $v$ to give the right color to the vertices of this subset. Thus, since they consider every subset of the neighborhood of $v$, they get an algorithm running in time $O^*(7^{pw})$. 

In our algorithm, we do the following: for an Introduce node $X_t = X_{t'} \cup \{ v \}$, we go through all possible colorings of the bag $X_{t'}$ and through all colorings of the vertex $v$, and we update the value in the table depending on the corresponding colorings of $X_{t'}$ and $v$. Doing so, and by being careful on the color given to $v$, it enables us to get an algorithm running in time $O(6^{pw} \cdot pw)$. We obtain the following Theorem:

\begin{theorem}\label{th:positiveupperdom}
The \textsc{Upper Dominating Set} problem can be solved in time $O(6^{pw} \cdot pw)$, where $pw$ is the input graph's pathwidth. 
\end{theorem}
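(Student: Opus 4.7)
The plan is to run a dynamic programming on a nice path decomposition, maintaining at each bag $X_t$ a table whose entries are indexed by colorings $c : X_t \to \mathcal{C}$ of a six-element state set $\mathcal{C}$, each value recording the maximum number of solution vertices placed in the subgraph induced by all vertices seen up to $t$ that is consistent with $c$. I would take $\mathcal{C} = \{\mathsf{I}, \mathsf{S}_0, \mathsf{S}_1, \mathsf{P}, \mathsf{N}, \mathsf{U}\}$ with the following interpretations: $\mathsf{I}$ means the vertex is in the solution as its own private vertex (so it must remain non-adjacent to every other solution vertex); $\mathsf{S}_0$ and $\mathsf{S}_1$ mean it is in the solution and belongs to the $S$-part, with the subscript recording whether a private neighbour has already been committed; $\mathsf{P}$ means the vertex is outside the solution and has already been fixed as the private neighbour of some $\mathsf{S}$-vertex, so no other solution vertex may be adjacent to it; $\mathsf{N}$ means it is outside the solution, already dominated, and otherwise unrestricted; and $\mathsf{U}$ means it is outside the solution and not yet dominated. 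A coloring that forgets a vertex in state $\mathsf{I}$, $\mathsf{S}_1$, $\mathsf{P}$ or $\mathsf{N}$ is kept; entries that would forget an $\mathsf{S}_0$ or $\mathsf{U}$ are discarded as infeasible.

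Initialization and Forget nodes would be handled as in Bazgan et al. The key modification is at an Introduce node $X_t = X_{t'} \cup \{v\}$, where I would iterate over the $6^{|X_{t'}|}$ colorings $c'$ of $X_{t'}$ and, for each, try each of the six possible states for $v$. Given such a pair, a single pass over $N(v) \cap X_{t'}$ in time $O(pw)$ checks all local consistency constraints (for instance, a vertex in state $\mathsf{I}$ cannot be adjacent to any vertex in a solution state $\mathsf{I}$, $\mathsf{S}_0$ or $\mathsf{S}_1$; an edge from a solution state of $v$ to a $\mathsf{P}$-neighbour is legal only when $v$ is the owner already recorded for that neighbour) and applies the induced updates on the neighbours, such as promoting $\mathsf{U}$ to $\mathsf{N}$ or to $\mathsf{P}$, or promoting $\mathsf{S}_0$ to $\mathsf{S}_1$ for $v$. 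The entry at the resulting coloring of $X_t$ inherits the value at $c'$, incremented by $1$ whenever $v$ is placed in the solution. Each Introduce node then costs $6^{|X_{t'}|+1} \cdot O(pw) = O(6^{pw} \cdot pw)$; Forget nodes are no more expensive, and since there are $O(n)$ nodes altogether this gives the claimed bound.

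Verifying that each table entry stores the correct maximum by induction along the decomposition is routine. The one subtle point, and the main obstacle I anticipate, is to justify that committing a non-solution vertex to state $\mathsf{P}$ the moment it is first introduced adjacent to an $\mathsf{S}_0$-vertex is without loss of generality, so that no extra ``uncommitted potential private'' state is needed. I expect this to follow from a local exchange argument: whenever an optimal solution eventually designates some vertex as the private of an $\mathsf{S}$-vertex, the algorithm can equivalently commit the first eligible neighbour introduced, and deferring the choice never enlarges the solution. Eliminating precisely this degree of freedom is what allows the sixth state to absorb the seventh state used by Bazgan et al., bringing the running time from $O^*(7^{pw})$ down to $O^*(6^{pw})$.
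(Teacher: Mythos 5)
Your proposal is essentially the same algorithm as the paper's. Your six states $\mathsf{I}, \mathsf{S}_0, \mathsf{S}_1, \mathsf{P}, \mathsf{N}, \mathsf{U}$ correspond one-for-one to the paper's $I, F^*, F, P, O, O^*$; your Forget rule (discard $\mathsf{S}_0$ and $\mathsf{U}$) is the paper's; and your Introduce step (iterate over the $6^{|X_{t'}|}$ colorings of $X_{t'}$ times six colors for $v$, then do an $O(pw)$ pass over $N(v)\cap X_{t'}$ to check consistency and propagate forced promotions) is exactly the modification the paper makes to Bazgan et al.

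One point you get wrong in the framing, though it does not affect the algorithm itself: you attribute the factor of $7$ in Bazgan et al.'s $O^*(7^{pw})$ to a seventh state that your sixth state ``absorbs.'' That is not where the $7$ comes from. Bazgan et al.\ already use the same six colors; the extra factor arises because at Introduce nodes they enumerate colorings of the full bag $X_t$ and additionally iterate over subsets of $N(v)$ to decide which neighbors of $v$ lose their ``undominated'' status. The improvement to $6^{pw}$ comes purely from replacing that subset enumeration with a deterministic propagation given the color of $v$, not from eliminating a state. Relatedly, the ``subtle point'' you flag --- whether one may commit a non-solution vertex to $\mathsf{P}$ as soon as it meets an $\mathsf{S}_0$ neighbor --- is not an obstacle in the paper's scheme and no exchange argument is needed: the DP simply enumerates all six colors for the introduced vertex, so it considers both the option where $v$ is paired (color $P$, promoting the unique $F^*$-neighbor to $F$) and the option where it is merely dominated (color $O$); the pairing decision is made, in both directions, whenever the later of the two endpoints of the $(S,\text{private})$ match is introduced, and the table's maximization over all colorings does the rest.
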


\begin{proof}
We now suppose that we are given a path decomposition $(T, \{ X_t \}_{t \in V(T)})$ of the given graph $G = (V,E)$. Recall that in such a path decomposition, we have three types of bag: the \textit{Initialization} bag, the \textit{Forget} bags, and the \textit{Introduce} bags. We can assume that we are given a nice path decomposition. So we only need to describe the Initialization, Forget and Introduce nodes, where a vertex is introduced exactly once, and is forgotten exactly once. 


We will now present how our dynamic programming works for each type of bag. To do so, we first distinguish between six different colors for each vertex. We define a \textit{coloring} of a bag $X_t$ to be a mapping $f : X_t \to \{ I, F, F^*, O^*, O, P \}$ assigning six different colors to the vertices of the bag $X_t$. These six colors are defined as follows:
\begin{itemize}
    \item $I$: the set of vertices which are in the dominating set and which forms an independent set, i.e. the vertices of the solution which are their own private vertices. 
    \item $F$: the set of vertices which are in the dominating set and which are already matched to a private neighbor. 
    \item $F^*$: the set of vertices which are in the dominating set and which have no private neighbor yet. 
    \item $O^*$: the set of vertices which are not in the dominating set and which are not dominated yet. 
    \item $O$: the set of vertices which are not in the dominating set, which are dominated, but which are not private neighbors of vertices of the solution. 
    \item $P$: the set of vertices which are not in the dominating set, which are dominated, and which are private neighbors of some vertices of the solution. 
\end{itemize}

Note that, since $f^{-1}(I) \cup f^{-1}(F) \cup f^{-1}(F^*) \cup f^{-1}(O^*) \cup f^{-1}(O) \cup f^{-1}(P)$ is a partition of $X_t$, there are $6^{|X_t|}$ colorings of $X_t$. These colorings form the space of states of the node $X_t$, and we will use this fact to improve the algorithm of Bazgan et al. \cite{BazganBCFJKLLMP18} from $O^*(7^{pw})$ to $O(6^{pw} \cdot pw)$. 

For a coloring $X_t$, we denote by $c[t,f]$ the maximum size of an upper dominating set $D \subseteq V_t$ (where $V_t$ denotes the set of vertices belonging to any bag $X_t$ of the subtree $T_t$ rooted at the node $t$), such that:

\begin{itemize}
    \item $f^{-1}(I) \cup f^{-1}(F) \cup f^{-1}(F^*) = D$
    \item $f^{-1}(O) \cup f^{-1}(P)$ is dominated by $D$. 
\end{itemize}

We call such a set $D$ a \textit{maximum compatible} set for $t$ and $f$. If no maximum compatible set for $t$ and $f$ exists, then we put $c[t,f] = - \infty$. 

Let us now define some useful notations. For a subset $X \subseteq V$, consider a coloring $f : X \to \{ I, F, F^*, O^*, O, P \}$. For a vertex $v \in V$, and a color $\alpha \in \{ I, F, F^*, O^*, O, P \}$, we define a new coloring $f_{v \to \alpha} : X \cup \{ v \} \to \{ I, F, F^*, O^*, O, P \}$ as follows:
\[ f_{v \to \alpha}(x) =
\begin{cases}
f(x) & \text{if } x \neq v \\
\alpha & \text{if } x = v \\
\end{cases}
\]

We now proceed to present the recursive formulas for the values of $c$.

\textbf{Initialization node}. For a node $X_t = \{ v \}$ which initializes the table, we make the following observations: the vertex $v$ cannot be in $F$ since it cannot have a private neighbor ; it cannot be neither in $O$ nor in $P$ since it cannot be dominated ; and for the three other cases (for $I$, $F^*$ and $O^*$), we just have to give the size of the corresponding solution. We obtain:
\[ c[t,f] =
\begin{cases}
1 & \text{if } v \in I \cup F^* \\
0 & \text{if } v \in O^* \\
-\infty & \text{otherwise} \\
\end{cases}
\]

\textbf{Forget node}. Let $t$ be a forget node with a unique child $t'$ such that $X_t = X_{t'} \setminus \{ v \}$ for some $v \in X_{t'}$. We make the following observations: the vertex $v$ cannot be forgotten if it belongs to $F^*$ since it contradicts the fact that the solution is minimal ; $v$ cannot be forgotten if it belongs to $O^*$ since in this case it remains undominated ; the four other cases are valid and we just need to take the maximum value between these four cases. We obtain:
\[ c[t,f] = \max \{ c[t',f_{v \to I}], c[t',f_{v \to F}], c[t',f_{v \to O}], c[t',f_{v \to P}] \} \]

Note that for these Initialization and Forget nodes, since in the worst case we go through all possible colorings of the bag $X_t$, the running-time for these two types of bags is $O(6^{pw})$.

\textbf{Introduce node}.  Let $t$ be an introduce node with a unique child $t'$ such that $X_t = X_{t'} \cup \{ v \}$ for some $v \notin X_{t'}$. Here, instead of going through all possible coloring of the bag $t$ and considering every subset of the neighborhood of $v$ to put in $O^*$, as Bazgan et al. did, we go through all possible colorings of the bag $t'$ and update the value of $c[t,f]$ depending on the corresponding coloring and any color affected to $v$. This enables us to lower the complexity to $O(6^{pw} \cdot pw)$ since we don't need anymore to go through every subset of the neighborhood of $v$. First, we affect the value $-\infty$ to $c[t,f]$ for every coloring $f$ of the bag $t$. Then, for every coloring $f : X_{t'} \to \{ I, F, F^*, O^*, O, P \}$ and any color $\alpha : \{ v \} \to \{ I, F, F^*, O^*, O, P \}$, we will define a new coloring $f_{new}$ which corresponds to the coloring $f$ of the vertices of $X_{t'}$ plus the coloring $\alpha$ of $v$, and an associated value $k_{new}$ which will be the size of the corresponding upper dominating set. To get the final value $c[t,f_{new}]$ for the bag $t$, we just update it by $k_{new}$ if $c[t,f_{new}] < k_{new}$ for the new coloring $f_{new}$, so at the end each entry of the table of $t$ will have the maximum size for the corresponding coloring $f_{new}$. Now, for each coloring $f$ of the bag $t'$ and each color $\alpha \in \{ I, F, F^*, O^*, O, P \}$ of the vertex $v$, we have the following cases: 
\begin{itemize}
    \item If $\alpha = I$ and $N(v) \cap (f^{-1}(I) \cap f^{-1}(F) \cap f^{-1}(F^*) \cap f^{-1}(P)) = \emptyset$, then:
\end{itemize}
\[ f_{new} (u) = 
\begin{cases}
O & \text{if } f(u) = O^* \text{ and } (u,v) \in E \\
f(u) & \text{otherwise } \\
\end{cases}
\]

For $v$ to be in $I$, we need that all its neighbors are either in $O$ or in $O^*$. Then, if this condition is satisfied, we can give the color $O$ to all neighbors of $v$ which are not dominated in the bag $t'$ since they become dominated in the new bag. 
\begin{itemize}
    \item If $\alpha = F$ and $N(v) \cap (f^{-1}(I) \cup f^{-1}(P)) = \emptyset$ and $\exists w \in N(v) \cap f^{-1}(O^*)$, then:
\end{itemize}
\[ f_{new} (u) = 
\begin{cases}
P & \text{if } u = w \\
O & \text{if } f(u) = O^* \text{ and } u \neq w \text{ and } (u,v) \in E \\
f(u) & \text{otherwise } \\
\end{cases}
\]

For $v$ to be in $F$, note that its neighbors cannot be in $I$ or in $P$, because otherwise its neighbors in $I$ are dominated and its neighbors in $P$ cannot be private neighbors anymore. Note also that at least one neighbor of $v$ has to be in $O^*$ in the bag $t'$ in order to become the private neighbor of $v$. Moreover, $v$ belongs to $F$ if at least one of its neighbors belongs to $P$ in the new coloring, so we can take any neighbor of $v$ which is not dominated in $t'$ and put it in $P$ in the new coloring, since it is enough to have just one neighbor of $v$ being its private neighbor. For the other neighbors of $v$ which are not dominated in $t'$, we can give them the color $O$ since they become dominated. Finally, all other vertices keep the same color from the coloring $f$. 
\begin{itemize}
    \item If $\alpha = F^*$ and $N(v) \cap (f^{-1}(I) \cup f^{-1}(P)) = \emptyset$, then:
\end{itemize}
\[ f_{new} (u) = 
\begin{cases}
O & \text{if } f(u) = O^* \text{ and } (u,v) \in E \\
f(u) & \text{otherwise } \\
\end{cases}
\]

For $v$ to be in $F^*$, we need that its neighbors are neither in $I$ nor in $P$, because otherwise its neighbors in $I$ are dominated and its neighbors in $P$ cannot be  private neighbors anymore. If this condition is satisfied, we can give the color $O$ to all neighbors of $v$ which are not dominated in the bag $t'$ since they become dominated in the new bag. 
\begin{itemize}
    \item If $\alpha = O^*$ and $N(v) \cap (f^{-1}(I) \cup f^{-1}(F) \cup f^{-1}(F^*)) = \emptyset$, then: $f_{new}(u) = f(u) \forall u \in X_{t'}$
\end{itemize}

For $v$ to be in $O^*$, we just need to check that all its neighbors are not in the solution, that is they do not have the colors $I$, $F$ or $F^*$. If this condition is satisfied, $v$ can be added as a non dominated vertex and the coloring $f_{new}$ is the coloring $f$. 
\begin{itemize}
    \item If $\alpha = O$ and $N(v) \cap (f^{-1}(I) \cup f^{-1}(F) \cup f^{-1}(F^*)) \neq \emptyset$, then: $f_{new}(u) = f(u) \forall u \in X_{t'}$
\end{itemize}

For $v$ to be in $O$, we just need that at least one of its neighbors is in the solution, that is one of its neighbor is in $I$, in $F$ or in $F^*$. If this condition is satisfied, $v$ can get color $O$ and the coloring $f_{new}$ is the coloring $f$. 
\begin{itemize}
    \item If $\alpha = P$ and $N(v) \cap (f^{-1}(I) \cup f^{-1}(F)) = \emptyset$ and $\exists ! w \in N(v) \cap f^{-1}(F^*)$, then:
\end{itemize}
\[ f_{new} (u) = 
\begin{cases}
F & \text{if } u = w \\
f(u) & \text{otherwise } \\
\end{cases}
\]

For $v$ to have color $P$, we need firstly that its neighbors are neither in $I$ nor in $F$, because otherwise $v$ cannot be a private neighbor of some vertex of the solution, and we also need that exactly one neighbor of $v$ is in $F^*$ in the bag $t'$, so that these two vertices are matched. If these conditions are satisfied, we give color $F$ to the only neighbor of $v$ in $F^*$ in the bag $t'$ and all other vertices keep the same color from the coloring $f$. 


Note that, since we go through all possible colorings $f$ in the bag $t'$ and through all colors for the vertex $v$, the running time of any introduce bag is $O(6^{pw} \cdot pw)$. \qed
\end{proof}

\subsection{Lower Bound}

In this section, we present a lower bound on the complexity of any FPT algorithm for the \textsc{Upper Dominating Set} problem parameterized by the pathwidth of the graph matching our previous algorithm. More precisely, we prove that, under SETH, for any $\varepsilon > 0$, there is no algorithm for \textsc{Upper Dominating Set} running in time $O^*((6-\varepsilon)^{pw})$, where $pw$ is the pathwidth of the input graph. 

To get this result, we will do a reduction from the \textsc{$q$-CSP-6} problem (see \cite{Lampis18}) to the \textsc{Upper Dominating Set} problem. In the former problem, we are given a \textsc{Constraint Satisfaction} (CSP) instance with $n$ variables and $m$ constraints. The variables take values over a set of size 6. Without loss of generality, let $\{ 0, 1, 2, 3, 4, 5 \}$ be this set. Each constraint involves at most $q$ variables, and is given as a list of acceptable assignments for these variables, where an acceptable assignment is a $q$-tuple of values from the set $\{ 0, 1, 2, 3, 4, 5 \}$ given to the $q$ variables. Without loss of generality, we force the following condition: each constraint involves exactly $q$ variables, because if it has fewer, we can add to it new variables and augment the list of satisfying assignments so that the value of the new variables is irrelevant. 


The following result, shown in \cite{Lampis18}, is a natural consequence of the SETH, and will be the starting point to obtain the desired lower bound:

\begin{lemma}[Lemma 2 from \cite{Lampis18}]\label{lem:lampis}
If the SETH is true, then, for all $\varepsilon > 0$, there exists a $q$ such that $n$-variables \textsc{$q$-CSP-6} cannot be solved in time $O^*((6-\varepsilon)^{n})$. 
\end{lemma}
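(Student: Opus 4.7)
The plan is to reduce \textsc{$k$-SAT} to \textsc{$q$-CSP-6} via a block-encoding of Boolean variables, and then invoke SETH for \textsc{$k$-SAT}. The key quantitative identity is $6^{1/\log_2 6} = 2$: encoding the $N$ SAT variables in packed blocks over the $6$-valued domain converts a $2^N$-time brute force into an essentially $6^n$-time brute force on $n \approx N/\log_2 6$ CSP variables, and a hypothetical $(6-\varepsilon)^n$-time algorithm for \textsc{$q$-CSP-6} then translates back to strictly sub-$2^N$ time for SAT.

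More concretely, fix $\varepsilon > 0$, choose a large integer $b$ (to be specified at the end) and set $a := \lceil b / \log_2 6 \rceil$, so that $6^a \geq 2^b$. Given a \textsc{$k$-SAT} instance $\phi$ on $N$ variables (assume $b \mid N$ by padding), partition its variables into blocks $B_1, \ldots, B_{N/b}$ of size $b$ each. For each block $B_i$ introduce $a$ fresh CSP-6 variables $y_{i,1}, \ldots, y_{i,a}$ and fix an injection $\iota_i : \{0,1\}^b \hookrightarrow \{0,\ldots,5\}^a$; add one CSP constraint on $(y_{i,1}, \ldots, y_{i,a})$ whose accepting list is exactly $\iota_i(\{0,1\}^b)$, which forces each block's CSP-encoding to represent some genuine Boolean assignment to $B_i$. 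For each clause $C$ of $\phi$, let $B_{i_1}, \ldots, B_{i_\ell}$ be the (at most $k$) blocks touched by $C$ and add one CSP constraint of arity $\ell a \leq ka$ on the corresponding $y$-variables, whose accepting tuples are exactly the CSP-encodings that map (via the relevant $\iota_{i_j}$) to Boolean assignments satisfying $C$. The resulting instance has $n = Na/b$ CSP variables and maximum constraint arity $q := ka$, depending only on $k$ and $b$.

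For the analysis, an algorithm for \textsc{$q$-CSP-6} running in time $O^*((6-\varepsilon)^n)$ would solve $\phi$ in time $O^*\bigl( \lambda(b,\varepsilon)^N \bigr)$ with $\lambda(b,\varepsilon) := (6-\varepsilon)^{a/b}$. Since $a/b \to 1/\log_2 6$ as $b \to \infty$, and $(6-\varepsilon)^{1/\log_2 6} < 6^{1/\log_2 6} = 2$, we can pick $b = b(\varepsilon)$ large enough to guarantee $\lambda(b,\varepsilon) \leq 2 - \delta$ for some $\delta = \delta(\varepsilon) > 0$; then SETH, applied to a sufficiently large clause size $k = k(\delta)$, is violated. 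The main delicate point is controlling $q$: since the reduction pays arity $ka$ per clause, one must fix $\varepsilon$ first, then derive $\delta$ from it, then derive $k$ from SETH, and only then pick $b$ (and hence $a$ and $q = ka$) from $\varepsilon$. This order of quantifiers matches the statement (``for every $\varepsilon$ there exists $q$''), so no circularity arises.
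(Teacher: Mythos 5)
This statement is not proved in the paper at all; it is imported verbatim as Lemma~2 of~\cite{Lampis18}, and the authors explicitly treat it as a black-box consequence of the SETH. So there is no ``paper's own proof'' to compare against. Judged on its own merits, your reconstruction is correct and is in fact the standard block-encoding argument used in~\cite{Lampis18}: pack $b$ Boolean variables into $a = \lceil b/\log_2 6\rceil$ alphabet-$6$ variables, add a per-block constraint restricting to the image of the injection $\{0,1\}^b \hookrightarrow \{0,\dots,5\}^a$, and add a per-clause constraint of arity at most $ka$ accepting exactly the encoded satisfying assignments. The exponent bookkeeping $\lambda(b,\varepsilon) = (6-\varepsilon)^{a/b} \to (6-\varepsilon)^{1/\log_2 6} < 2$ is the right way to close the gap, and your remark about the quantifier order ($\varepsilon$, then the target $\delta$ it buys you in base $2$, then the SETH clause width $k$, then the block size $b$, and only finally $q = ka$) is exactly the point that needs care and you handle it correctly (the earlier phrasing ``pick $b$ \dots for some $\delta$'' inverts $b$ and $\delta$, but since both are determined by $\varepsilon$ alone this is harmless). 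Two tiny cosmetic notes: you should state explicitly that the per-clause accepting list is taken to also accept any tuple containing a block-value outside the image of the relevant $\iota_{i_j}$ (so that the block constraints, not the clause constraints, are solely responsible for validity), and that the lists have constant size $6^{ka}$ so the reduction is polynomial. Neither affects correctness.
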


We note that in \cite{Lampis18}, it was shown that for any constant $B$, \textsc{$q$-CSP-B} cannot be solved in time $O^*((B-\varepsilon)^n)$ under the SETH. For our purpose, only the case where $B = 6$ is relevant because this corresponds to the base of our target lower bound. 

We will produce a polynomial time reduction from an instance of \textsc{$q$-CSP-6} with $n$ variables to an equivalent instance of \textsc{Upper Dominating Set} whose pathwidth is bounded by $n + O(1)$. Thus, any algorithm for the latter problem running faster than $O^*((6-\varepsilon)^{pw})$ would give a $O^*((6-\varepsilon)^n)$ algorithm for the former problem, contradicting SETH. 

Before we proceed further in the description of our reduction, let us give the basic ideas, which look like other SETH-based lower bounds from the literature \cite{HanakaKLOS18,JaffkeJ17,KatsikarelisLP18,KatsikarelisLP17,LokshtanovMS18}. The constructed graph consists of a main part of $n$ paths of length $4m$, each divided into $m$ sections. The idea is that an optimal solution will verify, for each path, a specific pattern in the whole graph. For four consecutive vertices, there are six ways for taking exactly two vertices among the four and dominating the two others. These six ways for each path will represent all possible assignments for all variables. Then, we will add some \textit{verification} gadgets for each constraint and attach it to the corresponding section, in order to check that the selected assignment satisfies the constraint or not. 

A first difficulty of this reduction is to prove that an optimal solution of the \textsc{Upper Dominating Set} instance has the desired form, and more precisely that the pattern selected for a variable is constant throughout the graph. To answer this difficulty, and by using a technique introduced in \cite{LokshtanovMS18}, we make a polynomial number of copies of this construction and we connect them together, enabling us to have a sufficiently large copy where the patterns are kept constant in this copy.

Moreover, we need to be careful in our verification gadgets in order to have the following conditions: the vertices of the paths taken in the solution must not have any private neighbor in the corresponding verification gadget, because otherwise it would be impossible to keep the patterns constant in a sufficiently large copy of the graph; and the vertices of the paths not taken in the solution must not be dominated by the corresponding verification gadget, because otherwise there can be some vertices of the paths taken in the solution that have no private neighbor. 

\subsubsection*{Construction}

Let us now present our reduction. We are given a \textsc{$q$-CSP-6} instance $\varphi$ with $n$ variables $x_1, \ldots, x_n$ taking values over the set $\{ 0, 1, 2, 3, 4, 5 \}$, and $m$ constraints $c_0, \ldots, c_{m-1}$, each containing exactly $q$ variables and $C_j$ possible assignments over these $q$ variables, for each $j \in \{ 0, \ldots, m-1 \}$. We define the following numbers: $A = 4q+2$ and $F = (2n+1)(4n+1)$. We set our budget to be $k = Fm(2n + A) + 2n$. 

We construct our instance of \textsc{Upper Dominating Set} as follows:
\begin{enumerate}
    \item For $i \in \{ 1, \ldots, n \}$, we construct a path $P_i$ of $4Fm+6$ vertices: the vertices are labeled $u_{i,j}$ for $j \in \{ -3, \ldots, 4Fm+2 \}$ ; and for each $i,j$ the vertex $u_{i,j}$ is connected to $u_{i,j+1}$. We call these paths the \textit{main} part of our graph. 
    \item For each section $j \in \{ 0, \ldots, Fm-1 \}$, let $j' = j \mod m$. We construct a verification gadget $H_j$ as follows:
    \begin{enumerate}
        \item A clique $K_{j}$ of size $AC_{j'}$ such that the $AC_{j'}$ vertices are partitioned into $C_{j'}$ cliques $K_{j}^1, \ldots, K_{j}^{C_{j'}}$, each corresponding to a satisfying assignment $\sigma_l$ in the list of $c_{j'}$, for $l \in \{ 1, \ldots, C_{j'} \}$, and each containing exactly $A$ vertices. 
        \item A clique $L_j$ of size $AC_{j'}$ such that the $AC_{j'}$ vertices are partitioned in $C_{j'}$ cliques $L_{j}^1, \ldots, L_{j}^{C_{j'}}$, each containing exactly $A$ vertices. 
        \item For each $i \in \{ 1, \ldots, n \}$ such that $x_i$ is involved in $c_{j'}$, and for each satisfying assignment $\sigma_l$ in the list of $c_{j'}$: if $\sigma_l$ sets $x_i$ value 0, connect the two vertices $u_{i,4j+2}$ and $u_{i,4j+3}$ to the $A$ vertices of the clique $K_{j}^l$ ;  if $\sigma_l$ sets $x_i$ value 1, connect the two vertices $u_{i,4j+3}$ and $u_{i,4j}$ to the $A$ vertices of the clique $K_{j}^l$ ; if $\sigma_l$ sets $x_i$ value 2, connect the two vertices $u_{i,4j}$ and $u_{i,4j+1}$ to the $A$ vertices of the clique $K_{j}^l$ ; if $\sigma_l$ sets $x_i$ value 3, connect the two vertices $u_{i,4j+1}$ and $u_{i,4j+2}$ to the $A$ vertices of the clique $K_{j}^l$ ; if $\sigma_l$ sets $x_i$ value 4, connect the two vertices $u_{i,4j+1}$ and $u_{i,4j+3}$ to the $A$ vertices of the clique $K_{j}^l$ ; if $\sigma_l$ sets $x_i$ value 5, connect the two vertices $u_{i,4j}$ and $u_{i,4j+2}$ to the $A$ vertices of the clique $K_{j}^l$. 
        \item For each satisfying assignment $\sigma_l$ in the list of $c_{j'}$, do the following: add a matching between the vertices of $K_{j}^l$ and the vertices of $L_{j}^l$ ; for any $l' \in \{ 1, \ldots, C_{j'} \}$ with $l' \neq l$, add all the edges between the vertices of $K_{j}^l$ and the vertices of $L_{j}^{l'}$. 
        \item Add a vertex $w$ connected to all the vertices of the clique $L_j$.
    \end{enumerate}
\end{enumerate}

Now that we have presented our reduction, we argue that it is correct and that the obtained graph $G$ has the desired pathwidth. Recall that the target size of an optimal solution in $G$ is $k$ as defined above. 

\begin{lemma}\label{lem:forwardupperdom}
If $\varphi$ is satisfiable, then there exists an upper dominating set in $G$ of size at least $k$. 
\end{lemma}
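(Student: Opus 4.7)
\medskip

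\noindent\textbf{Proof plan.} Given a satisfying assignment $\alpha : \{x_1,\ldots,x_n\} \to \{0,1,2,3,4,5\}$ of $\varphi$, the plan is to exhibit an explicit set $D \subseteq V(G)$ with $|D|=k$ and then verify that $D$ is a minimal dominating set. I will build $D$ as the union of a path part and a gadget part. On the path side, for every $i \in \{1,\ldots,n\}$ and every section $j \in \{0,\ldots,Fm-1\}$, I put into $D$ the pair of vertices from $\{u_{i,4j},\ldots,u_{i,4j+3}\}$ corresponding to the pattern for value $\alpha(x_i)$, so that the same pattern is repeated throughout $P_i$; I also include two boundary vertices per path, chosen from $\{u_{i,-3},u_{i,-2},u_{i,-1}\} \cup \{u_{i,4Fm},u_{i,4Fm+1},u_{i,4Fm+2}\}$ so that the endpoints of $P_i$ are dominated. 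This gives $2nFm+2n$ vertices. On the gadget side, for every section $j$ with $j' = j \bmod m$, the restriction $\alpha|_{c_{j'}}$ satisfies $c_{j'}$ and hence appears in its list as some $\sigma_l$; I add to $D$ all $A$ vertices of $K_j^l$, contributing $AFm$ vertices. Summing gives $|D| = Fm(2n+A) + 2n = k$ as required.

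I would then check that $D$ dominates $G$. Along each $P_i$, the uniformity of the pattern across sections guarantees that the two non-selected path vertices of any section are covered either inside the section or by a selected vertex of one of the two neighbouring sections, while the two chosen boundary vertices take care of the six extremal vertices of $P_i$. Inside each gadget $H_j$, having $K_j^l \subseteq D$ immediately dominates all of $K_j$ via the clique structure, dominates $L_j^{l'}$ for $l' \neq l$ via the complete bipartite edges, and dominates $L_j^l$ via the matching; domination of $w$ is handled by the way the selection inside the gadget is coordinated with its sole neighbourhood $L_j$.

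Then I would verify minimality, which is the heart of the argument. For each path vertex in $D$, the intended private neighbour is one of the two non-selected vertices of its section: the key observation is that, because $\sigma_l(x_i) = \alpha(x_i)$ and by construction $K_j^l$ is attached only to the two path vertices of the pair for $\sigma_l(x_i)$, which are precisely the two path vertices placed in $D$, the two non-selected path vertices of the section receive no edge from $K_j^l \cap D$ and are therefore private to their unique path neighbour in $D$. For each $v \in K_j^l \cap D$, the intended private is the vertex $v' \in L_j^l$ matched to $v$: no other vertex of $K_j^l$ is adjacent to $v'$, and the cliques $K_j^{l''}$ for $l'' \neq l$ contribute nothing to $D$, so $v$ is the only neighbour of $v'$ inside $K_j$; one then has to argue that the choices made inside $L_j$ and around $w$ do not introduce a second $D$-neighbour of $v'$.

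The main obstacle I anticipate is precisely this last point, namely reconciling the domination of $w$ with the privateness of the matched $L_j^l$ vertices: placing any vertex of $L_j$ into $D$ kills the privateness of the corresponding $v'$, while leaving $L_j \cap D = \emptyset$ forces $w$ to be dealt with separately, and all of this must respect the tight count $|D|=k$. Resolving this interaction inside each gadget, in the exact way the construction is tuned so that the budget $A$ per section is respected, is the delicate bookkeeping step around which the rest of the proof is organised; the remainder is a routine case analysis over the six path patterns and the inter-section path edges, together with a straightforward check for the two boundary vertices of each $P_i$.
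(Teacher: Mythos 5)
Your overall strategy (pick a satisfying assignment, select two path vertices per section per path and one clique per verification gadget, total $Fm(2n+A)+2n=k$) matches the paper, but you make the wrong choice inside the gadgets, and the obstacle you flag at the end is real and does not resolve under your selection.

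You put the $A$ vertices of $K_j^{l}$ into $D$ and, consistently, put into $D$ the two path vertices of $Q_{i,j}$ that are \emph{attached} to $K_j^{l}$. Inside the gadget this leaves $w$ with no neighbour in $D$: the only neighbours of $w$ are $L_j$, and in your solution $L_j \cap D = \emptyset$. You cannot repair this. If you add $w$ (or any vertex of $L_j$) to $D$, then every vertex of $L_j^{l}$ acquires a second dominator besides its matched partner in $K_j^{l}$, so the matched $L_j^{l}$-vertices cease to be private; since every other candidate private neighbour of a vertex $v\in K_j^{l}$ (the rest of $K_j$ via the clique, all $L_j^{l'}$ for $l'\neq l$ via the complete-bipartite edges, and the attached path vertices which you have placed in $D$) is already dominated by some other member of $D$, $v$ ends up with no private neighbour at all, and $D$ is not minimal. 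So either $D$ fails to dominate $w$ or $D$ fails to be minimal; there is no local patch.

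The paper resolves this by making the opposite pair of choices: it puts the $A$ vertices of $L_j^{l^*}$ into $D$ (not $K_j^{l^*}$) and puts into $D$ the two path vertices of each $Q_{i,j}$ that are \emph{not} attached to $K_j^{l^*}$. Then $w$ is dominated by $L_j^{l^*}$ for free, the whole of $K_j$ and $L_j$ are dominated via the clique/bipartite/matching edges, and each $v'\in L_j^{l^*}$ has its matched $v\in K_j^{l^*}$ as a private neighbour precisely because $K_j^{l^*}\cap D=\emptyset$ and the two path vertices adjacent to $K_j^{l^*}$ were deliberately left out of $D$. Note that these two choices are coupled: your path convention together with the paper's gadget choice also fails, since the $D$-selected path vertices would dominate $K_j^{l^*}$ and destroy the privateness of the matched pairs. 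So the fix is to flip both selections at once. (Your count also needs a small adjustment at the two path endpoints — for some values of $\alpha(x_i)$ three boundary vertices are added rather than two, giving size \emph{at least} $k$ — but this is minor compared with the gadget issue.)
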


\begin{proof}
Assume $\varphi$ admits some satisfying assignment $\rho : \{ x_1, \ldots, x_n \} \to \{ 0, 1, 2$, $3, 4, 5 \}$. We construct a solution $S$ of the instance $G$ of \textsc{Upper Dominating Set} as follows:

\begin{enumerate}
    \item For each $i \in \{ 1, \ldots, n \}$, let $\alpha$ and $\beta$ be the following numbers: if $\rho(x_i) = 0$, let $\alpha = 2$ and $\beta = 3$ ; if $\rho(x_i) = 1$, let $\alpha = 3$ and $\beta = 0$ ; if $\rho(x_i) = 2$, let $\alpha = 0$ and $\beta = 1$ ; if $\rho(x_i) = 3$, let $\alpha = 1$ and $\beta = 2$ ; if $\rho(x_i) = 4$, let $\alpha = 1$ and $\beta = 3$ ; if $\rho(x_i) = 5$, let $\alpha = 0$ and $\beta = 2$. Let $U = \bigcup_{j = 0}^{Fm-1 } \{ u_{i,4j+\alpha}, u_{i,4j+\beta} \}$. We add to the solution all vertices of $(V(P_i) \setminus \{ u_{i,-3}, u_{i,-2}, u_{i,-1}, u_{i,4Fm}, u_{i,4Fm+1}, u_{i,4Fm+2} \}) \setminus U$. 
    \item For each $j \in \{ 0, \ldots, Fm-1 \}$, let $j' = j \mod m$. Consider the unique possible assignment $\sigma_{l^*}$ in the list of $c_{j'}$ satisfied by $\rho$ (such a unique possible assignment must exist since $\rho$ satisfies $\varphi$), and take the $A$ vertices of the clique $L_{j}^{l^*}$.
    \item For each $i \in \{ 1, \ldots, n \}$, do the following: if $\rho(x_i) = 0$, then add $u_{i,-3}$, $u_{i,4Fm}$ and $u_{i,4Fm+1}$ to $S$ ; if $\rho(x_i) = 1$, then add $u_{i,-2}$ and $u_{i,4Fm+1}$ to $S$ ; if $\rho(x_i) = 2$, then add $u_{i,-2}$, $u_{i,-1}$ and $u_{i,4Fm+2}$ to $S$ ; if $\rho(x_i) = 3$, then add $u_{i,-3}$ and $u_{i,4Fm+2}$ to $S$ ; if $\rho(x_i) = 4$, then add $u_{i,-3}$ and $u_{i,4Fm+1}$ to $S$ ; if $\rho(x_i) = 5$, then add $u_{i,-2}$ and $u_{i,4Fm+2}$ to $S$. 
\end{enumerate}

Let us now argue why this solution has size at least $k$. In the first step, we have selected $2Fmn$ vertices. To see this, let $Q_{i,j}$ be the sub-path of $P_i$ corresponding to the section $j$ ($j \in \{ 0, \ldots, Fm-1 \}$), i.e. $Q_{i,j} = \{ u_{i,4j}, u_{i,4j+1}, u_{i,4j+2}, u_{i,4j+3} \}$. Observe that we have put exactly two vertices of $Q_{i,j}$ in $U$, which leaves two vertices in the solution, for all $i$ and all $j$. Consider now any $j \in \{ 0, \ldots, Fm-1 \}$ and the corresponding verification gadget $H_j$. In this gadget, we have selected all the vertices of the clique $L_{j}^{l^*}$, corresponding to the satisfied assignment $\sigma_{l^*}$. So we have selected $AFm$ vertices for all the verification gadgets. Finally, at least $2n$ vertices have been added to the solution at step 3. So the total size is at least $2Fmn + AFm + 2n = k$. 

Let us now argue why the solution is a valid upper dominating set. 

Consider any $j \in \{ 0, \ldots, Fm-1 \}$ and let $j' = j \mod m$. We have selected the $A$ vertices of the clique $L_{j}^{l^*}$ corresponding to the unique possible assignment $\sigma_{l^*}$ in the list of $c_{j'}$ satisfied by $\rho$ (such a unique possible assignment must exist since $\rho$ satisfies $\varphi$). Since $L_{j}$ is a clique, since the vertices of $L_{j}^{l^*}$ are connected to all vertices of $K_{j}
^{l'}$, for any $l' \in \{ 1, \ldots, C_{j'} \}$ with $l' \neq l^*$, since there is a matching between the vertices of $L_{j}^{l^*}$ and the vertices of $K_{j}^{l^*}$, and since the vertex $w$ is connected to all vertices of $L_j$, we have that all the vertices of $H_j$ are dominated by $S$. 

Now, observe that, since $\sigma_{l^*}$ is satisfied by $\rho$, it means that the values given by $\rho$ to the variables appearing in the constraint $c_{j'}$ satisfy $\sigma_{l^*}$, so by the construction it follows that the neighbors of the vertices of $K_{j}^{l^*}$ in the paths all belongs to $U$. Indeed, consider any variable $x_i$ appearing in $c_{j'}$: if $\sigma_{l^*}$ sets value 0 to $x_i$, then $\rho(x_i) = 0$, and then, for $\alpha = 2$ and $\beta = 3$, we have that $u_{i,4j+\alpha}$ and $u_{i,4j+\beta}$ are in $U$ and are the only vertices of $Q_{i,j}$ neighbors of the vertices of $K_{j}^{l^*}$ ; it remains true whether $\sigma_{l^*}$ sets value 1, 2, 3, 4 or 5 to $x_i$ with the convenient $\alpha$ and $\beta$. So all the neighbors of $K_{j}^{l^*}$ in the main part of the graph are not in $S$. Moreover, no vertex of $K_j$ is taken in the solution, and no vertex of $L_j \setminus L_{j}^{l^*}$ is taken in the solution. By these facts, and since the only edges between $L_{j}^{l^*}$ and $K_{j}^{l^*}$ is a perfect matching between the vertices of these two sets, it follows that each vertex of $L_{j}^{l^*}$ has a private neighbor, namely its unique neighbor in $K_{j}^{l^*}$. 

Consider now any $i \in \{ 1, \ldots, n \}$. The set $U$ never takes three consecutive vertices in the path $P_i$, so $(V(P_i) \setminus \{ u_{i,-3}, u_{i,-2}, u_{i,-1}, u_{i,4Fm}, u_{i,4Fm+1}, u_{i,4Fm+2} \}) \setminus U$ is a dominating set in the path $(V(P_i) \setminus \{ u_{i,-3}, u_{i,-2}, u_{i,-1}, u_{i,4Fm}, u_{i,4Fm+1}$, $u_{i,4Fm+2} \})$. Observe now that, for any $j \in \{ 0, \ldots, Fm-1 \}$, the vertices of the clique $K_j$ in the gadget $H_j$ are never taken by the solution, so the vertices of the path $P_i$ are only dominated by the vertices of $P_i$, whether the variable $x_i$ appears in $c_{j'}$ or not (for $j' = j \mod m$). Moreover, by the same argument, the neighbors in the verification gadgets of the vertices of the path $P_i$ taken in the solution are never taken in the solution. 

If $\rho(x_i) \in \{ 0, 1, 2, 3 \}$, then $U$ takes two consecutive vertices, leaves two consecutive vertices in $S$, takes again two consecutive vertices, and so on. In these cases, the two vertices of $S$ each have a private neighbor, namely their other neighbor in the path. If $\rho(x_i) \in \{ 4, 5 \}$, then $U$ takes a vertex, leaves a vertex in $S$, takes a vertex, and so on. In these cases, the vertices of $S$ are their own private vertex. So all the vertices of the path either have a private neighbor, or are their own private vertices. 

Nonetheless, we have to be more careful for the first and last sections (for $j = 0$ and $j = Fm-1$). By the step 3 of our construction of the solution $S$, and by some simple observations, we have that all vertices of the main part are dominated, and that the vertices of the main part which belong to the solution either have a private neighbor in the corresponding path, or are their own private vertices. \qed
\end{proof}

Let us now prove the other direction of our reduction. The idea of this proof is the following: by partitioning the graph into different parts and upper bound the cost of these parts, we prove that if an upper dominating set in $G$ has not the same form as in Lemma \ref{lem:forwardupperdom} in a sufficiently large copy, then it has size strictly less than $k$, enabling us to produce a satisfiable assignment for $\varphi$ using the copy where the upper dominating set has the desired form. 

\begin{lemma}\label{lem:barwardupperdom}
If there exists an upper dominating set of size at least $k$ in $G$, then $\varphi$ is satisfiable. 
\end{lemma}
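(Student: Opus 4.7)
The plan is to take any upper dominating set $S$ with $|S|\ge k$ and extract a satisfying assignment of $\varphi$. I would partition $V(G)$ into three kinds of components: the $nFm$ path sections $Q_{i,j}=\{u_{i,4j},u_{i,4j+1},u_{i,4j+2},u_{i,4j+3}\}$, the $Fm$ verification gadgets $H_j$, and the $6n$ boundary path vertices. The strategy is to upper bound the contribution of $S$ to each component so that, together with the choice of budget $k=Fm(2n+A)+2n$, the solution $S$ is forced into a rigid structure in most of the graph.

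For a single path section $Q_{i,j}$, a short case analysis shows $|S\cap Q_{i,j}|\le 2$: three or more vertices of $S$ in the induced four-vertex path would leave one of them without any private neighbour (its two path-neighbours are in $S$ and the verification neighbours are taken care of separately). Moreover, when $|S\cap Q_{i,j}|=2$ the two chosen vertices realise exactly one of six ``patterns'' corresponding to the values $\{0,1,2,3,4,5\}$ used in the forward direction. For a verification gadget $H_j$, the key observation is that $K_j$ and $L_j$ are each cliques: any two vertices of $S\cap V(H_j)$ must find distinct private neighbours outside their own sub-cliques, and the matching/non-matching structure between $K_j^\ell$ and $L_j^{\ell'}$ forces $|S\cap V(H_j)|\le A$, with equality only when $S\cap V(H_j)=L_j^{\ell^*}$ for some unique index $\ell^*$. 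Finally, the boundary contributes at most a constant per path. Summing these yields $|S|\le 2nFm+AFm+O(n)$, so $|S|\ge k$ leaves only $O(n)$ total slack across the whole graph.

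The next step is to read off a partial assignment from a tight configuration. Whenever $S\cap V(H_j)=L_j^{\ell^*}$, each vertex of $L_j^{\ell^*}$ needs a private neighbour, and the only viable candidates are its matched partners inside $K_j^{\ell^*}$. For such a partner to remain private, no path vertex adjacent to $K_j^{\ell^*}$ may belong to $S$; by construction these neighbours are precisely the two vertices of every $Q_{i,j}$ (with $x_i\in c_{j'}$) that any pattern consistent with $\sigma_{\ell^*}$ already excludes from $S$. Hence in every such ``clean-and-tight'' section $j$ the patterns of the variables appearing in $c_{j'}$ must agree with a single satisfying assignment $\sigma_{\ell^*}$ of that constraint, so the section directly witnesses its satisfaction.

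It remains to assemble a \emph{global} assignment. A case analysis on the boundary vertices $u_{i,4j+3}$ and $u_{i,4j+4}$ shows that, aside from a bounded number of benign transitions, any change of pattern of $x_i$ between sections $j$ and $j+1$ creates either an undominated vertex or a solution vertex without a private neighbour, costing at least one unit of budget. Combined with the $O(n)$ slack, this bounds the total number of pattern break-points across all $n$ paths, together with the number of non-clean sections and non-tight gadgets, by a quantity polynomial in $n$. The parameter $F=(2n+1)(4n+1)$ is chosen precisely so that, by pigeonhole, at least one of the $F$ consecutive length-$m$ blocks of sections is completely free of all these defects: within that block every variable $x_i$ has a single well-defined pattern value $\rho(x_i)$, and the gadget consistency above guarantees that $\rho$ satisfies every constraint of $\varphi$. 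The main obstacle I foresee is the precise accounting of pattern-change costs and the pigeonhole bookkeeping that justifies the concrete constants in $F$; these case analyses are tedious but routine once the structural lemmas of the first two steps are in place.
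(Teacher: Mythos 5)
Your overall decomposition (path sections, gadgets, boundary; budget-slack pigeonhole) is the same as the paper's, and the plan is directionally right. But two of the structural claims you lean on do not hold as stated, and they are exactly the places where the paper needs a non-obvious argument.

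First, the bound $|S\cap Q_{i,j}|\le 2$ is \emph{not} a short case analysis when $x_i$ appears in $c_{j\bmod m}$. With three consecutive path vertices in $S$, the middle one has both path-neighbours in $S$, but it can still be minimal with a private neighbour inside $K_j$: each vertex of $Q_{i,j}$ has many neighbours in $K_j$ for active variables, and the pairings induced by the assignments (e.g.\ the pair $\{u_{i,4j+1},u_{i,4j+3}\}$ for value $4$) leave room for such a private neighbour when $u_{i,4j+3}\notin S$. The paper avoids this by bounding $\cost(H_j\cup\bigcup_i Q_{i,j})$ jointly, \emph{assuming $D$ is of maximum size}, and using an exchange argument: if some active $Q_{i,j}$ contains three solution vertices, then all of $K_j$ is already dominated, so $\cost(H_j)\le 1$, and one can rebuild $D$ locally to strictly increase it — contradicting maximality. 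Your per-$Q_{i,j}$ bound, taken ``separately from the verification neighbours,'' does not go through.

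Second, and more centrally, your pattern-stability step is the crux and is unproved. You assert that, ``aside from a bounded number of benign transitions,'' a pattern change between $Q_{i,j}$ and $Q_{i,j+1}$ costs a unit of budget. In fact many pattern changes cost nothing at all (e.g.\ $(a)\to(d)\to(f)\to(c)$ from the paper's list are all feasible at full cost), and without an explicit bound on how many such free changes can happen you cannot conclude anything from $O(n)$ slack. The paper's observation — the actual missing idea — is that the free transitions between the six configurations form a DAG of path length at most $4$, so each variable can undergo at most $4$ free shifts over the whole construction, giving $\le 4n$ free break-points in addition to the $\le 2n$ ``problematic'' sections from the budget. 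This is precisely why $F=(2n+1)(4n+1)$ is chosen: first pigeonhole over the $\le 2n$ problematic sections to get a contiguous non-problematic interval of length $\ge m(4n+1)$, then pigeonhole again over the $\le 4n$ free shifts to get a constant-pattern interval of length $\ge m$. Without the DAG observation, ``a bounded number of benign transitions'' is an assumption, not a lemma, and the bookkeeping collapses.

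Once those two pieces are supplied the rest of your sketch (reading off an assignment from a clean-and-tight window and verifying it against each $H_j$) matches the paper.
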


\begin{proof}
Suppose that we are given an upper dominating set $D$ of maximum size. Before we proceed any further, let us define, for each $S \subseteq V$, its \textit{cost} as $cost(S) = |S \cap D|$. Clearly, $cost(V) \geq k$. Also, for two disjoint sets $S_1$ and $S_2$, we have $cost(S_1 \cup S_2) = cost(S_1) + cost(S_2)$. Our strategy will therefore be to partition $V$ into different parts and upper bound their cost. 

For each $j \in \{ 0, \ldots, Fm-1 \}$, let $V_j = H_j \cup \bigcup_{i=1}^n Q_{i,j}$, where $Q_{i,j} = \{ u_{i,4j}, u_{i,4j+1}, u_{i,4j+2}, u_{i,4j+3} \}$. 

\begin{claim}
$\cost(V_j) \leq 2n + A$.
\end{claim}

\begin{proof}

Consider any $j \in \{ 0, \ldots, Fm-1 \}$, and let $j' = j \mod m$. We will prove that $cost(H_j) \leq A$. Note that the vertex $w$ has to be dominated, so either it is in $D$, or at least one vertex of $L_j$ is in $D$. 

First, suppose that the vertex $w$ belongs to $D$. No vertex of $L_j$ can be in $D$, because otherwise $w$ has no private neighbor and is the neighbor of another vertex of $D$. Moreover, since $L_j$ is dominated, either only one vertex of $K_j$ belongs to $D$ and all the other vertices of $K_j$ can be its private neighbor, and in this case, the desired bound is obtained, or more that one vertex of $K_j$ belongs to $D$. In this case, since $K_j$ is a clique, and since $L_j$ is dominated, the vertices in $D \cap K_j$ must have their private neighbor in the main part of the graph. Note first that, for any $l \in \{ 1, \ldots, C_{j'} \}$, it cannot be the case that two vertices of $K_{j}^{l}$ are in $D$, since they share the same neighborhood. So the vertices of $K_j$ that belongs to $D$ are in at least two distinct cliques $K_j^{l_1}$ and $K_{j}^{l_2}$ (for $l_1, l_2 \in \{ 1, \ldots, C_{j'} \}$ and $l_1 \neq l_2$). Note that, for any $i \in \{ 1, \ldots, n \}$ such that $x_i$ is involved in $c_{j'}$, any vertex of $K_j$ is connected to two vertices of $Q_{i,j}$ (for $Q_{i,j} = \{ u_{i,4j}, u_{i,4j+1}, u_{i,4j+2}, u_{i,4j+3} \}$). So it cannot be the case that three vertices of $K_j$ are in $D$, because it would imply that one of them has to private neighbor. So if the vertex $w$ is in $D$, then we have $cost(H_j) \leq 3$. 

Let us now consider the case where $w$ does not belong to $D$. Suppose now that there exists $l_1, l_2 \in \{ 1, \ldots, C_{j'} \}$ with $l_1 \neq l_2$ such that at least two vertices of $L_{j}^{l_1}$, let say $v_1$ and $v_1'$, and at least one vertex of $L_{j}^{l_2}$, let say $v_2$, belong to $D$. Note that, since $L_j \cup \{ w \}$ is a clique, the three vertices $v_1$, $v_1'$ and $v_2$ must have, each of them, a private neighbor in $K_j$. Now observe that all the vertices of $L_{j}^{l_1}$ are connected to all vertices of $K_j \setminus K_j^{l_1}$, so the private neighbors of $v_1$ and $v_1'$ must belong to $K_j^{l_1}$. But the vertex $v_2$ is connected to all vertices of $K_j^{l_1}$, since all vertices of $L_j^{l_2}$ are, which implies that $v_1$ and $v_1'$ have no private neighbor. So it cannot be the case that at least two vertices of $L_j^{l_1}$ and at least one vertex of $L_j^{l_2}$ are in $D$, for any $l_1, l_2$. 

Suppose now that there exists $l_1, l_2, l_3 \in \{ 1, \ldots, C_{j'} \}$ with $l_1 \neq l_2 \neq l_3$ such that one vertex of $L_j^{l_1}$, let say $v_1$, one vertex of $L_j^{l_2}$, let say $v_2$, and one vertex of $L_j^{l_3}$, let say $v_3$, are in $D$. By a similar argument, we have that the private neighbor of $v_1$ has to be in $K_j^{l_1}$: it cannot be in $K_j^{l_2}$ since all vertices of $K_j^{l_2}$ are connected to $v_1$ and $v_3$ ; it cannot be in $K_j^{l_3}$ since all vertices of $K_j^{l_3}$ are connected to $v_1$ and $v_2$ ; and it cannot be in any other $K_j^{l'}$ (for $l' \neq l_1, l_2, l_3$) since the vertices of $K_j^{l'}$ are connected to $v_1$, $v_2$ and $v_3$. But observe that all the vertices of $K_j^{l_1}$ are connected to $v_2$ and $v_3$, which implies that $v_1$ has no private neighbor. So it cannot be the case that one vertex of $L_j^{l_1}$, one vertex of $L_j^{l_2}$ and one vertex of $L_j^{l_3}$, are in $D$, for any $l_1, l_2, l_3$. 

So, by these arguments, we have that at most $A$ vertices of $L_j \cup \{ w \}$ belong to $D$, i.e. the $A$ vertices of a single clique $L_j^{l}$ (for $l \in \{ 1, \ldots, C_{j'} \}$). Now, suppose that there exist $l \in \{ 1, \ldots, C_{j'} \}$ such that $D \cap L_j = L_j^{l}$. The private neighbors of these vertices taken in $D$ must be in $K_j^{l}$, which implies that no vertex of $K_j$ can be in $D$. It follows that $cost(H_j) \leq A$, and this bound is attained if there exists an $l \in \{ 1, \ldots, C_{j'} \}$ such that $L_j^{l} \subseteq D$ and such that the vertices of $K_j^{l}$ are only dominated by the vertices of $L_j^{l}$.  

Now, consider any $j \in \{ 0, \ldots, Fm-1 \}$ and any $i \in \{ 1, \ldots, n \}$ such that variable $x_i$ is involved in $c_{j'}$, for $j' = j \mod m$. Suppose that at least three vertices of $Q_{i,j}$ are in $D$, where we recall $Q_{i,j} = \{ u_{i,4j}, u_{i,4j+1}, u_{i,4j+2}, u_{i,4j+3} \}$. Then all vertices of $K_j$ are dominated, since every vertex of $K_j$ is connected to two vertices of $Q_{i,j}$. From this it follows that at most one vertex of $H_j$ is in $D$, since $L_j \cup \{ w \}$ is a clique. Let $W_j = H_j \cup \bigcup_{x_i \text{active}} Q_{i,j}$. We have $cost(W_j) \leq 4q + 1$. We construct another solution by doing the following: consider a satisfying assignment $\sigma_l$ in the list of $c_{j'}$ and take all vertices of the clique $L_j^{l}$ ; plus take all the vertices of $Q_{i,j}$ not neighbors of the vertices of $K_j^{l}$, for any active variable $x_i$ ; and modify the solution to obtain an upper dominating set. Clearly, it gives us a valid solution. Moreover, this has increase the total cost. Indeed, we lose at most $4q+1$ vertices: at most $2$ vertices per $Q_{i,j}$ if the original solution had taken the four vertices ; at most the two vertices $u_{i,4j-1}$ and $u_{i,4(j+1)}$, for each $x_i$ active, in order to keep the solution valid ; and the vertex of $L_j \cup \{ w \}$. On the other side, we have added $4q+2$ vertices: the $A = 4q+2$ vertices of $L_j^l$. Doing so should not be possible since $D$ is of maximum size, so for any active variable $x_i$, at most two vertices of $Q_{i,j}$ belong to $D$. 

Now, consider any $j \in \{ 0, \ldots, Fm-1 \}$ and any $i \in \{ 1, \ldots, n \}$ such that variable $x_i$ is not involved in $c_{j'}$, for $j' = j \mod m$. Observe that, since the vertices of $Q_{i,j}$ are not connected to any verification gadget, it cannot be the case that three vertices of $Q_{i,j}$ belong to $D$, because otherwise at least one of them would be neighbor of another vertex of $D$ and would have no private neighbor. 

We now have all the lower bounds we need: $cost(H_j) \leq A$ ; and $cost(Q_{i,j}) \leq 2$, whether $x_i$ is active or not. So $cost(V_j) \leq 2n + A$. \qed
\end{proof}


We will say that $j$ is \textit{problematic} if $cost(V_j) < 2n + A$. 

Now, consider any $i \in \{ 1, \ldots, n \}$ and observe that among the three vertices $u_{i,-3}, u_{i,-2}$ and $u_{i,-1}$, at most two vertices can be in $D$, because otherwise the vertex $u_{i,-3}$ has no private neighbor. The same observation holds for the last three vertices $u_{i,4Fm}, u_{i,4Fm+1}$ and $u_{i,4Fm+2}$. 

Let $L \subseteq \{ 0, \ldots, Fm-1 \}$ be the set of problematic indices. We claim that $|L| \leq 2n$. Indeed, we have $cost(V) \leq \sum_{j=0}^{Fm-1} cost(V_j) + 4n \leq Fm(2n+A) - |L| + 4n = k + 2n - |L|$. But since the total cost is at least $k$, we have $|L| \leq 2n$. Now consider the longest contiguous interval $K \subseteq \{ 0, \ldots, Fm-1 \}$ such that all $j \in K$ are not problematic. Since $F = (2n+1)(4n+1)$, we have $K \geq Fm / (|L|+1) = m(4n+1)$. 

Before we proceed further, note that if $j$ is not problematic, then we have the following: $cost(H_j) = A$, which implies that there exists $l \in \{ 1, \ldots, C_{j'} \}$ such that $L_j^l \subseteq D$ and such that the vertices of $K_j^l$ are only dominated by $L_j^l$ ; for any $i \in \{ 1, \ldots, n \}$, $cost(Q_{i,j}) = 2$, so exactly two vertices in $Q_{i,j}$ are in $D$, and these two vertices are not connected to the vertices of $K_j^l$ (since this set is only dominated by $L_j^l$). 

Consider now a non-problematic $j \in K$ and $i \in \{ 1, \ldots, n \}$. Since $cost(Q_{i,j}) = 2$, we claim that the solution must follow one of the six following configurations below:

(a) $u_{i,4j}, u_{i,4j+1} \in D$

(b) $u_{i,4j+1}, u_{i,4j+2} \in D$

(c) $u_{i,4j+2}, u_{i,4j+3} \in D$

(d) $u_{i,4j+3}, u_{i,4j} \in D$

(e) $u_{i,4j}, u_{i,4j+2} \in D$

(f) $u_{i,4j+1}, u_{i,4j+3} \in D$

Indeed, it is not hard to see that these six configurations cover all the cases where exactly two vertices of $Q_{i,j}$ are in $D$ (since $cost(Q_{i,j}) = 2$). 

\begin{claim}
There exists a contiguous interval $K^* \subseteq \{ 0, \ldots, Fm-1 \}$ of size at least $m$ in which all all $j \in K^*$ are not problematic and for all $j_1, j_2 \in K^*$, $Q_{i,j_1}$ and $Q_{i,j_2}$ are in the same configuration. 
\end{claim}

\begin{proof}
We make the following observations. For any $j \in K$ and any $i \in \{ 1, \ldots, n \}$, the vertices of $Q_{i,j}$ which are not in $D$ are only dominated by the vertices of the main part. Firstly, it is obvious if $x_i$ is not active in $c_{j'}$ (for $j' = j \mod m$) since in this case the vertices of $Q_{i,j}$ are not connected to any verification gadget. If $x_i$ is active in $c_{j'}$, it is also clear when we note that no vertex of $K_j$ is taken in the solution (since $cost(H_j) = A$). Moreover, the vertices of $Q_{i,j}$ which are in $D$ are not neighbors of vertices in $D$ outside the main part. It is again obvious if $x_i$ is not active in $c_{j'}$. If $x_i$ is active in $c_{j'}$, it is also clear since no vertex of $K_j$ is taken in $D$. Furthermore, the neighbors in the verification gadgets of the vertices of $Q_{i,j}$ not in the solution are all dominated by the vertices of $L_j^l$. From these observations, we obtain the following: the vertices of $Q_{i,j}$ which are in $D$ must have a private neighbor in the path $P_i$ or must be their own private vertex ; and the vertices of $Q_{i,j}$ which are not in $D$ must be dominated by the vertices in the path $P_i$. 

Now, given these observations, and the six configurations given before, we make the following statements, where a statement apply for any $i \in \{ 1, \ldots, n \}$ and $j$ such that $j$ and $j+1$ are in $K$:

\begin{itemize}
    \item If $Q_{i,j}$ is in configuration (a), then $Q_{i,j+1}$ is in configuration (a), (d) or (e)
    \item If $Q_{i,j}$ is in configuration (b), then $Q_{i,j+1}$ is in configuration (b) or (f)
    \item If $Q_{i,j}$ is in configuration (c), then $Q_{i,j+1}$ is in configuration (c)
    \item If $Q_{i,j}$ is in configuration (d), then $Q_{i,j+1}$ is in configuration (c), (d) or (f)
    \item If $Q_{i,j}$ is in configuration (e), then $Q_{i,j+1}$ is in configuration (b), (d), (e) or (f)
    \item If $Q_{i,j}$ is in configuration (f), then $Q_{i,j+1}$ is in configuration (c) or (f)
\end{itemize}

For the first statement, we have the following: (b), (c) and (f) cannot follow (a) since it would left at least one vertex not dominated. For the second statement, we have the following: (a), (d) and (e) cannot follow (b) since at least one vertex will not have a private neighbor ; (c) cannot follow (b) since it would left a vertex non dominated. For the third statement, we have the following: (a), (b), (d), (e) and (f) cannot follow (c) since at least one vertex will not have a private neighbor. For the fourth statement, we have the following: (a), (b) and (e)  cannot follow (d) since at least one vertex will not have a private neighbor. For the fifth statement, we have the following: (a) cannot follow (e) since at least one vertex will not have a private neighbor ; (c) cannot follow (e) since it would left a vertex non dominated. For the last statement, we have the following: (a), (b), (d) and (e) cannot follow (f) since at least one vertex will not have a private neighbor. 

For some $i \in \{ 1, \ldots, n \}$ and $j \in K$, we will say that $j$ is \textit{shifted} for variable $i$ if $j+1 \in K$ but $Q_{i,j}$ and $Q_{i,j}$ are not in the same configuration. We observe that there cannot exist distinct $j_1, j_2, j_3, j_4, j_5 \in K$ such that they are all shifted for variable $i$. Indeed, if we draw a directed graph with a vertex for each configuration and an arc $(u,v)$ expressing the property that the configuration represented by $v$ can follow the configuration represented by $u$, then we observe that the graph obtained is a DAG of maximum length 4. 

Then, by the above, the number of shifted indices $j \in K$ is at most $4n$. Hence, the longest contiguous interval without shifted indices has length at least $|K| / (4n+1) \geq m$, since $|K| \geq m(4n+1)$. Let $K^*$ be this interval. \qed
\end{proof}

We have located an interval $K^* \subseteq \{ 0, \ldots, Fm-1 \}$ of length at least $m$ where, for all $i \in \{ 1, \ldots, n \}$ and all $j_1, j_2 \in K^*$, we have the same configuration in $Q_{i,j_1}$ and $Q_{i,j_2}$. We now extract a satisfying assignment for $\varphi$ from this in the natural way. For some $j \in K^*$: if $Q_{i,j}$ is in configuration (a), then we set $x_i = 0$ ; if $Q_{i,j}$ is in configuration (b), then we set $x_i = 1$ ; if $Q_{i,j}$ is in configuration (c), then we set $x_i = 2$ ; if $Q_{i,j}$ is in configuration (d), then we set $x_i = 3$ ; if $Q_{i,j}$ is in configuration (e), then we set $x_i = 4$ ; if $Q_{i,j}$ is in configuration (f), then we set $x_i = 5$. We claim this satisfies $\varphi$. Consider a constraint $c_{j'}$ of $\varphi$. There must exist $j \in K^*$ such that $j' = j \mod m$ since $|K^*| \geq m$ and $K^*$ is contiguous. We therefore check the verification gadget $H_j$, where there exists $\sigma_l$ such that $L_j^l \subseteq D$ (this is because $j$ is not problematic, that is, $H_j$ attains its maximum cost). But because the vertices of $K_j^l$ are only dominated by the vertices $L_j^l$ and not by the vertices of the main part, it must be the case the the assignment we extracted agrees with $\sigma_l$, hence $c_{j'}$ is satisfied. This is true for all constraint $c_{j'}$ of $\varphi$. \qed
\end{proof}

We can now show that the pathwidth of $G$ is bounded by $n + O(1)$. 

\begin{lemma}\label{lem:pathwidthupperdom}
The pathwidth of $G$ is at most $n + O(1)$. 
\end{lemma}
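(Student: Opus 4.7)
The plan is to build a path decomposition by sweeping left-to-right through the $n$ main paths, processing the verification gadgets $H_0, H_1, \ldots, H_{Fm-1}$ in order and absorbing each one into the bag only during the brief window when the four path-vertices it is attached to are simultaneously present. Since $q$ is a constant (fixed by the \textsc{$q$-CSP-$6$} instance), each $H_j$ has $2AC_{j'} + 1 \le 2(4q+2)\cdot 6^q + 1 = O(1)$ vertices, and each constraint touches exactly $q$ of the paths, so the ``local'' overhead per section is $O(1)$.

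More concretely, I would describe the decomposition inductively. Maintain, between sections, a bag consisting of exactly one ``boundary'' vertex from each path $P_i$, giving $n$ vertices. To move from section $j$ to section $j+1$, do the following for each $i \in \{1,\ldots,n\}$: introduce $u_{i,4j}, u_{i,4j+1}, u_{i,4j+2}, u_{i,4j+3}$ one at a time, each time forgetting the previous boundary vertex of $P_i$ as soon as its remaining edges are processed, but, for those $i$ whose variable $x_i$ is active in $c_{j \bmod m}$, keep all four of $u_{i,4j}, \ldots, u_{i,4j+3}$ in the bag. Then introduce all vertices of $H_j$ into the bag, which processes every gadget edge (internal edges of the cliques $K_j, L_j$, the matchings between $K_j^l$ and $L_j^l$, the vertex $w$, and the edges to the active path-vertices), and finally forget the $H_j$ vertices together with $u_{i,4j}, u_{i,4j+1}, u_{i,4j+2}$ for each active $i$, leaving $u_{i,4j+3}$ as the new boundary.

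The bag size at its maximum moment contains: one vertex of $P_i$ for each non-active $i$ ($n - q$ vertices), four consecutive vertices of $P_i$ for each active $i$ ($4q$ vertices), and all of $H_j$ ($O(1)$ vertices). Thus the width is at most $(n-q) + 4q + |H_j| = n + 3q + O(1) = n + O(1)$. It is straightforward to check that this sequence of bags is a valid path decomposition: every edge of a path $P_i$ is covered when its two endpoints coexist in the bag during the advance step, every internal edge of $H_j$ is covered when $H_j$ is entirely in the bag, and every edge between $H_j$ and an active path is covered because the four path-vertices of the active section are in the bag together with all of $H_j$; finally each vertex's presence in the bags forms a contiguous interval by construction.

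I do not anticipate a substantial obstacle beyond bookkeeping: the main thing to be careful about is that the four vertices of every active section of every active path must be in the bag simultaneously with the \emph{entire} gadget $H_j$, and that the number of active variables per constraint is $q$ (not $n$), so that the temporary overflow above the baseline of $n$ vertices stays constant. Since $q$ is a constant and each $H_j$ has constant size, both contributions collapse to $O(1)$, yielding pathwidth $n + O(1)$. \qed
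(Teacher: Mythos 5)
Your proof is correct and follows essentially the same approach as the paper: treat the $n$ main paths as a backbone with one boundary vertex per path ($n$ vertices), and absorb the constant-size gadget $H_j$ into the bags only while sweeping through section $j$. The only cosmetic differences are that the paper keeps $H_j$ in every bag of the section-$j$ window (advancing one path's $Q_{i,j}$ at a time, giving width $n + 3 + |H_j|$) whereas you advance all paths first and then introduce $H_j$ (giving width $n + 3q + |H_j|$), and the paper also explicitly handles the extra vertices $u_{i,-3},u_{i,-2},u_{i,-1}$ and $u_{i,4Fm},u_{i,4Fm+1},u_{i,4Fm+2}$; none of this affects the $n + O(1)$ bound since $q$ is a constant.
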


\begin{proof}
We will show how to build a path decomposition of $G$. As in Lemma \ref{lem:barwardupperdom}, for all $j \in \{ 0, \ldots, Fm-1 \}$, let $V_j = H_j \cup \bigcup_{i=0}^n Q_{i,j}$, where $Q_{i,j} = \{ u_{i,4j}, u_{i,4j+1}$, $u_{i,4j+2}$, $u_{i,4j+3} \}$. We will show how to obtain a path decomposition of $G[V_j]$ with the following properties:

\begin{itemize}
    \item The first bag of the decomposition contains the vertices $u_{i,4j}$, for all $i \in \{ 1, \ldots, n \}$
    \item The last bag of the decomposition contains the vertices $u_{i,4j+3}$, for all $i \in \{ 1, \ldots, n \}$
    \item The width of the decomposition is $n + O(q6^q)$
\end{itemize}

We now show how to obtain such a decomposition of $G[V_j]$, having partially fixed the contents of the first and last bag of the decomposition. The verification gadget $H_j$ contains at most $2(4q+2)(6^q-1) + 1$ vertices (since $6^q-1$ is an upper bound on the number of assignments in the list of the corresponding constraint), so we place all its vertices in all bags. The remaining graph is a union of paths of length 4. We therefore have a sequence of $O(n)$ bags, where, for each $i \in \{ 1, \ldots, n \}$, we add to the current bag the vertices of $Q_{i,j}$ and then we add another bag with $Q_{i,j}$ removed except for $u_{i,4j+3}$. 

Now that we have found a path decomposition of $G[V_j]$ with the desired properties, we present how to obtain a path decomposition of the whole graph. The sets $V_j$ partition all remaining vertices of the graph (except the first three vertices and the last three vertices of each path $P_i$), while the only edges not covered by the above decompositions of $G[V_j]$ are those between the vertices $u_{i,4j+3}$ and $u_{i,4(j+1)}$. We therefore place the decompositions of $G[V_j]$ in order, and then, between the last bag of the decomposition of $G[V_j]$ and the first bag of the decomposition of $G[V_{j+1}]$, we have $2n$ "transition" bags, where in each transition step we add a vertex $u_{i,4(j+1)}$ in the bag, and then remove the corresponding vertex $u_{i,4j+3}$. 

We have now a path decomposition of the whole graph except the first three and the last three vertices of each path $P_i$, for all $i \in \{ 1, \ldots, n \}$. So, before the first bag of the decomposition of $G[V_0]$, we have a sequence of $O(n)$ bags, where, for each $i \in \{ 1, \ldots, n \}$, we add to the current bag the four vertices $u_{i,-3}, u_{i,-2}, u_{i,-1}$ and $u_{i,0}$ and then we add another bag with only the vertex $u_{i,0}$. We use the same method for the last three vertices of the paths $P_i$, after the decomposition of $G[V_{Fm-1}]$. 

Thus, we obtain a path decomposition of with $n+O(1)$. \qed
\end{proof}

We are now ready to present the main result of this section:

\begin{theorem}\label{th:pathwidthupperdom}
Under SETH, for all $\varepsilon > 0$, no algorithm solves \textsc{Upper Dominating Set} in time $O^*((6-\varepsilon)^{pw})$, where $pw$ is the input graph's pathwidth.
\end{theorem}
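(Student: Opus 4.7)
The plan is to put together the machinery already developed: the reduction from \textsc{$q$-CSP-6} to \textsc{Upper Dominating Set} (Lemmas \ref{lem:forwardupperdom} and \ref{lem:barwardupperdom}), the pathwidth bound on the produced graph (Lemma \ref{lem:pathwidthupperdom}), and the SETH-based hardness of \textsc{$q$-CSP-6} (Lemma \ref{lem:lampis}). All real work has been done, so the theorem follows by a standard contrapositive argument.

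Concretely, fix $\varepsilon > 0$ and suppose for contradiction that there exists an algorithm $\mathcal{A}$ solving \textsc{Upper Dominating Set} in time $O^*((6-\varepsilon)^{pw})$. By Lemma \ref{lem:lampis} applied with this $\varepsilon$, there is a constant $q$ such that \textsc{$q$-CSP-6} cannot be solved in time $O^*((6-\varepsilon/2)^{n})$ under SETH (we can take a slightly smaller constant than $\varepsilon$ so as to absorb the $n + O(1)$ overhead). I would then take an arbitrary instance $\varphi$ of \textsc{$q$-CSP-6} with $n$ variables and $m$ constraints, apply our construction to obtain a graph $G$, and invoke the three lemmas above.

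The correctness chain is: Lemmas \ref{lem:forwardupperdom} and \ref{lem:barwardupperdom} show that $\varphi$ is satisfiable if and only if $G$ has an upper dominating set of size at least $k$; Lemma \ref{lem:pathwidthupperdom} yields a path decomposition of $G$ with $pw \leq n + O(1)$; and the reduction itself runs in time polynomial in $n$ and $m$ since $F$, $A$, and $k$ are polynomial in these parameters. Running $\mathcal{A}$ on $G$ with the produced decomposition therefore decides satisfiability of $\varphi$ in total time
\[
\mathrm{poly}(n,m) + O^*\bigl((6-\varepsilon)^{n+O(1)}\bigr) = O^*\bigl((6-\varepsilon)^{n}\bigr),
\]
which for large enough $n$ is bounded by $O^*((6-\varepsilon/2)^n)$ and contradicts Lemma \ref{lem:lampis}, hence SETH.

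There is essentially no technical obstacle left, since all gadget-level analysis is contained in the earlier lemmas; the only thing to be careful about is that the additive $O(1)$ slack in the pathwidth bound really is absorbed by choosing the constant in Lemma \ref{lem:lampis} slightly smaller than $\varepsilon$, and that one invokes Lemma \ref{lem:lampis} \emph{after} fixing $\varepsilon$ so that the constant $q$ (and therefore the size of each verification gadget $H_j$) is a fixed constant that does not interfere with the polynomial-time reduction.
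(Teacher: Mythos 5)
Your proposal is correct and follows essentially the same route as the paper: apply Lemma~\ref{lem:lampis} to fix $q$, run the reduction, combine Lemmas~\ref{lem:forwardupperdom}, \ref{lem:barwardupperdom}, and~\ref{lem:pathwidthupperdom}, and conclude by contraposition. The only (harmless) over-caution is shaving $\varepsilon$ to $\varepsilon/2$ to absorb the additive $O(1)$ in the pathwidth; since $(6-\varepsilon)^{n+O(1)} = O((6-\varepsilon)^n)$ that adjustment isn't actually needed, and the paper simply absorbs it into the $O^*$ notation.
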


\begin{proof}
Fix $\varepsilon >0$ and let $q$ be sufficiently large so that Lemma \ref{lem:lampis} is true. Consider an instance $\varphi$ of \textsc{$q$-CSP-6}. Apply our reduction to obtain an instance $(G,k)$ of \textsc{Upper Domination}. Thanks to Lemmas \ref{lem:forwardupperdom} and \ref{lem:barwardupperdom}, we know that $\varphi$ is satisfiable if and only if there exists an upper dominating set of size at least $k$ in $G$. 

Now suppose that there exists an algorithm that solves \textsc{Upper Domination} in time $O^*((6-\varepsilon)^{pw})$. With this algorithm and our reduction, we can determine if $\varphi$ is satisfiable in time $O^*((6-\varepsilon)^{pw})$, where $pw = n + O(1) $ (Lemma \ref{lem:pathwidthupperdom}), so the total running time of this procedure is at most $O^*((6-\varepsilon)^n)$, contradicting SETH. \qed 
\end{proof}

\section{Sub-Exponential Approximation}\label{section:subexponential}

\subsection{Sub-Exponential Approximation Algorithm}

In this section, we present a sub-exponential approximation algorithm for the \textsc{Upper Dominating Set} problem. We prove the following: for any $r < n$, there exists an $r$-approximation algorithm for the \textsc{Upper Dominating Set} problem running in time $n^{O(n/r)}$.

To show this result, we use a common tool to design sub-exponential algorithms: partitioning the set of vertices $V(G)$ of the input graph into a convenient number of subsets of the same size. On each subset, we create a number of solutions: all maximal independent sets $I$ in the subgraph induced by the considered set of vertices ; and all subsets $S$ of the considered subset. For each maximal independent set $I$, we extend it to the whole graph. For each subset $S$, we first go through all subsets of neighbors of vertices of $S$ in order to find the correct set of private neighbors, and then we extend the solution to the whole graph. At the end, we output the best solution encountered. By computing all maximal independent sets $I$ and by going through all subsets $S$, we prove that there exists at least one valid upper dominating set which has the desired size. Note that, given a subset of an upper dominating set whose vertices have private neighbors, it may be impossible to extend the partial solution if we do not know their private vertices. This is why we need to find the private vertices of the subset $S$ we consider, since in our proof the solution which has the desired size may come from such a subset $S$. We prove the following:

\begin{theorem}\label{th:subexapproxuds}
For any $r < n$, \textsc{Upper Dominating Set} is $r$-approximable in time $n^{O(n/r)}$.
\end{theorem}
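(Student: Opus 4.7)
The plan is to partition $V$ into $r$ groups $V_1,\dots,V_r$ of size $\lceil n/r\rceil$ and, in each group, exhaustively guess (i) a subset $S\subseteq V_i$ that is to serve as the intersection of our solution with $V_i$, and (ii) a private-neighbor assignment $p:S\to V\cup\{\star\}$, where $p(v)=\star$ marks $v$ as its own private vertex and $p(v)=w\in N(v)$ marks $w$ as $v$'s designated private neighbor. Let $S_\star=p^{-1}(\star)$ and $S_{\mathrm{priv}}=S\setminus S_\star$. For each $(S,p)$ I would first check the obvious consistency conditions (the set $S_\star$ is independent and has no edge to $S_{\mathrm{priv}}$; for each $v\in S_{\mathrm{priv}}$ the vertex $p(v)$ lies in $N(v)\setminus S$ and has no other neighbor in $S$; $p$ restricted to $S_{\mathrm{priv}}$ is injective), and when consistent I would extend via the following deterministic procedure: form the forbidden set
\[
F \;=\; p(S_{\mathrm{priv}}) \;\cup\; \bigl(N(S_{\star})\setminus S\bigr) \;\cup\; \bigl(N(p(S_{\mathrm{priv}}))\setminus S\bigr),
\]
set $D_{\max}=V\setminus F$, and then iteratively remove any vertex of $D_{\max}$ that has no private neighbor and is not self-private, until a minimal dominating set $D$ is obtained. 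The algorithm outputs the largest $D$ found over all guesses. The number of pairs $(S,p)$ per group is at most $\sum_{k\leq n/r}\binom{n/r}{k}(n+1)^k=(n+2)^{n/r}$; the consistency check and the minimalization are polynomial; so the total running time is $r\cdot(n+2)^{n/r}\cdot\mathrm{poly}(n)=n^{O(n/r)}$.

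For correctness, fix an optimal upper dominating set $D^\ast$ with $|D^\ast|=\Gamma(G)$ and, by pigeonhole, an index $i^\ast$ with $|D^\ast\cap V_{i^\ast}|\geq\Gamma(G)/r$. Let $S^\ast=D^\ast\cap V_{i^\ast}$ and let $p^\ast$ be the private assignment $D^\ast$ induces on $S^\ast$. The candidate $(S^\ast,p^\ast)$ is enumerated at some point and is consistent. One checks that $D^\ast\cap F=\emptyset$: a vertex of $p^\ast(S^\ast_{\mathrm{priv}})$ is a designated private neighbor and hence lies outside $D^\ast$; a vertex of $N(S^\ast_\star)\setminus S^\ast$ lying in $D^\ast$ would dominate some self-private member of $D^\ast$; and a vertex of $N(p^\ast(S^\ast_{\mathrm{priv}}))\setminus S^\ast$ lying in $D^\ast$ would be a second dominator of some designated private neighbor. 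Therefore $D^\ast\subseteq D_{\max}$ and $D_{\max}$ dominates $V$. Moreover, in $D_{\max}$ every $v\in S^\ast_\star$ is self-private because all its neighbors lie in $F$, and every $v\in S^\ast_{\mathrm{priv}}$ still has $p^\ast(v)$ as a private neighbor because the only neighbor of $p^\ast(v)$ inside $D_{\max}$ is $v$ itself.

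The main obstacle is to verify that no vertex of $S^\ast$ is ever removed during the minimalization step. I would argue this as follows: removing a vertex $u\notin S^\ast$ from the current set can only shrink the in-solution neighborhood of each remaining vertex, so the private-neighbor (or self-private) status established for every $v\in S^\ast$ in $D_{\max}$ is preserved along the way, and no $v\in S^\ast$ ever becomes redundant. The minimalization therefore terminates at a minimal dominating set $D\supseteq S^\ast$ of size $|D|\geq|S^\ast|\geq\Gamma(G)/r$, which yields the claimed $r$-approximation and completes the proof.
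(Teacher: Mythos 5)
Your proof is correct, and it is a genuinely cleaner route than the one in the paper. You partition into $r$ groups (rather than $\lfloor r/2\rfloor$), and you unify the two kinds of solution vertices (self-private and those with a private neighbor) into a single guess $p:S\to V\cup\{\star\}$, whereas the paper treats them as two separate sub-cases: it enumerates all maximal independent sets of $G[V_i]$ (via the Moon--Moser bound) for the self-private half of $D^*\cap V_{i^*}$, and separately enumerates subsets $S\subseteq V_i$ with private-neighbor guesses for the other half, then builds auxiliary sets $N_{SP},N_S,N_P,V_{SP},Q_P$ and two layers $T_1,T_2$ of added vertices to complete the extension. Your extension step is considerably simpler: one forbidden set $F$, then a greedy peel of $V\setminus F$, with the clean monotonicity observation that peeling never destroys the private-neighbor or self-private status you secured for $S^*$ at the start. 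The only explicit detail worth adding is that you should discard any guess $(S,p)$ for which $D_{\max}=V\setminus F$ fails to dominate $V$ (for incorrect guesses the peeling need not reach a dominating set); this is trivial to check and does not affect the correct guess, since there $D^*\subseteq D_{\max}$. With that, both arguments establish the same guarantee, an $r$-approximation in $n^{O(n/r)}$ time, but yours avoids the two-case split and the auxiliary-set bookkeeping entirely.
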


\begin{proof}
Let $D^* = S^* \cup I^*$ be any maximum upper dominating set of $G$, where $S^*$ is the set of vertices of $D^*$ which have some private neighbors, and $I^*$ is the set of vertices of $D^*$ which forms an independent set. 

We begin our algorithm by partitioning the set of vertices $V(G)$ into $l$ subsets $V_1, \ldots, V_l$, where $l = \lfloor\frac{r}{2} \rfloor$. 

Now, for each $i \in \{ 1, \ldots, l \}$, we do the following:
\begin{enumerate}
    \item Enumerate all maximal independent sets of $G[V_i]$. Let $\mathcal{I}_i$ be this family of independent sets.
    \item For each maximal independent set $I \in \mathcal{I}_i$, do the following:
    \begin{enumerate}
        \item Extend $I$ greedily to obtain an independent set $I'$ of the whole graph $G$, in the natural way: while there exists a vertex $u \in V \setminus N[I']$, add $u$ to $I'$. 
    \end{enumerate}
    \item Consider all subsets of vertices $S$ of $V_i$. 
    \item For each such subset $S \subseteq V_i$, do the following:
    \begin{enumerate}
        \item For each vertex $u \in S$, go trough all vertices $v \in N(u) \setminus N[S]$ so that the vertex $v$ is the private neighbor of $u$. 
        \item Let $P$ be the set of private neighbors of the vertices of $S$ found in the previous step, if such a set exists. 
        \item Let $N_{SP} = N(S) \cap N(P)$, $N_{S} = N(S) \setminus N_{SP}$, $N_P = N(P) \setminus N_{SP}$, $V_{SP} = V \setminus (N[S] \cup N[P])$, and $Q_P = N_P \setminus N(V_{SP})$. 
        \item We extend the partial solution $S$ as follows:
        \begin{enumerate}
            \item Let $T_1 = N(Q_P) \cap N_S$. 
            \item Greedily remove vertices of $T_1$ which have not a private neighbor in $N_P$, that is vertices $u \in T_1$ such that $(N(u) \cap N_P) \subseteq N(T_1 \setminus \{ u \})$. 
            \item Let $T_2 = N(N_P \setminus N(T_1)) \cap V_{SP}$. 
            \item Greedily remove vertices of $T_2$ which have not a private neighbor in $N_P \setminus N(T_1)$, that is vertices $u \in T_2$ such that $(N(u) \cap (N_P \setminus N(T_1))) \subseteq N(T_2 \setminus \{ u \})$. 
            \item Extend $S \cup T_1 \cup T_2$ greedily to obtain an upper dominating set $S'$ of the whole graph $G$, in the natural way: while there exists a vertex $u \in V_{SP} \setminus N[T_2]$, add $u$ to $S'$. 
            \item Discard $S'$ if it is not an upper dominating set of $G$. 
        \end{enumerate}
    \end{enumerate}
    \item Output the solution of maximum size encountered. 
\end{enumerate}

We first prove that our algorithm has the desired running-time. For each $i \in \{ 1, \ldots, l \}$, the set $V_i$ is of size roughly $\frac{n}{l} = 2n/r$, so we have that enumerating all maximaul independent sets of $G[V_i]$ takes time $O^*(3^{2n/3r})$, by the well-known result of Moon and Moser \cite{moon1965cliques} which states that computing all maximal independent sets of a graph of order $n$ can be done in time $O^*(3^{n/3})$. Moreover, by the same upper-bound on the size of the set $V_i$, we have that considering all subsets $S \subseteq V_i$ takes time $2^{2n/r}$, and there is that many subsets $S$. Now, observe that at the step 4.(a), for a vertex $u \in S$, we go through all vertices $v \in N(u) \setminus N[S]$, so through at most $n$ vertices, and that there is at most $2n/r$ such vertices $u \in S$. So for a subset $S \subseteq V_i$, we consider at most $n^{2n/r}$ sets $P$ of private neighbors of the vertices of $S$. Note that the other steps of our algorithm can be done in polynomial time. So the total running-time of our algorithm is:
\[ k \cdot (O^*(3^{2n/3r}) + 2^{2n/r} \cdot n^{2n/r}) = n^{O(n/r)} \]

Now, we will prove that our algorithm outputs an upper dominating set. Consider any $i \in \{ 1, \ldots, l \}$ and any maximal independent set $I \in \mathcal{I}_i$ of $G[V_i]$. Note that, since $I$ is a maximal independent set of $G[V_i]$, it can be easily extended to obtain a maximal independent set $I'$ of the whole graph $G$. Indeed, by greedily adding vertices of $V \setminus N[I']$, we obtain at the end of the step 2.(a) a maximal independent set of $G$, since every vertex of $V(G)$ is either in $I'$ or has a neighbor in $I'$. Note that, since $I'$ is maximal, it is also an upper dominating set: all vertices of $V(G)$ are dominated and the vertices of $I'$ form an independent set. So all the independent set $I'$ for all $i$ are valid upper dominating sets of the graph $G$. 

Consider any $i \in \{ 1, \ldots, l \}$. For the sets $S'$ constructed at step 4 of our algorithm, we will show that at least one of them is an upper dominating set of $G$. Since we consider all subsets $S$ of $V_i$, we consider the set $S_i^* = S^* \cap V_i$. Then, for each vertex $u$ in this set $S_i^*$, we consider
all its neighbors in $N(u) \setminus N[S_i^*]$ to be its private neighbor. So we consider the set $P_i^*$ which contains the private neighbor $v$ for each vertex $u \in S_i^*$ associated to the optimal solution $D^*$. Observe that the sets $N_{S_i^*P_i^*}$ and $N_{S_i^*}$ are dominated by $S_i^*$. Now consider the
vertices of the set $Q_{P_i^*}$: they are not neighbors of $V_{S_i^*P_i^*}$ by definition ; and they cannot be dominated by $N_{P_i^*} \cup N_{S_i^*P_i^*}$ since this set contains only neighbors of the vertices of $P_i^*$. So the vertices of $Q_{P_i^*}$ can only be dominated by vertices of $N_{S_i^*}$. By
our construction, the set $T_1$ is a set of vertices of $N_{S_i^*}$ which dominates $Q_{P_i^*}$ and such that each vertex $u \in T_1$ has a private neighbor. So the set $Q_{P_i^*}$ is dominated by $T_1$ and the vertices of $T_1$ each have at least one private neighbor (in $Q_{P_i^*}$ or in $N_{P_i^*} \setminus
Q_{P_i^*}$). Now consider the vertices of the set $N_{P_i^*} \setminus N(T_1)$: they cannot be in the solution since they are neighbors of $P_i^*$ ; and they all have at least one neighbor in $V_{S_i^*P_i^*}$ (since they were not in $Q_{P_i^*}$). By our construction, the set $T_2$ is a set of vertices of
$V_{S_i^*P_i^*}$ which dominates $N_{P_i^*} \setminus N(T_1)$ and such that each vertex $u \in T_2$ has a private neighbor in $N_{P_i^*} \setminus N(T_1)$: if a vertex of $T_2$ has no private neighbor in $N_{P_i^*} \setminus N(T_1)$, then it is removed from $T_2$ and $N_{P_i^*} \setminus N(T_1)$ stay dominated. Now observe that all vertices of $T_2$ have their private neighbor in
$N_{P_i^*} \setminus N(T_1)$. So we can greedily extend $S_i^*
\cup T_1 \cup T_2$ in a maximal independent set fashion by adding
vertices of $V_{S_i^*P_i^*} \setminus N[T_2]$ until the whole
graph $G$ becomes dominated. So the set ${S_i^*}'$ obtained is an
upper dominating set of $G$. So for any $i \in \{ 1, \ldots, l
\}$, there exists at least one set $S'$ which is an upper dominating set of $G$, and the non-valid solutions are discarded at the end of step 4.(d). 

Thus, the algorithm always outputs an upper dominating set.

Now, we will prove the approximation ratio. Note first that, since we have partitioned $V(G)$ into $l = \lfloor \frac{r}{2} \rfloor$ equal-size subsets $V_1, \ldots, V_l$, there exists $i^* \in \{ 1, \ldots, l \}$ such that $|D^* \cap V_{i^*}| \geq |D^*|/l \geq 2|D^*|/r$. Consider the corresponding subset $V_{i^*}$. Note now that, since $D^* = S^* \cup I^*$, we have the following: either at least $|D^* \cap V_{i^*}|/2$ vertices of $D^* \cap V_{i^*}$ are in $I^*$, or at least $|D^* \cap V_{i^*}|/2$ vertices of $D^* \cap V_{i^*}$ are in $S^*$. 

Suppose first that at least $|D^* \cap V_{i^*}|/2$ vertices of $D^* \cap V_{i^*}$ are in $I^*$. Since we have enumerating all maximal independent sets $I \in \mathcal{I}_{i^*}$ of $G[V_{i^*}]$, and since $I^* \cap V_{i^*}$ is an independent set of $V_{i^*}$, we have found at least one maximal independent set $I_{i^*}$ of $G[V_{i^*}]$ such that $I^* \cap V_{i^*} \subseteq I_{i^*}$. Then, we have extended $I_{i^*}$ to obtain a maximal independent set $I_{i^*}'$ of $G$. Thus, we have the following:
\[ |I_{i^*}'| \geq |I_{i^*}| \geq |I^* \cap V_{i^*}| \geq |D^* \cap V_{i^*}|/2 \geq 2|D^*|/2r = |D^*|/r \]

But since our algorithm outputs the maximum sized solution encountered, we have the desired approximation ratio in this case. 

Suppose now that at least $|D^* \cap V_{i^*}|/2$ vertices of $D^* \cap V_{i^*}$ are in $S^*$. Since we have considered all subsets $S$ of $V_{i^*}$, we have considered the subset $S_i^* = S^* \cap V_{i^*}$. To this set, we have considered all possible sets of private neighbors of vertices of $S_i^*$, and we have extended the set to an upper dominating set ${S_i^*}'$ of $G$ (note that the set $S_i^*$ has been successfully extended since it is the set we have considered when we have proved that at least one set $S'$ constructed at step 4 is a valid upper dominating set of $G$). Thus, we have the following:
\[ |{S_i^*}'| \geq |S_i^*| = |S^* \cap V_{i^*}| \geq |D^* \cap V_{i^*}| / 2 \geq 2|D^*|/2r = |D^*|/r \]

Again, since our algorithm outputs the maximum sized solution encountered, we have the desired approximation ratio in this case also. \qed
\end{proof}

\subsection{Sub-Exponential Inapproximability}

In this section, we give a lower bound on the complexity of any $r$-approximation algorithm, matching our algorithm of the previous section. We get the following result: for any $r < n$ and any $\varepsilon > 0$, there is no algorithm that outputs an $r$-approximation for the \textsc{Upper Dominating Set} problem running in time $n^{(n/r)^{1-\varepsilon}}$. 

To obtain this result, we will first prove the desired lower bound for the \textsc{Maximum Minimal Hitting Set} problem. In this problem, we are given an hypergraph and we want to find a set of vertices which cover all hyper-edges. Moreover, we need that this set is minimal, i.e. every vertex in the solution covers a private hyper-edge, and we want the solution to be of maximum size. 

To obtain this lower bound for the \textsc{Maximum Minimal Hitting Set} problem, we will do a reduction from the \textsc{Maximum Independent Set} problem. Then, we will make a reduction from the \textsc{Maximum Minimal Hitting Set} problem to the \textsc{Upper Dominating Set} problem to transfer this lower bound to our problem. 

Recall that we have the following lower bound by Chalermsook et al. \cite{ChalermsookLN13} for the \textsc{Maximum Independent Set} problem:

\begin{lemma}[Theorem 1.2 from \cite{ChalermsookLN13}]\label{th:chalermsook}
For any $\varepsilon > 0$ and any sufficiently large $r > 1$, if there exists an $r$-approximation algorithm for \textsc{Maximum Independent Set} running in time $2^{(n/r)^{1-\varepsilon}}$, then the randomized ETH is false. 
\end{lemma}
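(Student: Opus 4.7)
The plan is to chain together the standard PCP/ETH framework with a randomized graph-product amplification whose parameters are tuned precisely to give the $2^{(n/r)^{1-\varepsilon}}$ lower bound. I would not prove this from scratch but rather trace through the pipeline, paying close attention only to the quantitative step where the exponent $(n/r)^{1-\varepsilon}$ is produced.

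First, I would invoke the ETH-style starting point: under the randomized ETH, $3$-SAT on $n$ variables cannot be solved in randomized time $2^{o(n)}$. Applying a linear-size PCP (e.g.\ Dinur's construction) followed by the FGLSS reduction produces a graph $G_0$ on $N_0 = \Theta(n)$ vertices together with a constant inapproximability gap $r_0 > 1$ for Maximum Independent Set: distinguishing $\alpha(G_0) \geq s$ from $\alpha(G_0) \leq s/r_0$ is already as hard as solving the original 3-SAT instance.

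Second, I would amplify the gap from the constant $r_0$ up to the desired (possibly very large) $r$ using a (derandomized) graph product in the style of Berman--Schnitger, Feige--Kilian, and the construction of Chalermsook--Laekhanukit--Nanongkai. Taking an appropriate $t$-fold product with $t = \Theta(\log r / \log r_0)$, the resulting graph $G$ has $N = N_0^{O(\log r)}$ vertices, and with high probability the independence-number ratio between yes- and no-instances is at least $r$. Finally, suppose an $r$-approximation runs in time $2^{(N/r)^{1-\varepsilon}}$ on $G$; by choosing the parameters of the product so that $N/r = \Theta(N_0)$, the exponent becomes $(N/r)^{1-\varepsilon} = O(N_0^{1-\varepsilon}) = o(n)$, yielding a $2^{o(n)}$ randomized algorithm for the original $3$-SAT instance and contradicting the randomized ETH.

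The main obstacle is the amplification step. A naive tensor product would blow up the vertex count polynomially in $r$ rather than by $N_0^{O(\log r)}$, and the resulting slack would not suffice to match the $(N/r)^{1-\varepsilon}$ budget. What is needed is a product whose independence number grows essentially as $\alpha(G_0)^t$ in both the yes- and no-cases while the vertex count grows only as $N_0^t$, and this requires a carefully designed sampler (or a randomized product that is subsequently derandomized only partially, with respect to a pseudorandom generator of appropriate seed length). Getting this balance exactly right is the delicate quantitative heart of the Chalermsook--Laekhanukit--Nanongkai argument, and it is also the reason the conclusion is phrased under \emph{randomized} rather than deterministic ETH.
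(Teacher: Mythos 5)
This lemma is not proved in the paper at all: it is imported verbatim as Theorem 1.2 from Chalermsook, Laekhanukit, and Nanongkai, and later the paper merely \emph{recalls} the quantitative shape of their construction (inside the proof of Theorem~\ref{th:hardnesshittingset}). The useful comparison is therefore against that recalled shape, and there your proposal has a genuine quantitative gap.

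Your high-level architecture is the correct one: start from a $3$-SAT instance $\phi$ on $n$ variables, pass through a PCP/FGLSS-type reduction, and amplify the constant MIS gap with a \emph{randomized} (hence the randomized ETH) sparse graph product. But your bookkeeping of the blow-up is wrong, and the error is exactly in the step you flag as ``the delicate quantitative heart.'' You claim the amplified graph has $N = N_0^{O(\log r)}$ vertices and that parameters can be chosen so that $N/r = \Theta(N_0)$. Neither matches the actual construction, and taken together they are internally inconsistent: $N = N_0^{O(\log r)}$ with $N/r = \Theta(N_0)$ forces $r$ to be essentially a fixed constant, which defeats the purpose of an arbitrary gap $r$. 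In fact $N_0^{\Theta(\log r)}$ is the vertex count of the \emph{na\"{\i}ve} $t$-fold tensor product with $t = \Theta(\log r)$ --- precisely the blow-up that must be avoided, so your ``rather than'' clause has the two cases reversed. What Chalermsook et al.\ actually achieve (and what the paper records) is a graph $G$ with $|V(G)| = n^{1+\varepsilon} r^{1+\varepsilon}$, $\alpha(G) \ge n^{1+\varepsilon} r$ in the YES case, and $\alpha(G) \le n^{1+\varepsilon} r^{2\varepsilon}$ in the NO case. Hence $N/r = n^{1+\varepsilon} r^{\varepsilon}$, which is \emph{not} $\Theta(n)$; instead $(N/r)^{1-\varepsilon} = n^{1-\varepsilon^2}\, r^{\varepsilon - \varepsilon^2}$, and it is the small residual exponents on $n$ and $r$ (absorbed by the $1-\varepsilon$ slack in the running-time budget) that push the exponent strictly below $n$. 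Your claimed identity $N/r = \Theta(N_0)$ would, if true, make the argument cleaner, but it is simply not available, and the $(1\pm\varepsilon)$ polynomial losses are precisely what one must track carefully. Also note the gap obtained is $r^{1-2\varepsilon}$, not exactly $r$; that is a minor renaming, but it is worth keeping straight so that the $\varepsilon$'s all line up at the end.
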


We note that making a reduction from the \textsc{Maximum Minimal Hitting Set} problem to derive hardness result for the \textsc{Upper Dominating Set} problem has already be done by Bazgan et al. \cite{BazganBCFJKLLMP18}. Indeed, to get the $n^{1-\varepsilon}$-inapproximability result for the \textsc{Upper Dominating Set} problem, they first derive this bound of the \textsc{Maximum Minimal Hitting Set} problem and then they designed an approximation-preserving reduction between these two problems, enabling them to transfer this hardness result to the \textsc{Upper Dominating Set} problem. 

In fact, to obtain the hardness result for the \textsc{Maximum Minimal Hitting Set} problem, they made a reduction from the \textsc{Maximum Independent Set} problem. Our first reduction is similar to this reduction and will allows us to get the desired hardness result for the \textsc{Maximum Minimal Hitting Set} problem. Our second reduction, from \textsc{Maximum Minimal Hitting Set} to \textsc{Upper Dominating Set} is the approximation-preserving reduction designed by Bazgan et al. \cite{BazganBCFJKLLMP18}. 

Note that our reduction from \textsc{Maximum Independent Set} to \textsc{Maximum Minimal Hitting Set} create a quadratic (in $n$) blow-up of the size of the instance of the latter problem. Such a blow-up does not allow us to derive the desired running-time. To answer this difficulty, we make another step in the reduction where we "sparsify" the instance of \textsc{Maximum Minimal Hitting Set} in order to keep the blow-up under control. To prove that the inapproximability gap stays the same, we  use a probabilistic analysis with Chernoff bounds. 

We will first prove the following hardness result:

\begin{theorem}\label{th:hardnesshittingset}
For any $\varepsilon > 0$ and any sufficiently large $r > 1$, if there exists an $r$-approximation algorithm for \textsc{Maximum Minimal Hitting Set} running in time $n^{(n/r)^{1-\varepsilon}}$, then the randomized ETH is false. 
\end{theorem}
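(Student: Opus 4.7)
The plan is a two-stage reduction from Max IS to MMHS, followed by composition with the Chalermsook et al.\ hardness. In stage one, starting from an $n$-vertex Max IS instance $G$ with gap $r$ (either $\alpha(G)\geq k$ or $\alpha(G)\leq k/r$), I would apply a reduction in the style of Bazgan et al.\ \cite{BazganBCFJKLLMP18} that produces a hypergraph $H_0$ on $N_0 = O(n^2)$ elements such that the maximum minimal hitting set satisfies $\Gamma_{HS}(H_0) = \Theta(\alpha(G))$, preserving the $r$-gap up to constant factors (absorbed by our freedom to take $r$ as large as we like). The per-vertex/per-edge gadgets must be designed so that maximal independent sets of $G$ lift naturally to minimal hitting sets of $H_0$ of proportional size and, conversely, every minimal hitting set of $H_0$ decodes, via the forced gadget structure, to a maximal independent set of $G$ of comparable size.

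The quadratic blow-up is fatal on its own: the target running time $N_0^{(N_0/r)^{1-\varepsilon}}$ only yields a $2^{O((n^2/r)^{1-\varepsilon}\log n)}$-time algorithm for Max IS, which Theorem \ref{th:chalermsook} does not forbid. In stage two I would therefore sparsify $H_0$: sample independently and uniformly a subset of $N = O(n)$ ground elements and restrict every hyperedge to the surviving elements. Chernoff concentration should then show that with high probability every ``canonical'' large minimal hitting set of $H_0$ is preserved, up to a $(1-o(1))$ size factor, in the sparsified hypergraph $H$, and conversely every minimal hitting set of $H$ of size above a soundness threshold lifts back to a minimal hitting set of $H_0$ of roughly the same size. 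Choosing the sampling rate so that the Chernoff tails are polynomially small, together with a union bound over the polynomial family of canonical targets, converts the $r$-gap of $H_0$ into an $(r-o(1))$-gap of $H$, which we re-absorb by taking $r$ slightly larger.

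Composing the two stages, an $r$-approximation for MMHS in time $N^{(N/r)^{1-\varepsilon}}$ on the sparsified instance $H$ of size $N = O(n)$ would yield an $r$-approximation for Max IS in time $2^{O((n/r)^{1-\varepsilon}\log n)} = 2^{(n/r)^{1-\varepsilon'}}$ for some $\varepsilon' < \varepsilon$ and sufficiently large $r$, contradicting Theorem \ref{th:chalermsook} under randomized ETH. The main obstacle is the sparsification analysis: Chernoff on any single set is trivial, but minimal hitting sets are exponentially many, and random deletions can destroy minimality by removing the unique private witness of a vertex. The stage-one reduction must therefore be engineered so that the relevant maximum minimal hitting sets have a short canonical description tied directly to independent sets of $G$ (hence only polynomially many relevant targets for the union bound), and so that each private witness is realized by enough duplicated elements that at least one survives the sparsification with high probability. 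This is precisely where the bulk of the technical work lies.
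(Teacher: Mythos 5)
Your overall plan --- reduce Max~IS to MMHS via gadgets, then sparsify via Chernoff, then compose with Chalermsook et al.\ to contradict randomized ETH --- is the same skeleton as the paper. But the key amplification device is missing, and without it the plan as stated fails. You posit a stage-one reduction with $\Gamma_{HS}(H_0)=\Theta(\alpha(G))$ on $N_0=O(n^2)$ ground elements, i.e., a \emph{linear} relation between the MMHS objective and $\alpha(G)$. Under such a reduction the hitting sets of interest have size $O(n)$ while the ground set has $\Theta(n^2)$ elements; sparsifying down to $N=O(n)$ surviving elements leaves each target hitting set with expected surviving size $O(1)$, so there is no Chernoff concentration and the gap evaporates. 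The paper instead attaches, for each $d$-subset $S\subseteq V(G)$ with $d=1/\sqrt{\varepsilon}$, an independent set $Z_S$ of $t=r^{1/d}$ vertices together with one hyperedge $S\cup\{u\}$ per $u\in Z_S$. A minimal hitting set built from an independent set $I$ of $G$ then has size $\approx\binom{|I|}{d}\cdot t$ --- a $d$-th power of $\alpha(G)$ --- and in the YES case this is an $r^{-\varepsilon}$ fraction of the ground set. After deleting each gadget vertex with probability $1-1/n^d$ (where $n$ is the number of 3-SAT variables), the optimum is still an $r^{-\varepsilon}$ fraction of the new ground set and the Chernoff expectations remain polynomially large in $n$. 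Without this $d$-subset amplification you cannot simultaneously shrink the instance and keep a target large enough for Chernoff to bite.

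Two further details differ in ways that matter. The paper sparsifies \emph{only} the gadget sets $Z_S$ and never touches $V(G)$; thus for any surviving $u\in Z_S\cap F$ the private hyperedge $S\cup\{u\}$ automatically survives (all of $S$ lies in $V(G)$), and this --- not a duplication of private witnesses --- is what preserves minimality after sparsification. And your plan to restrict the union bound to polynomially many canonical targets does not reflect the actual soundness argument: one must take a union bound over all $2^{|V(G)|}$ independent sets $I\subseteq V(G)$, because the $V(G)$-part of a minimal hitting set can be an arbitrary vertex cover, and the paper beats this exponential union by making the per-target Chernoff failure probability $o(2^{-|V(G)|})$, which is possible precisely because the amplification made the expectation $\mu\gg|V(G)|$. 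Finally, the sparsified size is not $O(n)$ but $\Theta(n^{1+\sqrt{\varepsilon}}r^{1+\sqrt{\varepsilon}+\varepsilon})$; the choice $d=1/\sqrt{\varepsilon}$ is exactly tuned so that $(N/r)^{1-O(\varepsilon)}$ is still sub-linear in $n$, which the $N_0=O(n^2)\to N=O(n)$ pipeline does not achieve.
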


\begin{proof}
First, we recall some details about the Lemma \ref{th:chalermsook}. To get this result, Chalermsook et al. \cite{ChalermsookLN13} made a reduction from an instance $\phi$ of \textsc{$3$-SAT} with $n$ variables, and for any $\varepsilon > 0$ and $r$ sufficiently large, they construct a graph $G$ with $|V(G)| = n^{1+\varepsilon}r^{1+\varepsilon}$ vertices which, with high probability, satisfies the following properties:

\begin{itemize}
    \item YES-instance: if $\phi$ is satisfiable, then $\alpha(G) \geq n^{1+\varepsilon}r$
    \item NO-instance: if $\phi$ is not satisfiable, then $\alpha(G) \leq n^{1+\varepsilon}r^{2\varepsilon}$.
\end{itemize}

Recall that $\alpha(G)$ is the size of a maximum independent set in $G$. 

With these properties, any approximation algorithm with ratio $r^{1-2\varepsilon}$ for \textsc{Maximum Independent Set} would distinguish whether $\phi$ is satisfiable or not, and so would solve the \textsc{$3$-SAT} instance. If this algorithm runs in time $2^{(n/r)^{1-\varepsilon}}$, then we obtain a sub-exponential algorithm for \textsc{$3$-SAT}, which contradicts the randomized ETH. 

Suppose that we are given $\varepsilon > 0$ and $r$ sufficiently large. Let $d = \frac{1}{\varepsilon^{1/2}}$ We will also design a reduction from the instance $\phi$ of \textsc{$3$-SAT} to an instance of \textsc{Maximum Minimal Hitting Set} going through an instance of \textsc{Maximum Independent Set} to show that an algorithm for the \textsc{Maximum Minimal Hitting Set} that achieves this ratio $r$ too rapidly would give a sub-exponential algorithm for \textsc{$3$-SAT}. So we start with the reduction of \cite{ChalermsookLN13}, from an instance $\phi$ of \textsc{$3$-SAT} on $n$ variables, and we adjust the parameter $r$ so that we obtain with high probability a graph $G$ with the following properties:

\begin{itemize}
    \item $|V(G)| = n^{1+\varepsilon}r^{1/d + \varepsilon / d}$
    \item YES-instance: if $\phi$ is satisfiable, then $\alpha(G) \geq n^{1+\varepsilon}r^{1/d}$.
    \item NO-instance: if $\phi$ is not satisfiable, then $\alpha(G) \leq n^{1+\varepsilon}r^{2\varepsilon /d}$. 
\end{itemize}

We now construct a graph $G'$ for the \textsc{Maximum Minimal Hitting Set} problem in the following way: we keep the graph $G$ ; for every subset $S \subseteq V(G)$ with $|S| = d$, we construct an independent set $Z_S$ of size $t = r^{1/d}$ ; and for every vertex $u \in Z_S$, we add the hyper-edge $S \cup \{ u \}$. Now, we claim that the graph $G'$ has the following properties:

\begin{itemize}
    \item $|V(G')| = \Theta(n^{d+d\varepsilon}r^{1+1/d + \varepsilon})$
    \item YES-instance: if $\phi$ is satisfiable, then $mmhs(G') = \Omega(n^{d+d\varepsilon}r^{1+1/d})$. 
    \item NO-instance: if $\phi$ is not satisfiable, then $mmhs(G') = O(n^{d+d\varepsilon}r^{1/d + 2\varepsilon})$.
\end{itemize}

Here, $mmhs(G')$ is the maximum size of a minimal hitting set in $G'$. 

Let us prove why the graph $G'$ has these properties. 

For the first property, note that there is $\binom{|V(G)|}{d}$ subsets $S$ of $V(G)$ of size $d$, and that for each of them we have added $t = r^{1/d}$ vertices in the corresponding independent set $Z_S$. So we have the following:

\[ |V(G')| = t \cdot \binom{|V(G)|}{d} + |V(G)| = \Theta(n^{d+d\varepsilon}r^{1+1/d+\varepsilon}) \]

For the second property, suppose that $\phi$ is satisfiable. It follows that $\alpha(G) \geq n^{1+\varepsilon}r^{1/d}$. We construct a minimal hitting set of $G'$ as follows: we take a minimum vertex cover $C$ of $G$ ; and for every subset $S$ of $I = V(G) \setminus C$ of size $d$, we take the $t$ vertices of the corresponding independent set $Z_S$. We observe that this solution is a minimal hitting set of $G'$. Indeed, $C$ is a minimum vertex cover of $G$, so all edges of $G$ are dominated by the solution, and every vertex of $C$ has at least one private edge since $C$ is a minimum vertex cover. Now observe that all the hyper-edges added in the construction of $G'$ which still have to be covered are hyper-edges between some vertices of the independent set $I = V(G) \setminus C$ and the corresponding independent sets $Z_S$, since all hyper-edges connected to the vertices of $C$ are covered. But we took the $t$ vertices of the independent set $Z_S$ of every subset $S \subseteq I$ of size $d$, so it follows that all the remaining hyper-edges are covered by our solution. Moreover, for any subset $S$ of $I$ of size $d$, note that $S$ is an independent set, so every vertex $u$ of $Z_S$ taken has a private hyper-edge, namely the hyper-edge $S \cup \{ u \}$. So our solution is a minimal hitting set. Now, let us determine its size. The number of independent sets $Z_S$ with $S \subseteq I$ of size $d$ is $\binom{\alpha(G)}{d}$. So the size of our solution is at least:
\[ t \cdot \binom{\alpha(G)}{d} = \Omega(n^{d+d\varepsilon}r^{1+1/d}) \]

For the third property, take any minimal hitting set in $G'$ and let $I$ be the corresponding independent set of $G$ ($I = V(G) \setminus C$ where $C$ is a vertex cover in $G$ which belongs to the minimal hitting set). We have that for any subset $S$ of $I$ of size $d$, the minimal hitting set takes at most the $t$ vertices of the independent set $Z_S$. And there is at most $\binom{\alpha(G)}{d}$ such subsets $S$. So the size of any minimal hitting set is bounded by:

\[ t \cdot \binom{\alpha(G)}{d} + |V(G)| = O(n^{d+d\varepsilon}r^{1/d+2\varepsilon}) \]

We have now construct a graph $G'$ of the \textsc{Maximum Minimal Hitting Set} problem where the gap between the values of $mmhs(G')$, corresponding on whether $\phi$ is satisfiable or not, is smaller than $r$ (it is $r^{1-2\varepsilon}$). Nonetheless, we cannot derive the desired hardness result since the order of $G'$ is quadratic on $n$. This blow-up makes it impossible to derive a sub-exponential algorithm for \textsc{$3$-SAT}. So we need to sparsify the gaph $G'$. 

Thus, we construct a graph $G''$ in the following way: we keep the graph $G'$ ; and we delete every vertex of $V(G') \setminus V(G)$ with probability $\frac{n^d-1}{n^d}$. That is, for every vertex $u$ in an independent set $Z_S$, the vertex $u$ stays in $G''$ with probability $\frac{1}{n^d}$. We claim that the graph $G''$ has the following properties:

\begin{itemize}
    \item $|V(G'')| = \Theta(n^{1+d\varepsilon}r^{1+1/d+\varepsilon})$
    \item YES-instance: if $\phi$ is satisfiable, then $mmhs(G'') = \Omega(n^{1+d\varepsilon}r^{1+1/d})$. 
    \item NO-instance: if $\phi$ is not satisfiable, then $mmhs(G'') = O(n^{1+d\varepsilon}r^{1/d +2\varepsilon})$. 
\end{itemize}

To establish these three properties, we will use the following Chernoff bound: suppose $X = \sum_{i=1}^p X_i$ is the sum of $p$ independent random 0/1 variables $X_i$ and that $E[X] = \sum_{i=1}^p E[X_i] = \mu$. We have the following: for all $0 \leq \delta \leq 1$, $Pr[|X-\mu| \geq \delta \mu] \leq 2e^{-\mu \delta^2/3}$. 

For the first property, we begin by defining a random variable $X_i$ for each vertex of each independent set $Z_S$ of $G'$: $X_i = 1$ if the corresponding vertex stays in $G''$ ; and $X_i = 0$ otherwise. Let $X$ be the sum of these $X_i$ variables, which is equal to the number of such vertices staying in $G''$. Suppose now that the number of vertices in the sets $Z_S$ in $G'$ is $cn^{d+d\varepsilon}r^{1+1/d+\varepsilon}$, where $c$ is a constant (it follows from the size of $V(G')$). Then $E[X] = cn^{1+d\varepsilon}r^{1+1/d+\varepsilon}$. We obtain $Pr[|X-E[X]| \geq \frac{E[X]}{2}] \leq 2e^{-E[X]/12} = o(1)$. So we conclude with high probability that $|V(G'')| = \Theta(n
^{1+d\varepsilon}r^{1+1/d+\varepsilon})$. 

For the second property, we consider a minimal hitting set $F$ of $G'$, of size $cn^{d+d\varepsilon}r^{1+1/d}$. We define a variable for each vertex of $F$ in the independent sets $Z_S$. As in the previous paragraph, we have that the expected number of such vertices which stay in $G''$ is $cn^{1+d\varepsilon}r^{1+1/d}$. Again, as in the previous paragraph, the actual number of such vertices will be close to this bound. We just need to prove that almost the same set is a minimal hitting set of $G''$. So we begin with the surviving vertices of $F$, which is an hitting set of $G''$ (since the removal of a vertex of $F$ implies the removal of its incident hyper-edges). Now, we delete vertices from $F$ until we obtain a minimal hitting set of $G''$. We will prove that the number of vertices deleted as redundant is at most $|V(G)| = n^{1+\varepsilon}r^{1/d+\varepsilon/d}$. Consider first an independent set $Z_S$ such that $Z_S \cap F \neq \emptyset$. Since $Z_S \cap F \neq \emptyset$, it follows that the vertices of the set $S$ are not in the solution $F$, because otherwise the vertices of $Z_S \cap F$ would not have a private hyper-edge. But because $S \cap F = \emptyset$, the vertices of $Z_S \cap F$ cannot be considered as redundant, since for every vertex $u \in Z_S \cap F$, it covers the hyper-edge $S \cup \{ u \}$. Thus, no vertex of the independent sets $Z_S$ can be removed as redundant. So the only vertices which can be removed as redundant are the vertices initially in $V(G)$. So at most $|V(G)| = n^{1+\varepsilon}r^{1/d+\varepsilon/d}$ vertices can be removed as redundant. Since $|V(G)| < \frac{c}{10}(n^{1+d\varepsilon}r^{1+1/d})$ (for $n$ and $r$ sufficiently large), it follows that removing these redundant vertices will not change the order of magnitude of the solution in $G''$. 

For the third property, we need to consider every possible minimal hitting set of $G''$ and prove that none of them is too large. So consider any subset $I \subseteq V(G)$ being an independent set of $G$. Our goal is to prove that any minimal hitting set $F$ of $G''$ that satisfies $V(G) \setminus I \subseteq F$ has a probability of being too big smaller than $2^{-|V(G)|}$. Indeed, if we prove this, we can take the union bound over all sets $I$ and conclude that with high probability no minimal hitting set of $G''$ is too big. So suppose now that we have fixed an independent set $I \subseteq V(G)$. We have $|I| \leq \alpha(G) \leq n^{1+\varepsilon}r^{2\varepsilon/d}$. We now make the following observation: any minimal hitting set $F$ which satisfies $V(G) \setminus I \subseteq F$ cannot contain any vertex of a set $Z_S$ if $S \cap F \neq \emptyset$ ; but may contain the $t$ vertices of an independent set $Z_S$ if $S \cap F = \emptyset$. The total number of such vertices in $G'$ is $O(n^{d+d\varepsilon}r^{1/d+2\varepsilon})$, since it is an upper bound on $mmhs(G')$. So, by the same argument as in the previous paragraph, the expected number of such vertices which stay in $G''$ is at most $\mu = cn^{1+d\varepsilon}r^{1/d+2\varepsilon}$, for a constant $c$. By using the Chernoff bound, we have $Pr[|X-\mu|\geq \frac{\mu}{2}] \leq 2e^{-\mu/12}$. We claim that $2e^{-\mu/12} = o(2^{-|V(G)|})$. Indeed, it follows since $|V(G)| = n^{1+\varepsilon}r^{1/d + \varepsilon/d} = o(\mu)$. Thus, the probability that a minimal hitting set being too large exists for a fixed independent set $I \subseteq V(G)$ is low enough so that taking the union bound over all possible independent sets $I$ give a probability that at least one minimal hitting set is too big of value $o(1)$. So we have with high probability that no minimal hitting set of size greater than $3\mu /2$ exists. So we obtain the third property. 

Now that we have proved that $G''$ satisfies these three properties, we will show how to obtain the Theorem. Suppose that, for sufficiently large $r$ and any $\varepsilon > 0$, there exists an approximation algorithm for \textsc{Maximum Minimal Hitting Set} with ratio $r^{1-3\varepsilon}$ running in time $N^{(N/r)^{1-4\varepsilon}}$ for graphs of order $N$. The ratio of this algorithm is sufficiently small to distinguish between the two cases in our graph $G''$, as the ratio between $mmhs(G'')$ when $\phi$ is satisfiable or not is $\Omega(r^{1-2\varepsilon})$ (for $r$ sufficiently large). So we can use this approximation algorithm to solve \textsc{$3$-SAT}. Furthermore, we have the following:
\[ N/r = \Theta((n^{1+d\varepsilon}r^{1+1/d+\varepsilon})/r) = O(n^{1+(d+1)\varepsilon+1/d}) = O(n^{1+\varepsilon + 2\varepsilon^{1/2}}) \]

Therefore, $(N/r)^{1-4\varepsilon} = o(n)$. We obtain an algorithm for \textsc{$3$-SAT} in time $N^{(N/r)^{1-\varepsilon}} = 2^{n^{1-\varepsilon'}}$ for $\varepsilon' < \varepsilon$ chosen appropriately. This contradicts the randomized ETH. So by adjusting $r$ and $\varepsilon$, we get that no $r$-approximation algorithm for \textsc{Maximum Minimal Hitting Set} can run in time $N^{(N/r)^{1-\varepsilon}}$ for graphs of order $N$. Thus we get the statement of the Theorem. \qed
\end{proof}

With this hardness result for \textsc{Maximum Minimal Hitting Set}, and by using the reduction of Bazgan et al. \cite{BazganBCFJKLLMP18}, we get the following hardness result for \textsc{Upper Dominating Set}:

\begin{theorem}\label{th:hardnesssubexponentialuds}
For any $\varepsilon > 0$ and any sufficiently large $r > 1$, if there exists an $r$-approximation for \textsc{Upper Dominating Set} running in time $n^{(n/r)^{1-\varepsilon}}$, then the randomized ETH is false. 
\end{theorem}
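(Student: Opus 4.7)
My plan is to derive the theorem as a transfer of Theorem \ref{th:hardnesshittingset} through an approximation-preserving reduction from \textsc{Maximum Minimal Hitting Set} to \textsc{Upper Dominating Set}. As the excerpt already emphasizes, such a reduction was designed by Bazgan et al.\ \cite{BazganBCFJKLLMP18} precisely to transfer inapproximability from the former to the latter, so the work reduces to invoking it and checking that the reduction behaves well enough to preserve the sub-exponential time lower bound, not merely the approximation gap.

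Concretely, I would proceed as follows. First, fix $\varepsilon > 0$ and $r > 1$ sufficiently large, and start from a hypergraph $H$ produced by Theorem \ref{th:hardnesshittingset}, so that distinguishing between $mmhs(H) = \Omega(N^{1+d\varepsilon}r^{1+1/d})$ and $mmhs(H) = O(N^{1+d\varepsilon}r^{1/d+2\varepsilon})$ (with $N = |V(H)|$) already rules out sub-exponential time under the randomized ETH. Next, apply the Bazgan et al.\ construction to $H$ to obtain a graph $G$ of \textsc{Upper Dominating Set}. Because that reduction is approximation-preserving with a constant factor (it turns a minimal hitting set of size $s$ into an upper dominating set of size $\Theta(s)$ and vice versa, via the natural vertex/hyperedge incidence gadget), the gap in $mmhs$ translates into the same multiplicative gap (up to constants absorbed into $r$ by a further adjustment of the parameters) for the optimum of \textsc{Upper Dominating Set}. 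Hence an $r$-approximation for \textsc{Upper Dominating Set} decides the underlying \textsc{$3$-SAT} instance.

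The key quantitative step, and the one that requires some care, is controlling the blow-up in instance size. The Bazgan et al.\ reduction is linear: the resulting graph has $n = \Theta(N + |E(H)|) = \Theta(N)$ vertices, since in the sparsified hypergraph $G''$ from the proof of Theorem \ref{th:hardnesshittingset} the number of hyperedges is linear in $|V(H)|$ (this is exactly why the sparsification step was introduced). Consequently, an $r$-approximation for \textsc{Upper Dominating Set} running in time $n^{(n/r)^{1-\varepsilon}}$ translates back into an $r'$-approximation for \textsc{Maximum Minimal Hitting Set} in time $N^{(N/r')^{1-\varepsilon'}}$ for a slightly smaller $\varepsilon'$ and $r'$ still large enough to contradict Theorem \ref{th:hardnesshittingset}. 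Combining both contradictions with the randomized ETH yields the statement.

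The main obstacle I anticipate is not the combinatorial correctness of the reduction, which is handled by \cite{BazganBCFJKLLMP18}, but the bookkeeping of parameters: one has to verify that the constant factor in the approximation preservation and the linear blow-up in size can both be absorbed by slightly weakening $r$ and $\varepsilon$ without losing the sub-exponential time bound. This is routine but must be done carefully, exactly as in the final paragraph of the proof of Theorem \ref{th:hardnesshittingset} where $\varepsilon$ is traded off against $\varepsilon'$ to obtain the $2^{n^{1-\varepsilon'}}$ refutation of the randomized ETH.
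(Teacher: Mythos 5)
Your proposal is correct and follows essentially the same route as the paper: invoke the approximation-preserving reduction of Bazgan et al.\ from \textsc{Maximum Minimal Hitting Set} to \textsc{Upper Dominating Set}, observe that after the sparsification in Theorem~\ref{th:hardnesshittingset} the number of hyperedges is linear in the number of vertices so the blow-up is linear, and conclude that a fast $r$-approximation for \textsc{Upper Dominating Set} would yield one for \textsc{Maximum Minimal Hitting Set} contradicting the randomized ETH. The extra care you devote to the $\varepsilon$/$r$ bookkeeping is consistent with, and slightly more explicit than, what the paper writes.
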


\begin{proof}
We start with an instance $G$ of \textsc{Maximum Minimal Hitting Set} obtained from Theorem \ref{th:hardnesshittingset}. From this instance, we construct an instance $G'$ of \textsc{Upper Dominating Set}. By Theorem 12 of \cite{BazganBCFJKLLMP18}, we know that this reduction is approximation-preserving. So the gap from the hardness result of \textsc{Maximum Minimal Hitting Set} stays the same for \textsc{Upper Dominating Set}. Now observe that in Theorem \ref{th:hardnesshittingset}, the number of hyper-edges in $G$ has the same order of magnitude than the number of vertices in $G$. Thus, the number of vertices in $G'$ is linearly dependent on the number of vertices in $G$. So an $r$-approximation algorithm for \textsc{Upper Dominating Set} running in time $n
^{(n/r)^{1-\varepsilon}}$ would give an $r$-approximation algorithm for \textsc{Maximum Minimal Hitting Set} with the same running-time, which would contradicts Theorem \ref{th:hardnesshittingset} and the randomized ETH. So we obtain the desired hardness result for \textsc{Upper Dominating Set}. \qed
\end{proof}

%
%
%
\bibliographystyle{splncs04}
\bibliography{samplepaper}

\begin{thebibliography}{10}
\providecommand{\url}[1]{\texttt{#1}}
\providecommand{\urlprefix}{URL }
\providecommand{\doi}[1]{https://doi.org/#1}

\bibitem{ArkinBMS03}
Arkin, E.M., Bender, M.A., Mitchell, J.S.B., Skiena, S.: The lazy bureaucrat
  scheduling problem. Inf. Comput.  \textbf{184}(1),  129--146 (2003).
  \doi{10.1016/S0890-5401(03)00060-9},
  \url{https://doi.org/10.1016/S0890-5401(03)00060-9}

\bibitem{bazgan2019domination}
Bazgan, C., Brankovic, L., Casel, K., Fernau, H.: Domination chain:
  Characterisation, classical complexity, parameterised complexity and
  approximability. Discrete Applied Mathematics  (2019)

\bibitem{BazganBCFJKLLMP18}
Bazgan, C., Brankovic, L., Casel, K., Fernau, H., Jansen, K., Klein, K.,
  Lampis, M., Liedloff, M., Monnot, J., Paschos, V.T.: The many facets of upper
  domination. Theor. Comput. Sci.  \textbf{717},  2--25 (2018).
  \doi{10.1016/j.tcs.2017.05.042},
  \url{https://doi.org/10.1016/j.tcs.2017.05.042}

\bibitem{BonnetE0P13}
Bonnet, E., Escoffier, B., Kim, E.J., Paschos, V.T.: On subexponential and
  fpt-time inapproximability. In: Parameterized and Exact Computation - 8th
  International Symposium, {IPEC} 2013, Sophia Antipolis, France, September
  4-6, 2013, Revised Selected Papers. pp. 54--65 (2013).
  \doi{10.1007/978-3-319-03898-8\_6},
  \url{https://doi.org/10.1007/978-3-319-03898-8\_6}

\bibitem{BoriaCP13}
Boria, N., Croce, F.D., Paschos, V.T.: On the max min vertex cover problem. In:
  Approximation and Online Algorithms - 11th International Workshop, {WAOA}
  2013, Sophia Antipolis, France, September 5-6, 2013, Revised Selected Papers.
  pp. 37--48 (2013). \doi{10.1007/978-3-319-08001-7\_4},
  \url{https://doi.org/10.1007/978-3-319-08001-7\_4}

\bibitem{BourgeoisCEP13}
Bourgeois, N., Croce, F.D., Escoffier, B., Paschos, V.T.: Fast algorithms for
  min independent dominating set. Discret. Appl. Math.  \textbf{161}(4-5),
  558--572 (2013). \doi{10.1016/j.dam.2012.01.003},
  \url{https://doi.org/10.1016/j.dam.2012.01.003}

\bibitem{ChalermsookLN13}
Chalermsook, P., Laekhanukit, B., Nanongkai, D.: Independent set, induced
  matching, and pricing: Connections and tight (subexponential time)
  approximation hardnesses. In: 54th Annual {IEEE} Symposium on Foundations of
  Computer Science, {FOCS} 2013, 26-29 October, 2013, Berkeley, CA, {USA}. pp.
  370--379 (2013). \doi{10.1109/FOCS.2013.47},
  \url{https://doi.org/10.1109/FOCS.2013.47}

\bibitem{ChenHKX06}
Chen, J., Huang, X., Kanj, I.A., Xia, G.: Strong computational lower bounds via
  parameterized complexity. J. Comput. Syst. Sci.  \textbf{72}(8),  1346--1367
  (2006). \doi{10.1016/j.jcss.2006.04.007},
  \url{https://doi.org/10.1016/j.jcss.2006.04.007}

\bibitem{ChestonFHJ90}
Cheston, G.A., Fricke, G., Hedetniemi, S.T., Jacobs, D.P.: On the computational
  complexity of upper fractional domination. Discret. Appl. Math.
  \textbf{27}(3),  195--207 (1990). \doi{10.1016/0166-218X(90)90065-K},
  \url{https://doi.org/10.1016/0166-218X(90)90065-K}

\bibitem{CyganFKLMPPS15}
Cygan, M., Fomin, F.V., Kowalik, L., Lokshtanov, D., Marx, D., Pilipczuk, M.,
  Pilipczuk, M., Saurabh, S.: Parameterized Algorithms. Springer (2015).
  \doi{10.1007/978-3-319-21275-3},
  \url{https://doi.org/10.1007/978-3-319-21275-3}

\bibitem{DubloisHKLM20}
Dublois, L., Hanaka, T., Ghadikolaei, M.K., Lampis, M., Melissinos, N.:
  (in)approximability of maximum minimal {FVS}. CoRR  \textbf{abs/2009.09971}
  (2020), \url{https://arxiv.org/abs/2009.09971}

\bibitem{EtoHK019}
Eto, H., Hanaka, T., Kobayashi, Y., Kobayashi, Y.: Parameterized algorithms for
  maximum cut with connectivity constraints. In: 14th International Symposium
  on Parameterized and Exact Computation, {IPEC} 2019, September 11-13, 2019,
  Munich, Germany. pp. 13:1--13:15 (2019). \doi{10.4230/LIPIcs.IPEC.2019.13},
  \url{https://doi.org/10.4230/LIPIcs.IPEC.2019.13}

\bibitem{FuriniLS17}
Furini, F., Ljubic, I., Sinnl, M.: An effective dynamic programming algorithm
  for the minimum-cost maximal knapsack packing problem. Eur. J. Oper. Res.
  \textbf{262}(2),  438--448 (2017). \doi{10.1016/j.ejor.2017.03.061},
  \url{https://doi.org/10.1016/j.ejor.2017.03.061}

\bibitem{GourvesMP13}
Gourv{\`{e}}s, L., Monnot, J., Pagourtzis, A.: The lazy bureaucrat problem with
  common arrivals and deadlines: Approximation and mechanism design. In:
  Fundamentals of Computation Theory - 19th International Symposium, {FCT}
  2013, Liverpool, UK, August 19-21, 2013. Proceedings. pp. 171--182 (2013).
  \doi{10.1007/978-3-642-40164-0\_18},
  \url{https://doi.org/10.1007/978-3-642-40164-0\_18}

\bibitem{Halldorsson93a}
Halld{\'{o}}rsson, M.M.: Approximating the minimum maximal independence number.
  Inf. Process. Lett.  \textbf{46}(4),  169--172 (1993).
  \doi{10.1016/0020-0190(93)90022-2},
  \url{https://doi.org/10.1016/0020-0190(93)90022-2}

\bibitem{HanakaBZO19}
Hanaka, T., Bodlaender, H.L., van~der Zanden, T.C., Ono, H.: On the maximum
  weight minimal separator. Theor. Comput. Sci.  \textbf{796},  294--308
  (2019). \doi{10.1016/j.tcs.2019.09.025},
  \url{https://doi.org/10.1016/j.tcs.2019.09.025}

\bibitem{HanakaKLOS18}
Hanaka, T., Katsikarelis, I., Lampis, M., Otachi, Y., Sikora, F.: Parameterized
  orientable deletion. In: Eppstein, D. (ed.) 16th Scandinavian Symposium and
  Workshops on Algorithm Theory, {SWAT} 2018, June 18-20, 2018, Malm{\"{o}},
  Sweden. LIPIcs, vol.~101, pp. 24:1--24:13. Schloss Dagstuhl - Leibniz-Zentrum
  f{\"{u}}r Informatik (2018). \doi{10.4230/LIPIcs.SWAT.2018.24},
  \url{https://doi.org/10.4230/LIPIcs.SWAT.2018.24}

\bibitem{Haynes0093827}
Haynes, T.W., Hedetniemi, S.T., Slater, P.J.: Fundamentals of domination in
  graphs, Pure and applied mathematics, vol.~208. Dekker (1998)

\bibitem{HurinkN08}
Hurink, J.L., Nieberg, T.: Approximating minimum independent dominating sets in
  wireless networks. Inf. Process. Lett.  \textbf{109}(2),  155--160 (2008).
  \doi{10.1016/j.ipl.2008.09.021},
  \url{https://doi.org/10.1016/j.ipl.2008.09.021}

\bibitem{JaffkeJ17}
Jaffke, L., Jansen, B.M.P.: Fine-grained parameterized complexity analysis of
  graph coloring problems. In: Fotakis, D., Pagourtzis, A., Paschos, V.T.
  (eds.) Algorithms and Complexity - 10th International Conference, {CIAC}
  2017, Athens, Greece, May 24-26, 2017, Proceedings. Lecture Notes in Computer
  Science, vol. 10236, pp. 345--356 (2017).
  \doi{10.1007/978-3-319-57586-5\_29},
  \url{https://doi.org/10.1007/978-3-319-57586-5\_29}

\bibitem{KatsikarelisLP17}
Katsikarelis, I., Lampis, M., Paschos, V.T.: Structural parameters, tight
  bounds, and approximation for (k, r)-center. In: Okamoto, Y., Tokuyama, T.
  (eds.) 28th International Symposium on Algorithms and Computation, {ISAAC}
  2017, December 9-12, 2017, Phuket, Thailand. LIPIcs, vol.~92, pp.
  50:1--50:13. Schloss Dagstuhl - Leibniz-Zentrum f{\"{u}}r Informatik (2017).
  \doi{10.4230/LIPIcs.ISAAC.2017.50},
  \url{https://doi.org/10.4230/LIPIcs.ISAAC.2017.50}

\bibitem{KatsikarelisLP18}
Katsikarelis, I., Lampis, M., Paschos, V.T.: Structurally parameterized
  d-scattered set. In: Brandst{\"{a}}dt, A., K{\"{o}}hler, E., Meer, K. (eds.)
  Graph-Theoretic Concepts in Computer Science - 44th International Workshop,
  {WG} 2018, Cottbus, Germany, June 27-29, 2018, Proceedings. Lecture Notes in
  Computer Science, vol. 11159, pp. 292--305. Springer (2018).
  \doi{10.1007/978-3-030-00256-5\_24},
  \url{https://doi.org/10.1007/978-3-030-00256-5\_24}

\bibitem{Lampis18}
Lampis, M.: Finer tight bounds for coloring on clique-width. In:
  Chatzigiannakis, I., Kaklamanis, C., Marx, D., Sannella, D. (eds.) 45th
  International Colloquium on Automata, Languages, and Programming, {ICALP}
  2018, July 9-13, 2018, Prague, Czech Republic. LIPIcs, vol.~107, pp.
  86:1--86:14. Schloss Dagstuhl - Leibniz-Zentrum f{\"{u}}r Informatik (2018).
  \doi{10.4230/LIPIcs.ICALP.2018.86},
  \url{https://doi.org/10.4230/LIPIcs.ICALP.2018.86}

\bibitem{LokshtanovMS18}
Lokshtanov, D., Marx, D., Saurabh, S.: Known algorithms on graphs of bounded
  treewidth are probably optimal. {ACM} Trans. Algorithms  \textbf{14}(2),
  13:1--13:30 (2018). \doi{10.1145/3170442},
  \url{https://doi.org/10.1145/3170442}

\bibitem{moon1965cliques}
Moon, J.W., Moser, L.: On cliques in graphs. Israel journal of Mathematics
  \textbf{3}(1),  23--28 (1965)

\bibitem{Zehavi15}
Zehavi, M.: Maximum minimal vertex cover parameterized by vertex cover. In:
  Mathematical Foundations of Computer Science 2015 - 40th International
  Symposium, {MFCS} 2015, Milan, Italy, August 24-28, 2015, Proceedings, Part
  {II}. pp. 589--600 (2015). \doi{10.1007/978-3-662-48054-0\_49},
  \url{https://doi.org/10.1007/978-3-662-48054-0\_49}

\end{thebibliography}
%

%
%

\end{document}